\DeclareMathOperator*{\argmin}{arg\,min} %argmin
\newcommand{\sumin}{\sum_{i=1}^n} % sum from i = 1 to n
\newcommand{\CONV}[1]{\stackrel{\text{#1}}{\longrightarrow}} % convergence mode
\newcommand{\bnw}{\widehat{\bm{\beta}}_n^w} % weighted lasso estimator
\newcommand{\bnwa}{\widehat{\bm{\beta}}_{n(1)}^w} 
\newcommand{\bSC}{\widehat{\bm{\beta}}_n^{\text{SC}}} % strongly consistent estimator
\newcommand{\bLS}{\widehat{\bm{\beta}}_n^{\text{OLS}}} % LSE 
\newcommand{\bMLE}{\widehat{\bm{\beta}}_n^{\text{MLE}}} % MLE
\newcommand{\bLAS}{\widehat{\bm{\beta}}_n^{\text{LAS}}} % LASSO
\newcommand{\eLS}{\bm{e}_n^{\text{OLS}}} % LSE residual vector
\newcommand{\eLAS}{\bm{e}_n^{\text{LAS}}} % Lasso residual vector
\newcommand{\cnwa}{C_{n(11)}^w}
\newcommand{\cnwb}{C_{n(12)}^w}
\newcommand{\cnwc}{C_{n(21)}^w}
\newcommand{\cnwd}{C_{n(22)}^w}
\newcommand{\znwa}{\bm{Z}_{n(1)}^w}
\newcommand{\znwb}{\bm{Z}_{n(2)}^w}
\newcommand{\znwc}{\bm{Z}_{n(3)}^w}
\newcommand{\huna}{\widehat{\bm{u}}_{n(1)}}
\newcommand{\hunb}{\widehat{\bm{u}}_{n(2)}}
\newcommand{\cnwas}{C_{n(11*)}^w}
\newcommand{\cnwbs}{C_{n(12*)}^w}
\newcommand{\cnwcs}{C_{n(21*)}^w}
\newcommand{\cnwds}{C_{n(22*)}^w}
\newcommand{\znwas}{\bm{Z}_{n(1*)}^w}
\newcommand{\znwbs}{\bm{Z}_{n(2*)}^w}
\newcommand{\hunas}{\widehat{\bm{u}}_{n(1*)}}
\newcommand{\hunbs}{\widehat{\bm{u}}_{n(2*)}}
\numberwithin{equation}{section}
\theoremstyle{plain}
\newtheorem{thm}{Theorem}[section]
\newtheorem{lem}{Lemma}[section] 
\newtheorem{proposition}{Proposition}[section]
\newtheorem{remark}{Remark}[section]
\begin{document}
	
	\begin{frontmatter}
		
		\title{Random-weighting in LASSO regression}
		\runtitle{Random-weighting}
		
		\begin{aug}
			\author{\fnms{Tun Lee} \snm{Ng}\ead[label=e1]{tunlee@stat.wisc.edu}}
			\and
			\author{\fnms{Michael A.} \snm{Newton}\ead[label=e2]{newton@stat.wisc.edu}}
			
			\address{Department of Statistics \\
				1300 University Ave, Madison WI 53706 \\
				\printead{e1,e2}}
			
			\runauthor{T.L.Ng and M.A.Newton}
			
		\end{aug}
		
		\begin{abstract}
			We  establish statistical properties of random-weighting methods in LASSO regression under different regularization parameters $\lambda_n$ and suitable regularity conditions. The random-weighting methods in view concern  repeated optimization of a randomized objective function,
			motivated by the need for computationally efficient uncertainty quantification in contemporary estimation settings.  In the context of LASSO regression, we repeatedly assign  analyst-drawn random weights to terms in the objective function, and optimize to obtain a sample of random-weighting estimators. We show that existing approaches  have conditional model selection consistency and conditional asymptotic normality at different growth rates of $\lambda_n$ as $n \to \infty$. We  propose an extension to the available random-weighting methods and establish that the resulting samples attain conditional sparse normality and conditional consistency in a growing-dimension setting.  We illustrate the proposed methodology using synthetic
			and benchmark data sets, and we discuss the relationship of the results to approximate nonparametric Bayesian analysis and to perturbation bootstrap methods.
		\end{abstract}
		
		\begin{keyword}[class=MSC]
			\kwd{62F12}
			\kwd{62F40}
			\kwd{62F15}
		\end{keyword}
		
		\begin{keyword}
			\kwd{random weights}
			\kwd{weighted likelihood bootstrap}
			\kwd{weighted Bayesian bootstrap}
			\kwd{LASSO}
			\kwd{bootstrap}
			\kwd{perturbation bootstrap}
			\kwd{consistency}
			\kwd{model selection consistency}
		\end{keyword}
		
		\tableofcontents
		
	\end{frontmatter}

\section{Introduction}

Consider the well-studied linear regression model with fixed design 
\begin{align} 
\label{eq.LinearModel}
\bm{Y} = \beta_{\mu} \bm{1}_n + X \bm{\beta} + \bm{\epsilon},
\end{align}
where $\bm{Y} = (y_1, \ldots, y_n)' \in \mathbb{R}^n$ is the response vector, $\bm{1}_n$ is a $n \times 1$ vector of ones, $X \in \mathbb{R}^{n \times p_n}$ is the design matrix, $\bm{\beta}$ is the vector of regression coefficients, and $\bm{\epsilon} = (\epsilon_1, \ldots, \epsilon_n)'$ is the vector of independent and identically distributed (i.i.d.) random errors with mean 0 and variance $\sigma^2_{\epsilon}$. Without loss of generality, we assume that the columns of $X$ are centered, and take $\widehat{\beta}_{\mu} = \bar{Y}$, in which case we can replace $\bm{Y}$ in (\ref{eq.LinearModel}) with $\bm{Y} - \bar{Y} \bm{1}_n$, and concentrate on inference for $\bm{\beta}$. Again, without loss of generality, we also assume $\bar{Y} =0$. Let $\bm{\beta_0} \in \mathbb{R}^{p_n}$ be the true model coefficients with $q$ non-zero components, where $q \leq \min(p_n, n)$. Note that $\bm{Y}, X$ and $\bm{\epsilon}$ are all indexed by sample size $n$, but we omit the subscript whenever this does not cause confusion. 

Recall, the LASSO estimator is given by 
\begin{align} \label{eq.LassoObj}
\bLAS
:= \argmin_{\bm{\beta}} 
\sumin ( y_i - \bm{x}_i' \bm{\beta} )^2 
+ \lambda_n \sum_{j=1}^{p_n} |\beta_j|,
\end{align}   
for a scalar penalty $\lambda_n$ \citep{Lasso}, where $\bm{x}_i'$ is the $i^{th}$ row of $X$. From a Bayesian perspective, this objective function corresponds to the negative log posterior density from a Gaussian likelihood and a double Exponential (Laplace) prior, which may be represented with a scale mixture of normals \citep{ScaleMixtureNormal}, and so the solution to (\ref{eq.LassoObj}) is also 
the maximum a posteriori (MAP) estimator  in a certain Bayesian model. Full
posterior analysis in this model is possible using the Gibbs sampler \citep{BayesianLasso}, though, in regression and related models, persistent questions
of Monte Carlo convergence may complicate the interpretation of Gibbs sampler output, especially in high dimensions \citep[e.g.,][]{WellingTeh,rajaratnam2015mcmc,robert2018accelerating,qin2019convergence}.

The penalized regression model is a canonical example
in the broad class of penalized inference procedures, and \citet{WBB} considered the random-weighting approach on a class of penalized likelihood objective functions to obtain approximate posterior samples. They saw good performance 
in high-dimensional regression, trend-filtering and deep learning applications. In particular, their random-weighting version of~(\ref{eq.LassoObj}) is
\begin{align} \label{eq.Bnw.setup}
\bnw := \argmin_{\bm{\beta}}
\left\{
\sumin W_i ( y_i - \bm{x}_i' \bm{\beta} )^2 
+ \lambda_n \sum_{j=1}^{p_n} W_{0,j} |\beta_j|
\right\}, 
\end{align}      
where the analyst first chooses a distribution $F_W$ with $P(W>0) = 1$ and $\mathbb{E}(W^4) < \infty$, and constructs $W_i \stackrel{iid}{\sim} F_W$ for all $i = 1, 2, \cdots, n$. 
The precise treatment of penalty-associated weights $\bm{W}_0=(W_{0,1}, \cdots, W_{0,p_n})$ induces several random-weighting variations, the simplest of which has 
\begin{align} \label{eq.NoWeightPenalty}
W_{0,j} = 1 \,\, \forall \,\, j,
\end{align}
or the penalty terms all share a common random weight
\begin{align} \label{eq.Bnw.WeightPenalty}
W_{0,j} = W_0 \,\, \forall \,\, j,
\text{ where } (W_0, W_i )\stackrel{iid}{\sim} F_W
\,\, \forall \,\, i,
\end{align}
and the most elaborate of which has all entries
\begin{align} \label{eq.Bnw.WeightPenalty2}
(W_{0,j}, W_i) \stackrel{iid}{\sim} F_W \,\, \forall \,\, i,j.
\end{align}

These random-weighting algorithms (as laid out in Algorithm \ref{alg:ALG_general}) produce independent samples and are trivially parallelizable over $b= 1, \ldots, B$. \citet{WBB} compared them to MCMC-based computations via  the Bayesian LASSO \citep{BayesianLasso}, and demonstrated  good numerical properties in terms of estimation error, prediction error, credible set construction, and  agreement with the Bayesian LASSO posterior. 

In the present work we investigate  asymptotic properties of~(\ref{eq.Bnw.setup}), with attention on properties of
the conditional distribution given data.  By allowing different rates of growth of the regularization parameter $\lambda_n$, and under suitable regularity conditions, we prove that the random-weighting method has the following properties:
\begin{itemize}
	\item conditional model selection consistency (for both growing $p_n$ and fixed $p$)
	\item conditional consistency (for fixed $p_n = p$)
	\item conditional asymptotic normality (for fixed $p_n = p$)
\end{itemize}
for all three weighting schemes (\ref{eq.NoWeightPenalty}), (\ref{eq.Bnw.WeightPenalty}) and (\ref{eq.Bnw.WeightPenalty2}). We find there is no common $\lambda_n$ that would allow random-weighting samples to have conditional sparse normality (i.e., simultaneously to enjoy conditional model selection consistency and to achieve conditional asymptotic normality on the true support of $\bm{\beta}$) even under fixed $p_n = p$ setting. Consequently, we propose an extension to the random-weighting framework (\ref{eq.Bnw.setup}) by adopting a two-step procedure in the optimization step as laid out in Algorithm \ref{alg:ALG_general_2step}. We  prove that a common
regularization rate $\lambda_n$ allows random-weighting samples to achieve conditional sparse normality and conditional consistency properties under growing $p_n$ setting.   

To begin, we present a brief literature review to elucidate how  random-weighting algorithms arise from two different statistical motivations, and how our work complements the existing literature. 

%\subsection{Weighted bootstrap from Bayesian perspective}
\subsection{Random weighting from a Bayesian perspective}

The present paper 
%(and those of \citet{WBB}) 
began with a Bayesian perspective in mind. In fact, the random-weighting approach belongs to a class of weighted bootstrap algorithms which arose from the search for scalable, accurate posterior inference tools. An early example in this class is the weighted likelihood bootstrap (WLB), which was designed to yield approximate posterior samples in parametric models \citep{Newton&Raftery}. Compared to Markov Chain Monte Carlo (MCMC), for example, WLB provides computationally efficient approximate posterior samples in cases where likelihood optimization is relatively easy. Framing WLB in contemporary context, \citet{WBB} introduced the Weighted Bayesian Bootstrap (WBB) by extending the posterior approximation scheme to penalized likelihood objective functions which found useful applications in several aforementioned settings. 

Others have also recognized the utility of weighted bootstrap computations 
%for i.i.d. sampling 
beyond the realm of parametric posterior approximation. A critical perspective was provided by \citet{Bissiri2016JRSSB} with the concept of generalized Bayesian inference. Rather than constructing a fully specified probabilistic model for data, they used loss functions to connect information in the data to functionals of interest. \citet{Lyddon2019Biometrika} discovered a key connection between the generalized Bayesian posterior and WLB sampling, and constructed a
modified random-weighting method called the loss-likelihood bootstrap to leverage this connection. 

Further links to nonparametric Bayesian inference were recently reported in \citet{Lyddon2018NIPS} and \citet{Fong2019ICML}, who introduced Bayesian nonparametric learning (NPL). Their perspective concerns the parameter, denoted $\theta$ following conventional presentations, as residing in some parameter space $\Theta$, usually a nice subset of $p$-dimensional Euclidean space. Instead of adopting the typical model-based approach, which would treat $\theta$ as an index to probability distributions in the specified model, their focus was more nonparametric. Whether or not the model specification is valid, they identified the distribution within the parametric model that is closest to the generative distribution $F$ as a solution to an optimization problem
\begin{eqnarray}
\label{eq:basic}
\theta := \theta(F) := 
\argmin_{t \in \Theta} 
\int l(t, y) dF(y).
\end{eqnarray}
Here $y$ denotes a data point, which is distributed $F$, and $l(,)$ is a loss function specified by the analyst.  Denoting $p(y|\theta)$ as the density function in a working probability model, a natural loss function is $l(\theta,y) = -\log p(y|\theta)$. From the nonparametric perspective,  $\theta$ becomes a model-guided feature of $F$.

If we  place a Dirichlet prior on $F$ and have a random sample $(y_1, y_2, \cdots, y_n)$ of data points, then the posterior for $F$ is also Dirichlet process \citep[e.g.,][]{ferguson1973}. Operationally, posterior sampling of $\theta=\theta(F)$ is achieved by  sampling $F$ from its Dirichlet posterior and recomputing $\theta=\theta(F)$ each time -- i.e. by repeating the optimization in (\ref{eq:basic}). Using the stick-breaking construction  \citep[e.g.,][]{Sethuraman1994,Ishwaran2002}, \citet{Fong2019ICML} show that this sampling
is achieved approximately, with error vanishing
 as for $n \to \infty$, by repeatedly optimizing
\begin{eqnarray} \label{eq:BayesianNPL}
\argmin_{t \in \Theta}
\left\{
\sumin
W_i l(t, y_i) 
\right\}  
\end{eqnarray}
for random weights $(W_1, \cdots, W_n) \stackrel{iid}{\sim}  Exp(1)$.
Their Bayesian NPL approach could be extended to include regularization
\begin{eqnarray} \label{eq:LossNPL}
\argmin_{t \in \Theta}
\left\{
\sumin
W_i l(t, y_i)
+ \gamma g(t) 
\right\}  
\end{eqnarray}  
for some regularization parameter $\gamma > 0$ and penalty function $g(\cdot)$, and thus the proposed
LASSO random-weighting~(\ref{eq.Bnw.setup}) has a Bayesian-NPL interpretation by taking $l(t,y) = \Vert y-t \Vert^2$ and $g(t) = \Vert t \Vert_1$.   

Whether we aim for approximate parametric Bayes, generalized Bayes, or model-guided nonparametric Bayes, it is important to understand the distributional properties of these random-weighting procedures.
Precise answers are difficult, even with simple loss functions \citep[e.g.,][]{hjort2005exact}, and so asymptotic methods are helpful to study the conditional distribution of $\theta(F)$ given data. Adopting a Dirichlet prior on $F$, \citet{Fong2019ICML} pointed out that WBB sampling is consistent  under suitable regularity conditions, due to posterior consistency property of the Dirichlet process (e.g., \cite{ghosal1999}, \cite{ghosal2000}). \citet{Newton&Raftery}'s first-order analysis  yields the same Gaussian limits as the standard Bernstein-von-Mises results \citep[e.g.,][]{vanderVaartbook} under a correctly-specified Bayesian parametric model. Under model misspecification setting, \citet{Lyddon2019Biometrika} showed that the Gaussian limits of random weighting do not coincide with their Bayesian counterparts in \citet{kleijn2012}. Instead, they  mimic the Gaussian limits in \citet{huber1967} -- the asymptotic covariance matrix  is the well-known sandwich covariance matrix from robust-statistics literature.

With the work reported here, we aim to extend asymptotic analysis for random-weighting methods to high-dimensional linear regression models. Our work adapts frequentist-theory asymptotic arguments, notably the works of \citet{Knight&Fu} and \citet{BinYu}, to the present context.     
 
\subsection{Connection to perturbation bootstrap}    

Whilst the random-weighting approach 
%is largely motivated 
may be motivated
from a Bayesian perspective, its resemblance to  existing bootstrap algorithms, especially the perturbation bootstrap, warrants a comparison between random-weighting and the non-Bayesian bootstrap literature. 
%Compared to random-weighting, bootstrap techniques arose from a different statistical motivation. 
The (naive) perturbation bootstrap was introduced by \citet{Jin2001} as a method to estimate sampling distributions of estimators related to $U$-process-structured objective functions. \citet{ChatterjeeBose2005}  established first-order distributional consistency of a generalized perturbation bootstrap technique 
%which includes Efron's classical bootstrap and (naive) perturbation bootstrap, 
in  M-estimation where they allowed both $n \to \infty$ and $p_n \to \infty$. That paper also pointed out that for broader classes of models, the generalized bootstrap method is not second-order accurate without appropriate bias-correction and studentization. In particular, the work in (naive) perturbation bootstrap resembles the Bayesian NPL objective function (\ref{eq:BayesianNPL}). Subsequently, \citet{Minnier2011} proved the first-order distributional consistency of the perturbation bootstrap for \citet{Zou2006}'s Adaptive LASSO (ALasso) and \citet{SCAD}'s smoothly clipped absolute deviation (SCAD) under fixed-$p$ setting in order to construct accurate confidence regions for ALasso and SCAD estimators. Again, their work has the flavor of Bayesian Loss-NPL (\ref{eq:LossNPL}) where the loss function is either ALasso or SCAD. More recently, \citet{DasGreg2019} extended the work of \citet{Minnier2011} by introducing a suitably Studentized version of modified perturbation bootstrap ALasso estimator that achieves second-order correctness in distributional consistency even when $p_n \to \infty$. 

Various bootstrap techniques have been considered to construct confidence regions for standard LASSO estimators in (\ref{eq.LassoObj}) under different model settings, including fixed or random design, as well as homoscedastic or heteroscedastic errors $\bm{\epsilon}$. \citet{Knight&Fu} first considered the residual bootstrap under fixed design and homoscedastic error. \citet{Chatterjee2010} presented a rigorous proof for the heuristic discussion of \citet{Knight&Fu}'s Section 4 to show that the LASSO residual bootstrap samples fail to be distributionally consistent unless $\bm{\beta}_0$ is not sparse, for which \citet{Knight&Fu} invoked the Skorokhod's argument. Subsequently, \citet{Chatterjee2011Jasa} rectified the shortcoming by proposing a modified residual bootstrap method by thresholding the Lasso estimator. Meanwhile, \citet{Camponovo2015} proposed a modified paired-bootstrap technique and established its distributional consistency to approximate the distribution of Lasso estimators in linear models with random design and heteroscedastic errors. Recently, \citet{Das2019} considered the perturbation bootstrap method for Lasso estimators under both fixed and random designs with heteroscedastic errors. Since centering on the thresholded Lasso estimator \citep[c.f.][]{Chatterjee2011Jasa} resulted in distributional inconsistency of the naive perturbation bootstrap, \citet{Das2019} proceeded with a suitably Studentized version of modified perturbation bootstrap (c.f. \citet{DasGreg2019}) to rectify the shortcoming.                  

Interestingly, the setup of naive perturbation bootstrap in \citet{Das2019} mimics the proposed random-weighting approach~(\ref{eq.Bnw.setup}) in LASSO regression with weighting scheme (\ref{eq.NoWeightPenalty}), but there remain some differences in our approach. 
%We are interested in investigating the statistical properties of the random-weighting method in LASSO regression as introduced by \citet{WBB}, whereas \citet{Das2019} focused on constructing a valid bootstrap approach for LASSO estimators. \textcolor{blue}{[]Tun, I'm not sure the specific issue with this sentence.]}  
\citet{Das2019} also considered heteroscedastic error term $\bm{\epsilon}$, which we do not consider in this paper. Meanwhile, the weighting schemes considered in this paper are slightly more flexible, since we also consider the cases where independent random weights are also assigned on the LASSO penalty term in weighting schemes (\ref{eq.Bnw.WeightPenalty}) and (\ref{eq.Bnw.WeightPenalty2}). The random weights in \citet{Das2019}'s perturbation bootstrap are restricted to independent draws from distribution with $\sigma^2_W = \mu_W^2$, whereas we consider any positive random weights with finite fourth moment. Furthermore, our extended random-weighting framework in Section \ref{sec:MainResults2} attains conditional sparse normality property under growing $p_n$ setting, whereas \citet{Das2019}'s (modified) perturbation bootstrap method achieves distributional consistency under fixed dimensional $(p_n = p)$ setting.       
      
%To the best of our knowledge, all the cited bootstrapping techniques for vanilla Lasso estimator in (\ref{eq.LassoObj}) were restricted to fixed-$p$ setting. Whilst our original motivation stems from a Bayesian perspective, our work also contributes to the bootstrap literature by extending the theory on conditional model selection consistency of the vanilla Lasso estimator in (\ref{eq.LassoObj}) under growing $p_n$ setting with fixed design and homoscedastic errors $\bm{\epsilon}$. In addition, we also present a simple first-order asymptotic result (conditional consistency and conditional asymptotic normality) under fixed-$p$ setting, to illustrate the fact that even with the very simple setup of random-weighting (\ref{eq.Bnw.setup}), this technique could achieve different asymptotic properties depending on our calibration of regularization parameter $\lambda_n$. This contrasts with the more advanced Lasso-related bootstrap techniques that enjoy more accurate distributional consistency property in growing $p_n$ settings, at the expense of more elaborate steps -- ALasso instead of vanilla Lasso, and suitably-Studentized modified perturbation bootstrap for ALasso instead of the naive perturbation bootstrap \citep{DasGreg2019}. 

We now outline the remaining sections of the paper. In Section \ref{sec:problem_setup}, we set the regularity assumptions, probability space and necessary notations used throughout, and then we report our main results in Section \ref{sec:Main}. Subsequently, in Section \ref{sec:discuss}, we argue that the random-weighting approach has meaningful approximate Bayesian inference and sampling theory interpretations. We then present extensive simulation studies in Section \ref{sec:numerical} to illustrate how the three random-weighting schemes (\ref{eq.NoWeightPenalty}), (\ref{eq.Bnw.WeightPenalty}) and (\ref{eq.Bnw.WeightPenalty2}) compare with other existing methods. Application to a housing-prices data set is also given. Finally, Appendix \ref{sec_Appendix} provides extensive details about the proofs for all lemmas, theorems and propositions.    

\section{Problem Setup} \label{sec:problem_setup}

We assume throughout that the unknown number  of truly relevant predictors, $q$, is fixed, that  
\begin{align} \label{assume_4thmoment}
\mathbb{E} (\epsilon_i^4) < \infty \,\, \forall \,\, i,
\end{align}
\noindent and all $p_n$ predictors are bounded, i.e. $\exists$ $M_1 > 0$ such that 
\begin{align} \label{assume_boundedX}
|x_{ij}| \leq M_1 \quad \forall \quad i = 1, \ldots, n\,\, ; \,\, j = 1, \ldots, p_n,  
\end{align}
\noindent where $x_{ij}$ refers to the $(i,j)^{th}$ element of $X$. 

Without loss of generality, we  partition $\bm{\beta}_0$ into 
\[
\bm{\beta}_0 = 
\begin{bmatrix}
\bm{\beta}_{0(1)} \\
\bm{\beta}_{0(2)}
\end{bmatrix}
,
\]
where $\bm{\beta}_{0(1)}$ refers to the $q \times 1$ vector of non-zero true regression parameters, and $\bm{\beta}_{0(2)}$ is a $(p_n-q) \times 1$ zero vector. Similarly, we  partition the columns of the design matrix $X$ into  
\[
X = 
\begin{bmatrix}
X_{(1)} & X_{(2)}
\end{bmatrix}
\]
which corresponds to $\bm{\beta}_{0(1)}$ and $\bm{\beta}_{0(2)}$ respectively. 

We consider both fixed-dimensional ($p_n = p$) and growing-dimensional ($p_n$ increases with $n$) settings. In the growing dimensional ($p_n$ increases with $n$) setting, we assume that for some $M_2 >0$, 
\begin{align} \label{assume_X'X_11}
\bm{\alpha}' 
\left[ \dfrac{X'_{(1)} X_{(1)}}{n} \right]
\bm{\alpha}
\geq M_2
\quad
\forall
\quad
\Vert \bm{\alpha} \Vert_2 = 1.
\end{align}
Note that assumptions (\ref{assume_boundedX}) and (\ref{assume_X'X_11}), coupled with the fact that $q$ is fixed, ensure that $\frac{1}{n} X'_{(1)} X_{(1)}$ is invertible $\forall$ $n$, a fact that we rely on in this paper.

Meanwhile, for fixed-dimensional ($p_n = p$) setting, we assume that $\text{rank}(X) = p$ and there exists a non-singular matrix $C$ such that
\begin{align} \label{assume_X'X}
\dfrac{1}{n} X'X = \dfrac{1}{n} \sumin \bm{x}_i \bm{x}_i' \to C 
\quad \text{as } n \to \infty,
\end{align} 
\noindent where $\bm{x}_i$ is the $i^{th}$ row of the design matrix $X$. \\ 

\noindent
\textbf{Comments on assumptions}: The fixed-$q$ assumption is commonly found in Bayesian linear-model literature, such as \cite{JohnsonRossell}, and \cite{NarisettyHe2014}. Since we intend to compare the random-weighting approach with posterior inference, we make the fixed-$q$ assumption to align with existing Bayesian theory. The 
finite-moment assumption (\ref{assume_4thmoment}) of $\bm{\epsilon}$ is commonly found in literature \citep[e.g.,][]{Camponovo2015,Das2019}  is  weaker than the normality assumption commonly specified under a Bayesian approach \citep[e.g.,][]{BayesianLasso,JohnsonRossell,NarisettyHe2014}. Assumption (\ref{assume_boundedX}) can also be found in some seminal papers, such as \citet{BinYu} and \citet{Chatterjee&Lahiri}, and in fact, can be (trivially) achieved by standardizing the covariates. Assumption (\ref{assume_X'X_11}) is equivalent to providing a lower bound to the minimum eigenvalue of $\frac{1}{n} X_{(1)}'X_{(1)}$. This eigenvalue assumption is very common in both frequentist and Bayesian literature, such as \cite{BinYu} and \cite{NarisettyHe2014}. Finally, assumption (\ref{assume_X'X}) is  common in the LASSO literature under fixed $p$ setting, which can be traced back to \cite{Knight&Fu} and \cite{BinYu}. This assumption basically explains the relationship between the predictors under a fixed design model, and can be interpreted as the direct counterpart to the variance-covariance matrix of $X$ under a random design model. For the case of growing $p_n$, assumption (\ref{assume_X'X}) is no longer appropriate since the dimension of $\frac{1}{n} X'X$ grows.  \\ 

\noindent
\textbf{Probability Space:} There are two sources of variation in the random-weighting setup (\ref{eq.Bnw.setup}), namely the error terms $\bm{\epsilon}$ and the user-defined weights $\bm{W}$. In this paper, we consider a common probability space with the common probability measure $P = P_D \times P_W$, where $P_D$ is the probability measure of the observed data $Y_1, Y_2, \cdots$, and $P_W$ is the probability measure of the triangular array of random weights \citep{mason1992rank}. The use of product measure reflects the independence of user-defined $\bm{W}$ and data-associated $\bm{\epsilon}$. We focus on the conditional probabilities given data, that is, given the sigma-field $\mathcal{F}_n$ generated by $\bm{\epsilon}$:
$$
\mathcal{F}_n := \sigma(Y_1, \ldots, Y_n) = \sigma(\epsilon_1, \ldots, \epsilon_n). 
$$
The study of convergence of these conditional probabilities $P( \, \cdot \, |\mathcal{F}_n)$ under a weighted bootstrap framework is not new; see, for example, \citet{mason1992rank} 
and \citet{Lyddon2019Biometrika}. We now outline some definitions and notations in this respect.   
\\

\noindent
\textbf{Conditional Convergence Notations:} Let random variables (or vectors) $U, V_1, V_2, \ldots $ be defined on $(\Omega, \mathcal{A})$. We say $V_n$ converges in conditional probability $a.s.$ $P_D$ to $U$ if for every $\delta > 0$,
$$
P( \Vert V_n - U \Vert > \delta | \mathcal{F}_n ) \to 0 \quad a.s. \,\, P_D
$$
as $n \to \infty$. The notation $a.s.$ $P_D$ is read as \textit{almost surely under} $P_D$, and means \textit{for almost every infinite sequence of data} $Y_1, Y_2, \cdots$. For brevity, this convergence is denoted
$$
V_n \CONV{c.p.} U \quad a.s. \,\, P_D.
$$ 

Similarly, we say $V_n$ converges in conditional distribution $a.s.$ $P_D$ to $U$ if for any Borel set $A \subset \mathbb{R}$, 
$$
P( V_n \in A | \mathcal{F}_n ) \to P(U \in A)  \quad a.s. \,\, P_D 
$$     
as $n \to \infty$. For brevity, this convergence is denoted
$$
V_n \CONV{c.d.} U \quad a.s. \,\, P_D.
$$ 

In addition, for random variables (or vectors) $V_1, V_2, \ldots $ and random variables $U_1, U_2, \ldots $, we say 
$$
V_n = O_p(U_n) \quad a.s. \,\, P_D
$$
if and only if , for any $\delta > 0$, there is a constant $C_\delta > 0$ such that $a.s. \, P_D$,
$$
\sup_n P \left(
	\Vert V_n \Vert \geq C_\delta |U_n| \,\, \Big| \mathcal{F}_n
\right)  < \delta;
$$
whereas
$$
V_n = o_p(U_n) \quad a.s. \,\, P_D
$$
if and only if
$$
\dfrac{V_n}{U_n} \CONV{c.p.} 0 \quad a.s. \,\, P_D.
$$

\noindent
\textbf{Other Notation:} Following the usual convention, denote $\Phi \{.\}$ as the cumulative distribution function of the standard normal distribution. For two random variables $U$ and $V$, the expression $U \perp V$ is read as ``$U$ is independent of $V$". Denote $\Vert \cdot \Vert_2$ and $\Vert \cdot \Vert_F$ as the $l_2$ norm and Frobenius norm respectively. Let $\bm{1}_k$ and $I_k$ be $k \times 1$ vector of ones and $k \times k$ identity matrix respectively for some integer $k \geq 2$. Besides that, for any two vectors $\bm{u}$ and $\bm{v}$ of the same dimension, we denote $\bm{u} \circ \bm{v}$ as the Hadamard (entry-wise) product of the two vectors. In addition, define 
\[
\begin{bmatrix}
C_{n(11)} & C_{n(12)} \\
C_{n(21)} & C_{n(22)} 
\end{bmatrix}
:= \dfrac{1}{n} X'X = \dfrac{1}{n}
\begin{bmatrix}
X'_{(1)} X_{(1)} & X'_{(1)} X_{(2)} \\
X'_{(2)} X_{(1)} & X'_{(2)} X_{(2)}
\end{bmatrix}
.
\]
Notice that an immediate consequence of Assumption (\ref{assume_X'X}) is that 
$$
C_{n(ij)} \to C_{ij} \,\, \forall \,\, i,j = 1,2,
$$
where $C_{11}$ is invertible. Furthermore, denote $\mu_W$ and $\sigma^2_W$ as the mean and variance of the random weight distribution $F_W$. Let $D_n = diag(W_1, \ldots, W_n)$, and define 
\[
\begin{bmatrix}
\cnwa & \cnwb \\
\cnwc & \cnwd 
\end{bmatrix}
:= \dfrac{1}{n} X' D_n X = \dfrac{1}{n}
\begin{bmatrix}
X'_{(1)} D_n X_{(1)} & X'_{(1)} D_n X_{(2)} \\
X'_{(2)} D_n X_{(1)} & X'_{(2)} D_n X_{(2)}
\end{bmatrix}
.
\]
Notice that $D_n$ does not contain any penalty weights $W_{0,j}$. For weighting scheme (\ref{eq.Bnw.WeightPenalty2}), the penalty weights $\bm{W}_0 = (W_{0,1}, \cdots, W_{0,p_n})$ could also be partitioned into
\[
\bm{W}_0 = 
\begin{bmatrix}
\bm{W}_{0(1)} \\
\bm{W}_{0(2)}
\end{bmatrix}
,
\]
which corresponds to the partition of $\bm{\beta}_0$. For ease of notation, define
\begin{align*}
\znwa &= \dfrac{1}{\sqrt{n}} X_{(1)}' D_n \bm{\epsilon}, \\
\znwb &= \dfrac{1}{\sqrt{n}} X_{(2)}' D_n \bm{\epsilon}, \\
\znwc &= C_{n(21)} C_{n(11)}^{-1} \znwa - \znwb, \\
\widetilde{C}^w_n &= \cnwc \left( \cnwa \right)^{-1} - C_{n(21)} C_{n(11)}^{-1}.
\end{align*}
Finally, the function $\text{sgn}(\cdot)$ maps positive entry to 1, negative entry to -1 and zero to zero. An estimator $\widehat{\bm{\beta}}$ is said to be equal in sign to the true parameter $\bm{\beta}_0$, if 
$$
\text{sgn}(\widehat{\bm{\beta}}) = \text{sgn}(\bm{\beta}_0),
$$
and is denoted as
$$
\widehat{\bm{\beta}} \stackrel{s}{=} \bm{\beta}_0.
$$

\section{Main Results} \label{sec:Main}

\subsection{One-step Procedure} \label{sec:MainResults}

%In this subsection, 
We investigate the asymptotic properties of random-weighting draws~(\ref{eq.Bnw.setup}) obtained from Algorithm \ref{alg:ALG_general}, 
%This is the original framework of random-weighting in LASSO regression 
which coincides
with the weighted Bayesian bootstrap method 
considered by \citet{WBB}. 
For convenience, we shall call this  the ``one-step procedure" to distinguish it from the extended framework that we shall discuss in Section \ref{sec:MainResults2}. 

\begin{algorithm}
	\SetAlgoLined
	\caption{Random-Weighting in LASSO regression}
	\label{alg:ALG_general}
	\Input{ 
		\begin{itemize}[itemsep=0pt]
			\item data: $D = (\bm{y}, X)$ 
			\item regularization parameter: $\lambda_n$
			\item number of draws: $B$
			\item choice of random weight distribution: $F_W$ 
			\item choice of weighting schemes: (\ref{eq.NoWeightPenalty}), (\ref{eq.Bnw.WeightPenalty}) or (\ref{eq.Bnw.WeightPenalty2})  
		\end{itemize}
	}
	\Output{ $B$ parameter samples $\{\widehat{\bm{\beta}}_n^{w,b} \}_{b=1}^B$ }
	\For {$b = 1$ to $B$} {
		Draw i.i.d. random weights from $F_W$ and substitute them into ~(\ref{eq.Bnw.setup}) \;
		Store $\widehat{\bm{\beta}}_n^{w,b}$ obtained by optimizing ~(\ref{eq.Bnw.setup}) \; 
	}   
\end{algorithm}

First, we establish the property of conditional model selection given data. In particular, we are interested in the conditional probability of the random-weighting samples matching the signs of $\bm{\beta}_0$. Notably, sign consistency is stronger than variable selection consistency, which requires only matching of zeros. Nevertheless, we agree with \citet{BinYu}'s argument of considering sign consistency -- it allows us to avoid situations where models have matching zeroes but reversed signs, which hardly qualify as correct models. We begin with a result that establishes the lower bound for this conditional probability.    
\begin{proposition} \label{lem_ModelSelect}
	Suppose $p_n \leq n$ and ${\rm rank}(X) = p_n$. Assume (\ref{assume_4thmoment}), (\ref{assume_boundedX}) and (\ref{assume_X'X_11}). Furthermore, assume the \textbf{strong irrepresentable condition} \citep{BinYu}: there exists a positive constant vector $\bm{\eta}$ such that 
	\begin{align} \label{assume_StrongIrrepresent}
	\left|
	C_{n(21)} 
	\left( C_{n(11)} \right)^{-1} 
	\text{sgn} \left( \bm{\beta}_{0(1)} \right)
	\right| \leq
	\bm{1}_{p_n-q} - \bm{\eta},
	\end{align}
	where $0 < \eta_j \leq 1$ $\forall$ $j = 1, \ldots, p_n-q$, and the inequality holds element-wise. Then, for all $n \geq p_n$,
	$$
	P\left(
	\bnw (\lambda_n) \stackrel{s}{=} \bm{\beta}_0
	\big| \mathcal{F}_n
	\right)	\geq 
	P \left( 
	A_n^w \cap B_n^w 
	\big| \mathcal{F}_n
	\right),
	%\,\, a.s. \,\, P_D, 
	$$ 
	where
	\begin{itemize}[wide]
		\item [(a)] for weighting scheme (\ref{eq.NoWeightPenalty}),
		\begin{align*}
		A_n^w &\equiv  
		\left\{
		\left\vert 
		\left( \cnwa \right)^{-1} 
		\left(
		\znwa - 
		\dfrac{\lambda_n}{2 \sqrt{n}} 
		\text{sgn} \left[ \bm{\beta}_{0(1)} \right]
		\right)
		\right\vert
		\leq  \sqrt{n}
		\left\vert \bm{\beta}_{0(1)} \right\vert
		\,\, \text{{\rm element-wise}}
		\right\} \\
		B_n^w &\equiv 
		\left\{
		\left\vert 
		\widetilde{C}^w_n 
		\left(
		\znwa - 
		\dfrac{\lambda_n}{2 \sqrt{n}} 
		\text{sgn} \left[ \bm{\beta}_{0(1)} \right]
		\right)
		+ \znwc 
		\right\vert
		\leq 
		\dfrac{\lambda_n}{2 \sqrt{n}} 
		\bm{\eta} 
		\,\, \text{{\rm element-wise}}
		\right\} ;
		\end{align*}
		\item [(b)] for weighting scheme (\ref{eq.Bnw.WeightPenalty}),
		\begin{align*}
		A_n^w &\equiv  
		\left\{
		\left\vert 
		\left( \cnwa \right)^{-1} 
		\left(
		\znwa - 
		\dfrac{\lambda_n W_0}{2 \sqrt{n}} 
		\text{sgn} \left[ \bm{\beta}_{0(1)} \right]
		\right)
		\right\vert
		\leq  \sqrt{n}
		\left\vert \bm{\beta}_{0(1)} \right\vert
		\,\, \text{{\rm element-wise}}
		\right\} \\
		B_n^w &\equiv 
		\left\{
		\left\vert 
		\widetilde{C}^w_n 
		\left(
		\znwa - 
		\dfrac{\lambda_n W_0}{2 \sqrt{n}} 
		\text{sgn} \left[ \bm{\beta}_{0(1)} \right]
		\right)
		+ \znwc 
		\right\vert
		\leq 
		\dfrac{\lambda_n W_0}{2 \sqrt{n}} 
		\bm{\eta} 
		\,\, \text{{\rm element-wise}}
		\right\}; 
		\end{align*}
		\item [(c)] for weighting scheme (\ref{eq.Bnw.WeightPenalty2}),
			\begin{alignat*}{2}
		A_n^w &\equiv  
		&&\bigg\{
		\left\vert 
		\left( \cnwa \right)^{-1} 
		\left(
		\znwa - 
		\dfrac{\lambda_n}{2 \sqrt{n}}
		\bm{W}_{0(1)} \circ
		\text{sgn} \left[ \bm{\beta}_{0(1)} \right]
		\right)
		\right\vert \\
		& &&\leq  \sqrt{n}
		\left\vert \bm{\beta}_{0(1)} \right\vert
		\,\, \text{{\rm element-wise}}
		\bigg\} \\
		B_n^w &\equiv 
		&&\bigg\{
		\left\vert 
		\widetilde{C}^w_n 
		\left(
		\znwa - 
		\dfrac{\lambda_n}{2 \sqrt{n}} 
		\bm{W}_{0(1)} \circ
		\text{sgn} \left[ \bm{\beta}_{0(1)} \right]
		\right)
		+ \znwc 
		\right\vert \\
		& &&\leq 
		\dfrac{\lambda_n}{2 \sqrt{n}} 
		\left(
		\bm{W}_{0(2)} - 
		\left\vert
		C_{n(21)} 
		\left( C_{n(11)} \right)^{-1}
		\bm{W}_{0(1)} \circ
		\text{sgn} \left[ \bm{\beta}_{0(1)} \right]
		\right\vert
		\right)
		\, \text{{\rm element-wise}}
		\bigg\}. 
		\end{alignat*}
	\end{itemize}
\end{proposition}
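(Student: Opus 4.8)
The plan is to exploit the Karush--Kuhn--Tucker (KKT) characterisation of the minimiser of the random-weighting objective~(\ref{eq.Bnw.setup}), in the spirit of the primal--dual ``witness'' construction of \citet{BinYu}. Because ${\rm rank}(X)=p_n\le n$ and $P(W_i>0)=1$, the matrix $X'D_nX$ is positive definite with $P_W$-probability one, so~(\ref{eq.Bnw.setup}) is strictly convex and $\bnw(\lambda_n)$ is its \emph{unique} minimiser. It therefore suffices to produce, on the event $A_n^w\cap B_n^w$, a vector $\widehat{\bm{\beta}}$ that simultaneously (i)~satisfies the KKT conditions for~(\ref{eq.Bnw.setup}) and (ii)~obeys $\text{sgn}(\widehat{\bm{\beta}})=\text{sgn}(\bm{\beta}_0)$; strict convexity then forces $\widehat{\bm{\beta}}=\bnw(\lambda_n)$, so $A_n^w\cap B_n^w\subseteq\{\bnw(\lambda_n)\stackrel{s}{=}\bm{\beta}_0\}$, and the claimed bound follows by monotonicity of $P(\,\cdot\,|\mathcal{F}_n)$.

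For the witness I would take $\widehat{\bm{\beta}}=(\widehat{\bm{\beta}}_{(1)}{}',\bm{0}')'$ with $\widehat{\bm{\beta}}_{(2)}=\bm{0}$ and $\widehat{\bm{\beta}}_{(1)}$ chosen to solve the on-support stationarity equation $-2X_{(1)}'D_n(\bm{Y}-X\widehat{\bm{\beta}})+\lambda_n\,\bm{W}_{0(1)}\circ\text{sgn}(\bm{\beta}_{0(1)})=\bm{0}$. Using $\bm{Y}=X_{(1)}\bm{\beta}_{0(1)}+\bm{\epsilon}$ (since $\bm{\beta}_{0(2)}=\bm{0}$) and the definitions of $\cnwa$ and $\znwa$, this solves in closed form as
\[
\widehat{\bm{\beta}}_{(1)}-\bm{\beta}_{0(1)}=\frac{1}{\sqrt{n}}\left(\cnwa\right)^{-1}\left(\znwa-\frac{\lambda_n}{2\sqrt{n}}\,\bm{W}_{0(1)}\circ\text{sgn}(\bm{\beta}_{0(1)})\right),
\]
which is well defined because $\cnwa=\frac{1}{n}X_{(1)}'D_nX_{(1)}$ is invertible (combining ${\rm rank}(X)=p_n$ with $W_i>0$ and the eigenvalue bound~(\ref{assume_X'X_11})). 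Since any coordinate of $\widehat{\bm{\beta}}_{(1)}$ lying between $0$ and twice the corresponding coordinate of $\bm{\beta}_{0(1)}$ shares its sign, the on-support sign condition holds whenever $|\widehat{\bm{\beta}}_{(1)}-\bm{\beta}_{0(1)}|\le|\bm{\beta}_{0(1)}|$ element-wise; multiplying by $\sqrt{n}$ this is precisely the event $A_n^w$, with $\bm{W}_{0(1)}$ equal to $\bm{1}_q$, $W_0\bm{1}_q$ or a general vector in cases (a), (b), (c) respectively. (The portion of the boundary of $A_n^w$ on which some coordinate of $\widehat{\bm{\beta}}_{(1)}$ vanishes is the only place a strict-versus-weak inequality subtlety enters, and it does not affect the lower bound.)

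It then remains to verify the off-support subgradient condition $|2X_{(2)}'D_n(\bm{Y}-X\widehat{\bm{\beta}})|\le\lambda_n\bm{W}_{0(2)}$ element-wise, which together with $\widehat{\bm{\beta}}_{(2)}=\bm{0}$ completes the KKT system. Substituting $\bm{Y}-X\widehat{\bm{\beta}}=X_{(1)}(\bm{\beta}_{0(1)}-\widehat{\bm{\beta}}_{(1)})+\bm{\epsilon}$, the closed form above, and the definitions of $\cnwc$ and $\znwb$, and dividing by $2\sqrt{n}$, this condition becomes
\[
\left|\cnwc\left(\cnwa\right)^{-1}\left(\znwa-\frac{\lambda_n}{2\sqrt{n}}\bm{W}_{0(1)}\circ\text{sgn}(\bm{\beta}_{0(1)})\right)-\znwb\right|\le\frac{\lambda_n}{2\sqrt{n}}\bm{W}_{0(2)}.
\]
Inserting $\widetilde{C}^w_n=\cnwc(\cnwa)^{-1}-C_{n(21)}C_{n(11)}^{-1}$ and $\znwc=C_{n(21)}C_{n(11)}^{-1}\znwa-\znwb$ rewrites the left-hand vector as $\bm{u}-\bm{v}$ with $\bm{u}:=\widetilde{C}^w_n(\znwa-\frac{\lambda_n}{2\sqrt{n}}\bm{W}_{0(1)}\circ\text{sgn}(\bm{\beta}_{0(1)}))+\znwc$ and $\bm{v}:=\frac{\lambda_n}{2\sqrt{n}}C_{n(21)}C_{n(11)}^{-1}\bm{W}_{0(1)}\circ\text{sgn}(\bm{\beta}_{0(1)})$. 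By the coordinatewise triangle inequality it is enough that $|\bm{u}|\le\frac{\lambda_n}{2\sqrt{n}}\bm{W}_{0(2)}-|\bm{v}|$, which in case (c) is exactly $B_n^w$. In cases (a) and (b), $|\bm{v}|$ equals $\frac{\lambda_n}{2\sqrt{n}}|C_{n(21)}C_{n(11)}^{-1}\text{sgn}(\bm{\beta}_{0(1)})|$ (times $W_0$ in case (b)), so the strong irrepresentable condition~(\ref{assume_StrongIrrepresent}) gives $|\bm{v}|\le\frac{\lambda_n}{2\sqrt{n}}(\bm{1}_{p_n-q}-\bm{\eta})$ (times $W_0$), and the required inequality follows from $|\bm{u}|\le\frac{\lambda_n}{2\sqrt{n}}\bm{\eta}$ (times $W_0$), which is $B_n^w$ in those cases. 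Hence $A_n^w\cap B_n^w$ is contained in the event that the witness is KKT-optimal with sign pattern $\text{sgn}(\bm{\beta}_0)$, and the proposition follows.

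The main obstacle I anticipate is bookkeeping rather than conceptual: carefully tracking the factor $2$ from the squared-error loss, the Hadamard products with the penalty weights, and the matrix algebra behind the $\widetilde{C}^w_n$/$\znwc$ decomposition, so that the triangle-inequality step aligns exactly with each of the three stated forms of $A_n^w$ and $B_n^w$. A secondary point that needs a line of care is the positive-definiteness of $X'D_nX$ (hence invertibility of $\cnwa$) with $P_W$-probability one for every data sequence, which is where ${\rm rank}(X)=p_n$, $P(W>0)=1$ and~(\ref{assume_X'X_11}) are used so that the conditional statements are well posed $a.s.\ P_D$.
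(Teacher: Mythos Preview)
Your proposal is correct and follows essentially the same route as the paper's proof: both construct the primal--dual witness $\widehat{\bm{\beta}}=(\widehat{\bm{\beta}}_{(1)}',\bm{0}')'$ from the on-support KKT equation, identify $A_n^w$ with the sign-recovery condition $|\widehat{\bm{\beta}}_{(1)}-\bm{\beta}_{0(1)}|\le|\bm{\beta}_{0(1)}|$, and obtain $B_n^w$ from the off-support subgradient condition via the decomposition $\cnwc(\cnwa)^{-1}=\widetilde{C}^w_n+C_{n(21)}C_{n(11)}^{-1}$ together with the coordinatewise triangle inequality and (in cases~(a),~(b)) the strong irrepresentable condition~(\ref{assume_StrongIrrepresent}). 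The paper presents the same argument by first writing the KKT system for an arbitrary minimiser and then specialising to the true support, but the content is identical; your explicit acknowledgment of the strict-versus-weak inequality issue on the boundary of $A_n^w$ is, if anything, slightly more careful than the paper's treatment.
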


The ${\rm rank}(X) = p_n \leq n$ assumption in Proposition \ref{lem_ModelSelect} ensures that the random-weighting setup (\ref{eq.Bnw.setup}) has a unique solution \citep{Osborne2000}. For a random-design setting, the ${\rm rank}(X) = p_n \leq n$ assumption can be replaced with the assumption that $X$ is drawn from a joint continuous distribution \citep{LassoUnique}.    

The strong irrepresentable condition (\ref{assume_StrongIrrepresent}) can be seen as a constraint on the relationship between active covariates and inactive covariates, that is, the total amount of an irrelevant covariate ``represented" by a relevant covariate must be strictly less than one. Similar to \citet{BinYu}'s argument, $A_n^w$ refers to recovery of the signs of coefficients for $\bm{\beta}_{0(1)}$, and $B_n^w$ further implies obtaining $\widehat{\bm{\beta}}^w_{n(2)} = \bm{0}$ given $A_n^w$. The regularization parameter $\lambda_n$ continues to play the role of trade-off between $A_n^w$ and $B_n^w$: higher $\lambda_n$ leads to larger $B_n^w$ but smaller $A_n^w$, which forces the random-weighting method to drop more covariates, and vice versa. Meanwhile, larger $\bm{\eta}$ in (\ref{assume_StrongIrrepresent}), which could be interpreted as lower ``correlation" between active covariates and inactive covariates, increases $B_n^w$ but does not affect $A_n^w$, thus allowing the random-weighting method to better select the true model. \citet{BinYu} also gave a few sufficient conditions that ensure the following designs of $X$ satisfy condition (\ref{assume_StrongIrrepresent}):
\begin{itemize}
	\item constant positive correlation,
	\item bounded correlation,
	\item power-decay correlation, 
	\item orthogonal design, and
	\item block-wise design. 
\end{itemize} 
Again, we would like to highlight the fact that conditional on $\mathcal{F}_n$, the randomness of $A_n^w$ and $B_n^w$ derives from the random weights instead of $\bm{\epsilon}$. Besides that, notice how the presence of different penalty weights in weighting scheme (\ref{eq.Bnw.WeightPenalty2}) affects the strong irrepresentable condition (\ref{assume_StrongIrrepresent}) in $B^w_n$. We will see how these different weighting schemes affect the constraints on $p_n$ and $\lambda_n$ in order to achieve conditional model selection consistency.      

\begin{thm} \label{thm_Model_Select}
	\textbf{(Conditional Model Selection Consistency)} Assume assumptions in Proposition \ref{lem_ModelSelect}. 
	\begin{itemize}
		\item [(a)] Under weighting schemes (\ref{eq.NoWeightPenalty}) and (\ref{eq.Bnw.WeightPenalty}), if there exists $\frac{1}{2} < c_1 <  c_2 < 1.5 - c_1$ and $0 \leq c_3 < \min \{ 2(c_2 - c_1), 2c_1 - 1 \}$ for which $\lambda_n = \mathcal{O} \left( n^{c_2} \right)$ and $p_n = \mathcal{O} \left( n^{c_3} \right)$, then as $n \to \infty$,
		$$
		P\left(
		\bnw (\lambda_n) \stackrel{s}{=} \bm{\beta}_0
		\big| \mathcal{F}_n 
		\right)	
		\to 1
		\quad a.s. \,\, P_D. 
		$$   
		\item [(b)] Under weighting scheme (\ref{eq.Bnw.WeightPenalty2}), if $(W_i, W_{0,j}) \stackrel{iid}{\sim} \rm{Exp} (\theta_w)$ for some $\theta_w > 0$, and if $\bm{\eta} = \bm{1}_{p_n-q}$, and if there exists $\frac{1}{2} < c_1 <  c_2 < 1.5 - c_1$ and $0 \leq c_3 < \min \{ \frac{2}{3}(c_2 - c_1), 2c_1 - 1 \}$ for which $\lambda_n = \mathcal{O} \left( n^{c_2} \right)$ and $p_n = \mathcal{O} \left( n^{c_3} \right)$, then as $n \to \infty$,
		$$
		P\left(
		\bnw (\lambda_n) \stackrel{s}{=} \bm{\beta}_0
		\big| \mathcal{F}_n 
		\right)	
		\to 1
		\quad a.s. \,\, P_D. 
		$$
	\end{itemize}
\end{thm}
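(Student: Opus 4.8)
The plan is to deduce the theorem from the lower bound in Proposition~\ref{lem_ModelSelect}. Writing $P(A_n^w \cap B_n^w \,|\, \mathcal{F}_n) \geq 1 - P((A_n^w)^c \,|\, \mathcal{F}_n) - P((B_n^w)^c \,|\, \mathcal{F}_n)$, it suffices to show that, in each of the three weighting schemes and under the stated $(c_1,c_2,c_3)$ conditions, both $P((A_n^w)^c \,|\, \mathcal{F}_n) \to 0$ and $P((B_n^w)^c \,|\, \mathcal{F}_n) \to 0$ $\ a.s.\ P_D$. The guiding principle is that, conditionally on $\mathcal{F}_n$, all randomness in $A_n^w$ and $B_n^w$ lives in the weights, which are independent of the data; hence every conditional probability over the weights is controlled by a moment (Markov / Rosenthal-type) inequality that uses only $\mathbb{E}(W^4)<\infty$, while the ``$a.s.\ P_D$'' qualifier is handled separately by the strong law (using $\mathbb{E}\epsilon_i^4<\infty$ from~(\ref{assume_4thmoment}) and boundedness~(\ref{assume_boundedX})) for the data-dependent coefficients such as $\frac{1}{n}\sum_i x_{ij}x_{ik}$ and $\frac{1}{n}\sum_i x_{ij}x_{ik}\epsilon_i^2$, and by the law of the iterated logarithm for the linear statistics $\frac{1}{\sqrt n}\sum_i x_{ij}\epsilon_i$. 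Concretely I would first record a small toolkit of conditional-concentration estimates (each a separate lemma): (i) $\cnwa = \mu_W C_{n(11)} + O_p(n^{-1/2})$ entry-wise $a.s.\ P_D$, so by~(\ref{assume_X'X_11}) and $q$ fixed $(\cnwa)^{-1}$ is $O_p(1)$ in operator norm and equals $(\mu_W C_{n(11)})^{-1}+O_p(n^{-1/2})$; (ii) the fluctuation matrix $\widetilde{C}^w_n = \cnwc(\cnwa)^{-1} - C_{n(21)}C_{n(11)}^{-1}$ has conditional fourth moment of order $n^{-2}$ in each of its $O(p_n)$ entries; (iii) the $q$-vector $\znwa$ is $O_p(\sqrt{\log\log n})$ conditionally, $a.s.\ P_D$, via the split $\znwa = \mu_W\,\frac{1}{\sqrt n}X_{(1)}'\bm{\epsilon} + \frac{1}{\sqrt n}X_{(1)}'(D_n-\mu_W I_n)\bm{\epsilon}$ (first term by the LIL, second by its conditional variance $\frac{\sigma_W^2}{n}\sum_i x_{ij}x_{ik}\epsilon_i^2 = O(1)$); and (iv) $[\znwc]_j = \frac{1}{\sqrt n}\sum_i W_i\epsilon_i r_{ij}$, where $(r_{1j},\dots,r_{nj})'$ is, up to sign, the residual of the $j$-th inactive column regressed on $X_{(1)}$ and is uniformly bounded by~(\ref{assume_boundedX}), so $[\znwc]_j$ has conditional mean $\frac{\mu_W}{\sqrt n}\sum_i \epsilon_i r_{ij}$ (which is $O(\sqrt{\log\log n})$ for fixed $j$ by the LIL) and conditional variance $\frac{\sigma_W^2}{n}\sum_i \epsilon_i^2 r_{ij}^2 = O(1)$.

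The event $A_n^w$ is the easy direction. Its left-hand side equals $|(\cnwa)^{-1}(\znwa - \frac{\lambda_n}{2\sqrt n}\text{sgn}(\bm{\beta}_{0(1)}))|$, which by (i) and (iii) and $\lambda_n = \mathcal{O}(n^{c_2})$ is $O_p(n^{c_2 - 1/2})$; since $c_1 > \tfrac12$ forces $c_2 < 1.5 - c_1 < 1$, this is $o_p(\sqrt n)$, while the right-hand side $\sqrt n\,|\bm{\beta}_{0(1)}|$ has $q$ fixed, non-zero coordinates of exact order $\sqrt n$. A conditional Markov bound on each of these $q$ coordinates then gives $P((A_n^w)^c \,|\, \mathcal{F}_n) = O(n^{2c_2 - 2}) \to 0$ $a.s.\ P_D$. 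For scheme~(\ref{eq.Bnw.WeightPenalty}) the same argument applies verbatim after noting $W_0 = O_p(1)$, and for scheme~(\ref{eq.Bnw.WeightPenalty2}) one uses additionally that $\|\bm{W}_{0(1)}\|_\infty = O_p(1)$ because $q$ is fixed.

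The event $B_n^w$ carries the real work. The starting point is the telescoping identity
\[
\widetilde{C}^w_n\Big(\znwa - \tfrac{\lambda_n}{2\sqrt n}\,\text{sgn}(\bm{\beta}_{0(1)})\Big) + \znwc
= \znwc + \widetilde{C}^w_n\,\znwa - \tfrac{\lambda_n}{2\sqrt n}\,\widetilde{C}^w_n\,\text{sgn}(\bm{\beta}_{0(1)}),
\]
so that $B_n^w$ asks the $\ell_\infty$-norm of this $(p_n-q)$-vector to be at most $\frac{\lambda_n}{2\sqrt n}\bm{\eta}$. The bias term $\frac{\lambda_n}{2\sqrt n}\widetilde{C}^w_n\,\text{sgn}(\bm{\beta}_{0(1)})$ and the higher-order term $\widetilde{C}^w_n\,\znwa$ are dispatched by a union bound over the $O(p_n)$ coordinates together with (ii)--(iii) and conditional Markov --- for the second it is convenient to condition first on the high-probability event $\{\|\znwa\|_\infty \leq \log n\}$ so that the product is genuinely lower order. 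The dominant term is $\znwc$: one decomposes each coordinate as in (iv) into its conditional mean (a data-only quantity controlled by moment bounds on $\bm{r}_{\cdot j}'\bm{\epsilon}$) plus its weight-fluctuation part (conditional standard deviation $O(1)$, controlled by $\mathbb{E}(W^4)<\infty$), and bounds $P((B_n^w)^c\,|\,\mathcal{F}_n)$ by summing coordinate-wise conditional Markov bounds; the competition between the $\Theta(p_n)$-fold union loss and the growing margin $\frac{\lambda_n}{2\sqrt n}\bm{\eta}$, over the admissible window for $\lambda_n$ delimited by $c_1$ and $c_2$, is exactly what produces the constraints $\tfrac12 < c_1 < c_2 < 1.5 - c_1$ and $c_3 < \min\{2(c_2-c_1),\,2c_1 - 1\}$. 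For scheme~(\ref{eq.Bnw.WeightPenalty}), $\lambda_n$ is replaced throughout by $\lambda_n W_0$ with $W_0 = O_p(1)$ bounded away from $0$ and $\infty$ with conditional probability tending to one, so nothing changes. For scheme~(\ref{eq.Bnw.WeightPenalty2}) with $\bm{\eta} = \bm{1}_{p_n-q}$, the irrepresentable condition~(\ref{assume_StrongIrrepresent}) forces $C_{n(21)}C_{n(11)}^{-1}\text{sgn}(\bm{\beta}_{0(1)}) = \bm{0}$, so the right-hand margin becomes $\frac{\lambda_n}{2\sqrt n}\big(\bm{W}_{0(2)} - |C_{n(21)}C_{n(11)}^{-1}(\bm{W}_{0(1)}-\mu_W\bm{1}_q)\circ\text{sgn}(\bm{\beta}_{0(1)})|\big)$; here the $\mathrm{Exp}(\theta_w)$ assumption is used to control the lower tail of $\min_j W_{0,j}$ over the $p_n-q$ i.i.d.\ penalty weights (of order $p_n^{-1}$, with fluctuations small enough to keep the margin positive), and the identity $\sigma_W^2 = \mu_W^2$ streamlines the second-moment bookkeeping; propagating the extra $p_n^{-1}$ shrinkage of the margin through the same estimate degrades the admissible growth of $p_n$ from $2(c_2-c_1)$ to $\tfrac23(c_2-c_1)$.

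I expect the crux to be precisely this $B_n^w$ step in the growing-$p_n$ regime: with only fourth moments available for both $\bm{\epsilon}$ and $W$, the maximum over the $\Theta(p_n)$ inactive coordinates of the various heavy-tailed linear statistics (notably $\max_j|\bm{r}_{\cdot j}'\bm{\epsilon}|$ and the coordinates of $\widetilde{C}^w_n$) can only be controlled at the $p_n^{1/4}$ scale, and threading this through so that the union losses are genuinely absorbed by the $\lambda_n/\sqrt n$ gap, and match the stated exponent window, is the delicate accounting. A secondary nuisance is that $\widetilde{C}^w_n$ and $\znwa$ are built from the same weights, hence dependent; this is handled either by conditioning on the event $\{\|\znwa\|_\infty \leq \log n\}$ before bounding $\widetilde{C}^w_n\,\znwa$, or by a Cauchy--Schwarz split that stays within the $\mathbb{E}(W^4)<\infty$ budget.
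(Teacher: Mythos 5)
Your skeleton is the paper's: start from Proposition \ref{lem_ModelSelect}, bound $P((A_n^w)^c\,|\,\mathcal{F}_n)$ and $P((B_n^w)^c\,|\,\mathcal{F}_n)$ separately, control $(\cnwa)^{-1}$, $\widetilde{C}^w_n$, $\znwa$ and $\znwc$ by separate lemmas, and for scheme (\ref{eq.Bnw.WeightPenalty2}) exploit $\min_j W_{0(2),j}\sim \mathrm{Exp}((p_n-q)\theta_w)$ to explain the degradation from $2(c_2-c_1)$ to $\tfrac23(c_2-c_1)$. The differences are mostly in technique: you run coordinate-wise conditional Markov bounds with a union over the $\Theta(p_n)$ inactive coordinates (plus the LIL for the data-only linear statistics), whereas the paper establishes each coordinate is $O_p(1)$ via a conditional Lindeberg CLT (with the Kronecker-lemma SLLN, Lemma \ref{lem_ASconv}, for the unweighted parts) and then aggregates through the $\ell_2$ norm, $\max_j|\zeta_{n,j}|\le\Vert\bm{\zeta}_n\Vert_2$, so that the $p_n$-fold union loss appears as the factor $o_p(n^{c_3})$ in $\Vert\bm{\zeta}_n\Vert_2^2$ rather than as a sum of tail bounds. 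Your worry about controlling $\max_j$ of heavy-tailed statistics at only the $p_n^{1/4}$ scale is thereby sidestepped in the paper; the $\ell_2$ route is cruder but suffices exactly because $c_3<2(c_2-c_1)$ (resp.\ $\tfrac23(c_2-c_1)$). Your treatment of $A_n^w$, of the $\widetilde{C}^w_n$ terms (via Lemma \ref{lem_CtildeW}-type bounds requiring $c_3<2c_1-1$), and of scheme (\ref{eq.Bnw.WeightPenalty}) via $W_0=O_p(1)$ all match the paper.

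There is one genuine gap, in part (b). After using $\bm{\eta}=\bm{1}_{p_n-q}$ to kill $C_{n(21)}C_{n(11)}^{-1}\mathrm{sgn}(\bm{\beta}_{0(1)})$, you retain the residual $\tfrac{\lambda_n}{2\sqrt n}\bigl|C_{n(21)}C_{n(11)}^{-1}\bigl((\bm{W}_{0(1)}-\mu_W\bm{1}_q)\circ\mathrm{sgn}(\bm{\beta}_{0(1)})\bigr)\bigr|$ in the margin and assert its fluctuations are ``small enough to keep the margin positive.'' They are not: with $q$ fixed this residual is $O_p(\lambda_n/\sqrt n)$ per coordinate and does not shrink, while the usable margin is $\tfrac{\lambda_n}{2\sqrt n}\min_j W_{0(2),j}=O_p\bigl(\lambda_n/(\sqrt n\,p_n)\bigr)$; the margin is therefore negative with conditional probability bounded away from zero, and the coordinate-wise Markov bound collapses. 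The paper's proof does not carry this term at all --- it argues that $\gamma_{n,j}=|C_{n(21)}C_{n(11)}^{-1}(\bm{W}_{0(1)}\circ\mathrm{sgn}(\bm{\beta}_{0(1)}))|_j$ vanishes identically when $\bm{\eta}=\bm{1}_{p_n-q}$, so that the margin is exactly $\tfrac{\lambda_n}{2\sqrt n}W_{0(2),j}$ before the Exp-minimum bound is invoked. (That step itself leans on more than orthogonality of each row of $C_{n(21)}C_{n(11)}^{-1}$ to the sign vector --- it holds, e.g., when $C_{n(21)}=\bm{0}$, the sufficient condition the paper cites --- but if you want to reproduce the stated theorem you must make this cross term vanish, not merely bound it.)
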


Theorem \ref{thm_Model_Select} could be interpreted as the ``concentration" of the conditional distribution of signs of $\bnw$ around the neighborhood of the true signs of $\bm{\beta}$ as $n \to \infty$. Comparing the three weighting schemes, we can see that assigning random weights on the penalty term further impedes how fast $p_n$ could increase with $n$ while achieving conditional model selection consistency, especially when the penalty terms do not share a common random weight in weighting scheme (\ref{eq.Bnw.WeightPenalty2}). This adversely affects/violates the strong irrepresentable assumption~(\ref{assume_StrongIrrepresent}), unless under a stringent condition where $\bm{\eta} = \bm{1}$. One sufficient condition for  $\bm{\eta} = \bm{1}$ would be zero correlation between any relevant predictor and any irrelevant predictor, i.e. $C_{n(21)} = \bm{0}$ for all $n$. 

We also point out that the conditional model selection consistency property under a fixed dimensional ($p_n = p$) setting could be easily obtained by taking $c_3 = 0$ in Theorem \ref{thm_Model_Select}. \\

The next two results concern with the properties of conditional consistency and conditional asymptotic normality of the random-weighting samples under a fixed-dimension ($p_n = p$) setting.   

\begin{thm} \label{thm_low_Consistency}
	Suppose $p_n = p$ is fixed. Assume (\ref{assume_4thmoment}), (\ref{assume_boundedX}) and (\ref{assume_X'X}). 
	\begin{itemize}
		\item [ (a) ] \textbf{(Conditional Consistency)} If $\dfrac{\lambda_n}{n} \to 0$, then for all three weighting schemes (\ref{eq.NoWeightPenalty}), (\ref{eq.Bnw.WeightPenalty}) and (\ref{eq.Bnw.WeightPenalty2}),
		$$
		\bnw \CONV{c.p.} \bm{\beta}_0 \quad a.s. \,\, P_D.
		$$
		\item [ (b) ] 	If $\dfrac{\lambda_n}{n} \to \lambda_0 \in (0,\infty)$, then 
		$$
		\left( \bnw - \bm{\beta}_0 \right) \CONV{c.d.} 
		\argmin_{\bm{u}} g(\bm{u}) \quad a.s. \,\, P_D,
		$$
		where
		$$
		g(\bm{u}) = 
		\mu_W \bm{u}' C \bm{u} + 
		\lambda_0 \sum_{j=1}^p W_j | \beta_{0,j} + u_j | 
		$$
		and
		\begin{itemize}
			\item [(i)] $W_j$ = 1 for all $j$ under weighting scheme (\ref{eq.NoWeightPenalty}),
			\item [(ii)] $W_j = W_0$ for all $j$ and $W_0 \sim F_W$ under weighting scheme (\ref{eq.Bnw.WeightPenalty}), 
			\item [(iii)] $W_j \stackrel{iid}{\sim} F_W$ under weighting scheme (\ref{eq.Bnw.WeightPenalty2}).
		\end{itemize}
	\end{itemize} 
\end{thm}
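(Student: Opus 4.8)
The plan is to transfer the convexity (epi-convergence) argument of \citet{Knight&Fu} to the conditional probability space, handling the two sources of randomness (the errors and the weights) simultaneously. Reparametrize by $\bm{u} = \bm{\beta} - \bm{\beta}_0$ and work with the recentered random objective
\[
V_n(\bm{u}) := \sumin W_i \left[ (\epsilon_i - \bm{x}_i' \bm{u})^2 - \epsilon_i^2 \right] + \lambda_n \sum_{j=1}^{p} W_{0,j} \left( |\beta_{0,j} + u_j| - |\beta_{0,j}| \right),
\]
so that $\bnw - \bm{\beta}_0 = \argmin_{\bm{u}} \tfrac{1}{n} V_n(\bm{u})$, and $\tfrac{1}{n} V_n$ is convex in $\bm{u}$ (a convex quadratic form plus a weighted $\ell_1$ term, with nonnegative weights), attaining a unique minimizer for every $n$ because ${\rm rank}(X) = p$. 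Expanding gives $\tfrac{1}{n} V_n(\bm{u}) = -2 \bm{u}' \left( \tfrac{1}{n} X' D_n \bm{\epsilon} \right) + \bm{u}' \left( \tfrac{1}{n} X' D_n X \right) \bm{u} + \tfrac{\lambda_n}{n} \sum_{j=1}^{p} W_{0,j} \left( |\beta_{0,j} + u_j| - |\beta_{0,j}| \right)$, which isolates the three pieces whose conditional limits I need.

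First I would pin down these limits for $P_D$-almost every data sequence. Conditionally on $\mathcal{F}_n$, the matrix $\tfrac{1}{n} X' D_n X$ has mean $\mu_W \tfrac{1}{n} X'X \to \mu_W C$ by (\ref{assume_X'X}) and entrywise conditional variance $O(\sigma_W^2 M_1^4 / n)$ by (\ref{assume_boundedX}), so $\tfrac{1}{n} X' D_n X \CONV{c.p.} \mu_W C$ a.s. $P_D$. Likewise $\tfrac{1}{n} X' D_n \bm{\epsilon}$ has conditional mean $\mu_W \tfrac{1}{n} X' \bm{\epsilon} \to \bm{0}$ (the strong law for the bounded-coefficient averages $\tfrac{1}{n}X'\bm{\epsilon}$ of the i.i.d.\ errors, using (\ref{assume_boundedX}) and the finite variance implied by (\ref{assume_4thmoment})) and conditional variance of order $n^{-1}$ a.s.\ $P_D$ (since $\tfrac{1}{n}\sumin \epsilon_i^2 \to \sigma_\epsilon^2$ a.s.), so $\tfrac{1}{n} X' D_n \bm{\epsilon} \CONV{c.p.} \bm{0}$ a.s.\ $P_D$. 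The penalty factor $\sum_{j=1}^{p} W_{0,j}(|\beta_{0,j}+u_j| - |\beta_{0,j}|)$ is, for each fixed $\bm{u}$, a finite and $n$-free random quantity, hence conditionally $O_p(1)$. Pointwise in $\bm{u}$, therefore: if $\lambda_n/n \to 0$ then $\tfrac{1}{n} V_n(\bm{u}) \CONV{c.p.} \mu_W \bm{u}' C \bm{u}$ a.s.\ $P_D$; if $\lambda_n/n \to \lambda_0$ then $\tfrac{1}{n} V_n(\bm{u}) \CONV{c.p.} g_*(\bm{u}) := \mu_W \bm{u}' C \bm{u} + \lambda_0 \sum_{j=1}^{p} W_j(|\beta_{0,j}+u_j| - |\beta_{0,j}|)$ a.s.\ $P_D$, where $W_j$ equals $1$, $W_0$, or $W_{0,j}$ according to the weighting scheme, and where the same weights appear in $V_n$ and in the limit $g_*$.

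Next I would invoke the convexity lemma. Fixing a countable dense set $\mathcal{U}_0 \subset \mathbb{R}^p$ and intersecting the $P_D$-null exceptional sets over $\bm{u} \in \mathcal{U}_0$ yields one $P_D$-full set on which $\tfrac{1}{n} V_n(\bm{u})$ converges in $P_W$-probability (conditionally on $\mathcal{F}_n$) to the convex limit at every point of $\mathcal{U}_0$; since the $\tfrac{1}{n} V_n$ are convex, the usual subsequence argument combined with the deterministic fact that pointwise convergence of convex functions forces uniform convergence on compacta upgrades this to uniform convergence on compacta in $P_W$-probability, a.s.\ $P_D$. Because $C$ is nonsingular and $\mu_W > 0$, the limit is strictly convex with an a.s.\ unique minimizer, so argmin-continuity (as in \citet{Knight&Fu}) gives $\bnw - \bm{\beta}_0 = \argmin_{\bm{u}} \tfrac{1}{n} V_n(\bm{u}) \to \argmin_{\bm{u}} g_*(\bm{u})$ in $P_W$-probability, a.s.\ $P_D$. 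Under (a) this limit is the deterministic $\bm{0}$, i.e.\ $\bnw \CONV{c.p.} \bm{\beta}_0$ a.s.\ $P_D$; under (b) the limit $\argmin_{\bm{u}} g_*(\bm{u})$ is a measurable function of the weights alone, so convergence in conditional probability yields the stated conditional-distribution limit (its conditional law given $\mathcal{F}_n$ coincides with its unconditional law), and since adding the constant $\lambda_0 \sum_j W_j |\beta_{0,j}|$ does not move the minimizer, $\argmin_{\bm{u}} g_*(\bm{u}) = \argmin_{\bm{u}} g(\bm{u})$ with $g$ as stated.

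The step I expect to be the main obstacle is the clean transfer of the convexity/argmin-continuity machinery to the ``$\CONV{c.p.}$ a.s.\ $P_D$'' mode of convergence: one must keep the two sources of randomness properly coupled so that the limiting objective's randomness (entering only through the penalty weights under schemes (\ref{eq.Bnw.WeightPenalty}) and (\ref{eq.Bnw.WeightPenalty2})) is carried by exactly the weights already present in $V_n$, and one must control the $P_D$-exceptional null sets uniformly over a countable dense set of arguments so that a single deterministic convexity lemma can be applied path-by-path. By contrast, the conditional law-of-large-numbers and variance bounds that feed the pointwise limits are routine given (\ref{assume_4thmoment})--(\ref{assume_X'X}), and in fact only second moments of the weights are used here.
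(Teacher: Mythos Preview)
Your proposal is correct and follows essentially the same approach as the paper: reparametrize by $\bm{u}=\bm{\beta}-\bm{\beta}_0$, expand the rescaled objective into the quadratic, cross, and penalty pieces, establish their pointwise conditional limits via the same law-of-large-numbers/variance bounds (the paper packages these as Lemmas~\ref{lem_X'DnX}, \ref{lem_EpsDnEps}, \ref{lem_X'DnEps}), and then pass to the argmin via convexity. The only cosmetic difference is that the paper splits the last step---citing \citet{Geyer1996} directly for the random-limit schemes and Pollard's Convexity Lemma for the deterministic-limit scheme---whereas you handle all three uniformly by observing that the penalty weights in $V_n$ and in the limit $g_*$ are the \emph{same} random variables, so $\tfrac{1}{n}V_n-g_*\CONV{c.p.}0$ and argmin-continuity gives convergence in conditional probability (hence in conditional distribution) in every case.
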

In other words, the conditional distribution of $\bnw$ concentrates
in the neighborhood of $\argmin_{\bm{u}} g(\bm{u})$ as the sample size increases.
In fact, for part (b)(i) of Theorem \ref{thm_low_Consistency}, conditional convergence in probability takes place since $g(\bm{u})$ is not a random function (i.e., does not involve any non-degenerate random variables).

\begin{thm} \label{thm_low_AsympDistn}
	\textbf{(Asymptotic Conditional Distribution)} Suppose $p_n = p$ is fixed. Assume (\ref{assume_4thmoment}), (\ref{assume_boundedX}) and (\ref{assume_X'X}). Let $\bSC$ be a strongly consistent estimator of $\bm{\beta}$ in the linear model (\ref{eq.LinearModel}) such that for $\bm{e}_n = \bm{Y} - X \bSC$,
	\begin{align} \label{assume:X'e}
	\dfrac{1}{\sqrt{n}} X' \bm{e}_n 
	\to \bm{0} 
	\quad a.s. \,\, P_D.
	\end{align}
	If $q = p$ and $\dfrac{\lambda_n}{\sqrt{n}} \to \lambda_0 \in [0,\infty)$, then
	$$
	\sqrt{n} \left( 
		\bnw - \bSC
	\right) 
	\CONV{c.d.} 
	\argmin_{\bm{u}} V (\bm{u})
	\quad a.s. \,\, P_D,
	$$
	where
	$$ 
	V(\bm{u}) = -2 \bm{u}' \Psi + \mu_W \bm{u}' C \bm{u} 
	+ \lambda_0 \sum_{j=1}^p W_j
	\left[ u_j \, \text{sgn}(\beta_{0,j}) \right],
	$$
	for $\Psi \sim N \left( \bm{0} , \sigma^2_W \sigma^2_{\epsilon} C \right)$, and
	\begin{itemize}
		\item [(i)] $W_j$ = 1 for all $j$ under weighting scheme (\ref{eq.NoWeightPenalty}),
		\item [(ii)] $W_j = W_0$ for all $j$, $W_0 \sim F_W$ and $W_0 \perp \Psi$ under weighting scheme (\ref{eq.Bnw.WeightPenalty}), 
		\item [(iii)] $W_j \stackrel{iid}{\sim} F_W$ and $W_j \perp \Psi$ for all $j$ under weighting scheme (\ref{eq.Bnw.WeightPenalty2}).
	\end{itemize} 
	In particular, if $\lambda_0 = 0$, then for all three weighting schemes (\ref{eq.NoWeightPenalty}), (\ref{eq.Bnw.WeightPenalty}) and (\ref{eq.Bnw.WeightPenalty2}),
	$$
	\sqrt{n} \left( 
	\bnw - \bSC
	\right) 
	\CONV{c.d.} 
	N \left( 
		\bm{0} \,\, , \,\, 
		\dfrac{ \sigma^2_W \sigma^2_{\epsilon} }{\mu_W^2} C^{-1} 
	\right)
	\quad a.s. \,\, P_D.
	$$	 
\end{thm}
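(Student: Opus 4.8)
Reparametrize by $\bm{u} = \sqrt{n}(\bm{\beta} - \bSC)$ and write $\bm{e}_n = \bm{Y} - X\bSC$ with components $e_{n,i} = y_i - \bm{x}_i'\bSC$. Subtracting the value of the objective~(\ref{eq.Bnw.setup}) at $\bm{\beta} = \bSC$ and using $y_i - \bm{x}_i'(\bSC + \bm u/\sqrt n) = e_{n,i} - \bm x_i'\bm u/\sqrt n$, the minimizer $\sqrt n(\bnw - \bSC)$ equals $\argmin_{\bm u} V_n(\bm u)$ where
\[
V_n(\bm u) = -\frac{2}{\sqrt n}\,\bm u' X' D_n \bm e_n + \bm u'\Big(\tfrac1n X'D_n X\Big)\bm u + \lambda_n\sum_{j=1}^p W_{0,j}\Big(\big|(\bSC)_j + u_j/\sqrt n\big| - \big|(\bSC)_j\big|\Big).
\]
Each $V_n$ is convex in $\bm u$ since $\tfrac1n X'D_n X$ is positive semidefinite ($W_i>0$) and the penalty increments are convex; this convexity is what ultimately lets pointwise convergence of $V_n$ force convergence of its argmin, following the epi-convergence argument of \citet{Knight&Fu} and \citet{BinYu}, carried out throughout conditionally on $\mathcal{F}_n$.

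Next I would identify the limit of each term, conditionally and a.s.\ $P_D$. For the quadratic coefficient, decompose $\tfrac1n X'D_n X = \mu_W\,\tfrac1n X'X + \tfrac1n\sum_i(W_i - \mu_W)\bm x_i\bm x_i'$: the first piece tends to $\mu_W C$ by~(\ref{assume_X'X}), and the second has conditional mean $\bm 0$ and entrywise conditional variance $O(1/n)$ (using boundedness of $X$ and $\sigma^2_W < \infty$), hence $\tfrac1n X'D_n X \CONV{c.p.} \mu_W C$ a.s.\ $P_D$. For the linear term, split $\tfrac1{\sqrt n}X'D_n\bm e_n = \mu_W\,\tfrac1{\sqrt n}X'\bm e_n + \tfrac1{\sqrt n}\sum_i(W_i - \mu_W)e_{n,i}\bm x_i$; the first summand vanishes a.s.\ $P_D$ by the defining property~(\ref{assume:X'e}) of $\bSC$, and the second is, conditionally on $\mathcal{F}_n$, a sum of independent mean-zero vectors to which I would apply a conditional Lyapunov CLT. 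Its conditional covariance is $\tfrac{\sigma^2_W}{n}\sum_i e_{n,i}^2\bm x_i\bm x_i'$; writing $e_{n,i} = \epsilon_i + \bm x_i'(\bm\beta_0 - \bSC)$, using strong consistency of $\bSC$, boundedness of $X$, and the SLLN under $\mathbb{E}\epsilon_i^4 < \infty$ to get $\tfrac1n\sum_i\epsilon_i^2\bm x_i\bm x_i' \to \sigma^2_\epsilon C$ a.s.\ $P_D$ (the cross and remainder terms being negligible), this covariance converges to $\sigma^2_W\sigma^2_\epsilon C$. The same decomposition gives $\tfrac1n\sum_i e_{n,i}^4 = O(1)$ a.s.\ $P_D$, which together with $\mathbb{E}(W^4) < \infty$ bounds the Lyapunov ratio $\tfrac1{n^2}\sum_i e_{n,i}^4\|\bm x_i\|^4\,\mathbb{E}|W - \mu_W|^4 = O(1/n) \to 0$. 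Hence $\tfrac1{\sqrt n}X'D_n\bm e_n \CONV{c.d.} \Psi \sim N(\bm 0, \sigma^2_W\sigma^2_\epsilon C)$ a.s.\ $P_D$. Finally, since $q = p$, every $\beta_{0,j} \neq 0$, so by strong consistency of $\bSC$, for each fixed $\bm u$ and all large $n$ one has $\big|(\bSC)_j + u_j/\sqrt n\big| - \big|(\bSC)_j\big| = (u_j/\sqrt n)\,\text{sgn}(\beta_{0,j})$, whence the penalty term equals $\tfrac{\lambda_n}{\sqrt n}\sum_j W_{0,j}u_j\,\text{sgn}(\beta_{0,j}) \to \lambda_0\sum_j W_j[u_j\,\text{sgn}(\beta_{0,j})]$ with $W_j$ as in (i)--(iii); because $(W_0, W_i)$ and $(W_{0,j}, W_i)$ are independent in schemes~(\ref{eq.Bnw.WeightPenalty}) and~(\ref{eq.Bnw.WeightPenalty2}), the penalty weights are asymptotically independent of $\Psi$, giving the stated independence.

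Combining via Cram\'er--Wold (for joint conditional convergence of $\tfrac1{\sqrt n}X'D_n\bm e_n$ with the penalty weights) and Slutsky (for the terms converging in conditional probability), I obtain $V_n(\bm u) \CONV{c.d.} V(\bm u)$ a.s.\ $P_D$ for each $\bm u$, jointly over any finite collection of such $\bm u$, with limit $V(\bm u) = -2\bm u'\Psi + \mu_W\bm u'C\bm u + \lambda_0\sum_j W_j[u_j\,\text{sgn}(\beta_{0,j})]$; $V$ is convex and, since $C$ is nonsingular and positive semidefinite hence positive definite, has an a.s.\ unique minimizer. The standard convexity lemma for argmins of convex processes (as in \citet{Knight&Fu}) then yields $\sqrt n(\bnw - \bSC) = \argmin_{\bm u} V_n(\bm u) \CONV{c.d.} \argmin_{\bm u} V(\bm u)$ a.s.\ $P_D$. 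When $\lambda_0 = 0$ the penalty drops out and $V(\bm u) = -2\bm u'\Psi + \mu_W\bm u'C\bm u$ is a strictly convex quadratic minimized at $\bm u = \mu_W^{-1}C^{-1}\Psi \sim N\big(\bm 0, \sigma^2_W\sigma^2_\epsilon\mu_W^{-2}C^{-1}\big)$, which is the claimed Gaussian limit.

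The main obstacle is the conditional CLT step: one needs a Lindeberg/Lyapunov condition holding for \emph{almost every} data sequence rather than merely in $P_D$-probability, while simultaneously accounting for the fact that $\bm e_n$ is itself data-dependent through $\bSC$. This forces the $e_{n,i} = \epsilon_i + \bm x_i'(\bm\beta_0 - \bSC)$ decomposition and a battery of a.s.\ SLLN arguments under the fourth-moment assumptions on $\epsilon$ and $W$, and one must check that these almost-sure-in-$P_D$ statements interact cleanly with convergence in conditional distribution over the weight randomness. A secondary delicate point is the bookkeeping of the joint asymptotic independence of $\Psi$ and the penalty weights across the three weighting schemes, so that the limit $V$ takes exactly the stated form.
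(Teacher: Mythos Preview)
Your proposal is correct and follows essentially the same route as the paper: reparametrize by $\bm u=\sqrt n(\bm\beta-\bSC)$, form the convex contrast process $V_n(\bm u)$, handle the quadratic term via a SLLN/variance argument (the paper's Lemma~\ref{lem_X'DnX}), the linear term via the decomposition $\tfrac1{\sqrt n}X'D_n\bm e_n=\mu_W\tfrac1{\sqrt n}X'\bm e_n+\tfrac1{\sqrt n}X'(D_n-\mu_W I_n)\bm e_n$ plus a conditional Lyapunov CLT with the $e_{n,i}=\epsilon_i+\bm x_i'(\bm\beta_0-\bSC)$ expansion (the paper's Lemma~\ref{lem_X'DnResid}), and the penalty term via strong consistency and $q=p$; then invoke the convexity argmin lemma (the paper cites \citet{Geyer1996}, you cite \citet{Knight&Fu}, same result). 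The only cosmetic difference is that the paper packages the quadratic and linear limits into separate lemmas and applies Cram\'er--Wold \emph{before} the Lyapunov verification rather than after, but the substance is identical.
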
   

The OLS estimator $\bLS$ and the standard LASSO estimator $\bLAS (\lambda_n^*)$ with $\lambda_n^* = o(\sqrt{n})$ are two qualified candidates for $\bSC$ to satisfy the conditions in Theorem \ref{thm_low_AsympDistn}. (Note that $\lambda_n^*$ does not necessarily have to be the same as the $\lambda_n$ that we use for our random-weighting approach.) Firstly, due to Assumption (\ref{assume_X'X}), $\bLS$ is strongly consistent \citep{LSEstrong}, and
$$
X' \eLS 
= \left( X'Y - X'X (X'X)^{-1} X'Y \right)
= \bm{0}.
$$ 
Meanwhile, since $\mathbb{E}(|\epsilon_i|) < \infty$ for all $i$ and $\lambda_n^* = o(\sqrt{n})$, $\bLAS (\lambda_n^*)$ is strongly consistent \citep{Chatterjee&Lahiri}, and the KKT conditions ensure that
$$
\dfrac{1}{\sqrt{n}} \left\Vert X' \eLAS \right\Vert_2
=  \dfrac{1}{\sqrt{n}} \left\Vert X' \left( \bm{y} - X \bLAS \right)  \right\Vert_2
\leq \dfrac{\lambda_n^* \sqrt{p}}{\sqrt{n}} 
\to 0 \quad a.s. \,\, P_D. 
$$  
We also point out that centering on the true regression parameter 
$$
\sqrt{n} \left( \bnw - \bm{\beta}_0 \right). 
$$ 
results in additional terms that depend on the sample path of realized data $\{y_1, y_2, \cdots\}$. Consequently, convergence in conditional distribution almost surely under $P_D$ (just like the result in Theorem \ref{thm_low_AsympDistn}) could not be achieved. We refer readers to Remark \ref{rmk_centering} in the Appendix for more details.

On the other hand, a more sophisticated argument is needed to establish the asymptotic conditional distribution for the case of $0< q < p$. First, note that for $j \in \{j: \beta_{0,j} = 0 \}$, $\sqrt{n} \widehat{\beta}_{n,j}^{SC}$ has an asymptotic normal distribution (denoted $Z_j$) under $P_D$. By the Skorokhod representation theorem, there exists random variables $U_{n,j}$ and $U_j$ such that $U_{n,j} \stackrel{d}{=} \sqrt{n} \widehat{\beta}_{n,j}^{SC}$, $U_j \stackrel{d}{=} Z_j$, and $U_{n,j} \to U_j \,\, a.s. \,\, P_D$. Then, for $(\lambda_n / \sqrt{n}) \to \lambda_0 \in [0,\infty)$,
\begin{align} \label{eq.SkorokhodResult}
\sqrt{n} \left( \bnw - \bSC \right) 
\CONV{c.d.}
\argmin_{\bm{u}} V^* (\bm{u})
\quad a.s. \,\, P_D,
\end{align}
\noindent where
\begin{alignat*}{2}
V^*(\bm{u}) 
&= &&-2 \bm{u}' \Psi + \mu_W \bm{u}' C \bm{u} \\
&	&&+ \lambda_0 \sum_{j=1}^p W_j
\left[
u_j \, \text{sgn}(\beta_{0,j}) 
\mathbbm{1}_{ \{ \beta_{0,j} \neq 0 \}}
+ \left( | U_j + u_j | - | U_j | \right) 
\mathbbm{1}_{ \{ \beta_{0,j} = 0 \}} 
\right],
\end{alignat*}
for $\Psi$ and $\{W_j\}_{1 \leq j \leq p}$ defined in Theorem \ref{thm_low_AsympDistn}. 

The current ``one-step" random-weighting setup (\ref{eq.Bnw.setup}) in Algorithm \ref{alg:ALG_general} does not produce random-weighting samples that have conditional sparse normality property. From Theorems \ref{thm_Model_Select} and \ref{thm_low_AsympDistn}, it is evident that even under a fixed dimensional ($p_n = p$) setting, the random weighting samples achieve conditional model selection consistency when $\lambda_n = \mathcal{O} \left( n^{c} \right)$ for some $\frac{1}{2} < c < 1$, whereas conditional asymptotic normality happens when  $\lambda_n = o \left( \sqrt{n} \right)$. 

Unsurprisingly, this finding about (lack of) conditional sparse normality approximation coincides with many existing Bayesian and frequentist results. For instance, in the Bayesian framework, Theorem 7 of \citet{Castillo2015} proved that the Bayesian LASSO approach \citep{BayesianLasso} could not achieve asymptotic sparse normality for any one given $\lambda_n$ due to the conflicting demands of sparsity-inducement and normality approximation on the regularization parameter $\lambda_n$. In the frequentist setting, \citet{Liu&Yu} pointed out that there does not exist one $\lambda_n$ that allows a standard LASSO estimator (\ref{eq.LassoObj}) to simultaneously achieve model selection and asymptotic normality. Consequently, many variations of ``two-step" LASSO estimators (e.g., \citet{Zou2006}'s ALasso), and their corresponding bootstrap procedures (e.g., \citet{DasGreg2019}'s perturbation bootstrap of ALasso) were introduced to overcome this shortcoming. 

\subsection{Two-step Procedure} \label{sec:MainResults2}

We now propose an extension to our random-weighting procedure in LASSO regression (\ref{eq.Bnw.setup}). Specifically, we retain the random-weighting framework of repeatedly assigning random-weights and optimizing the objective function (\ref{eq.Bnw.setup}), except that now optimization consists of two-steps: In step one, we optimize 
\begin{align} \label{eq:first_step}
\min_{\bm{\beta}}
\left\{
\sumin W_i ( y_i - \bm{x}_i' \bm{\beta} )^2 
+ \lambda_n \sum_{j=1}^{p_n} W_{0,j} |\beta_j|
\right\}
\end{align}
to select variables. Let $\widehat{S}_n^w \subseteq \{1, \cdots, p_n\}$ be the set of variables being selected in (\ref{eq:first_step}), and let $(\widehat{S}_n^w)^c$ be the set of discarded variables. In addition, denote $X_{\widehat{S}_n^w}$ as the $n \times | \widehat{S}_n^w  |$ submatrix of $X$ whose columns correspond to the selected variables in (\ref{eq:first_step}). Then, in step two, we obtain our random-weighting samples by solving 
\begin{align} \label{eq:second_step}
\bnw := 
\begin{bmatrix}
	\widehat{\bm{\beta}}^w_{n, \widehat{S}_n^w} \\
	\\
	\widehat{\bm{\beta}}^w_{n, (\widehat{S}_n^w)^c} 
\end{bmatrix}
:= 
\begin{bmatrix}
	\left(
		X_{\widehat{S}_n^w}' D_n X_{\widehat{S}_n^w} 
	\right)^{-1}
		X_{\widehat{S}_n^w}' D_n Y \\
	\\
	\bm{0}
\end{bmatrix},
\end{align}
where the partition of $\bnw$ corresponds to $\widehat{S}_n^w$ and $\left(\widehat{S}_n^w \right)^c$. 

\begin{algorithm}
	\SetAlgoLined
	\caption{Random-Weighting in LASSO+LS regression}
	\label{alg:ALG_general_2step}
	\Input{ 
		\begin{itemize}[itemsep=0pt]
			\item data: $D = (\bm{y}, X)$ 
			\item regularization parameter: $\lambda_n$
			\item number of draws: $B$
			\item choice of random weight distribution: $F_W$ 
			\item choice of weighting schemes: (\ref{eq.NoWeightPenalty}), (\ref{eq.Bnw.WeightPenalty}) or (\ref{eq.Bnw.WeightPenalty2})  
		\end{itemize}
	}
	\Output{
		\begin{itemize}[itemsep=0pt] 
			\item $B$ sets of selected variables $\{\widehat{S}_n^{w,b} \}_{b=1}^B$ 
			\item $B$ parameter samples $\{\widehat{\bm{\beta}}_n^{w,b} \}_{b=1}^B$ 
		\end{itemize}
	}
	\For {$b = 1$ to $B$} {
		Draw i.i.d. random weights from $F_W$ and substitute them into ~(\ref{eq.Bnw.setup}) \;
		Optimize (\ref{eq:first_step}) to obtain $\widehat{S}_n^{w,b}$ \; 
		Based on the selected set of variables $\widehat{S}_n^{w,b}$, obtain $\widehat{\bm{\beta}}_n^{w,b}$ by solving ~(\ref{eq:second_step}) \; 
	}   
\end{algorithm}

For convenience, we shall refer to this proposed extension as a ``two-step procedure", which is laid out in detail in Algorithm \ref{alg:ALG_general_2step}. This extension can be seen as the random-weighting version of \citet{Liu&Yu}'s LASSO+LS procedure, i.e., a LASSO step (\ref{eq.LassoObj}) for variable selection followed by a least-square estimation for the selected variables. We shall denote this unweighted two-step LASSO+LS estimator as $\widehat{\bm{\beta}}_n^{LAS+LS}$, and let $\widehat{S}_n$ be the set of variables selected (in the first step) by this estimator. Notice that $\widehat{S}_n$ and $\widehat{S}^w_n$ may be different due to the presence of random-weights in the selection step of (\ref{eq:first_step}). The superscript \textit{w} of $\widehat{S}_n^w$ helps to remind readers that the set of selected variables in (\ref{eq:first_step}) could change with different sets of assigned random weights.  

In this subsection, we adopt the same assumptions as we did in Theorem \ref{thm_Model_Select}, including the fact that $p_n \leq n$ and $X$ is full rank for all $n$. Thus $X_{\widehat{S}^w_n}$ is full rank and consequently, 
$$
X_{\widehat{S}^w_n}' D_n X_{\widehat{S}^w_n}
$$
is also full rank and is invertible for all $n$.   

For ease of presentation, we  introduce a bit of additional notation. Let $S_0$ be the true set of relevant variables. To be consistent with our previous notation, we remind readers that $S_0 = \{1, \cdots, q\}$ without loss of generality, and $X_{S_0}$ = $X_{(1)}$. We  also partition $\bnw$ and $\widehat{\bm{\beta}}_n^{LAS+LS}$ into  
$$
\bnw = 
\begin{bmatrix}
\bnwa \\
\\
\widehat{\bm{\beta}}^w_{n(2)}
\end{bmatrix}
\quad \quad \text{ and } \quad \quad
\widehat{\bm{\beta}}_n^{LAS+LS} =
\begin{bmatrix}
\widehat{\bm{\beta}}^{LAS+LS}_{n(1)} \\
\\
\widehat{\bm{\beta}}^{LAS+LS}_{n(2)}
\end{bmatrix}
$$
respectively, which correspond to the partition of $\bm{\beta}_0 = \left[ \bm{\beta}_{0(1)} \,\,\, \bm{\beta}_{0(2)} \right]'$. We observe that if $\widehat{S}^w_n = S_0$, then 
$$
\widehat{\bm{\beta}}^w_{n, \widehat{S}_n^w} = \bnwa
\quad \text{ and } \quad 
\widehat{\bm{\beta}}^w_{n, (\widehat{S}_n^w)^c}  = \widehat{\bm{\beta}}^w_{n (2)} 
= \bm{\beta}_{0(2)} = \bm{0}.
$$
Similarly, if $\widehat{S}_n = S_0$, then 
$$
\widehat{\bm{\beta}}_{n, \widehat{S}_n}^{LAS+LS}  = \widehat{\bm{\beta}}_{n (1)}^{LAS+LS} 
\quad \text{ and } \quad 
\widehat{\bm{\beta}}_{n, (\widehat{S}_n)^c}^{LAS+LS}   = \widehat{\bm{\beta}}_{n (2)}^{LAS+LS} 
= \bm{\beta}_{0(2)} = \bm{0}.
$$ 
We are now ready to establish the conditional sparse normality property of the two-step random-weighting samples (\ref{eq:second_step}) under growing $p_n$ setting with appropriate regularity conditions.

\begin{thm} \label{thm_cond_oracle}
	\textbf{(Conditional Sparse Normality)} Adopt all regularity assumptions as stated in Theorem \ref{thm_Model_Select} (including assumptions about the different rates of $\lambda_n$ and $p_n$ for weighting schemes (\ref{eq.NoWeightPenalty}), (\ref{eq.Bnw.WeightPenalty}) and (\ref{eq.Bnw.WeightPenalty2})). Furthermore, assume $\mu_W = 1$ and $C_{n(11)} \to C_{11}$ for some nonsingular matrix $C_{11}$. Let $\bnw$ be the two-step random-weighting samples defined in (\ref{eq:second_step}), and let $\widehat{\bm{\beta}}_n^{LAS+LS}$ be the unweighted two-step LASSO+LS estimator (i.e. a LASSO variable selection step (\ref{eq.LassoObj}) followed by least-squares estimation for the selected variables). Then,
	$$
	P\left(
		\widehat{S}^w_n = S_0
		\big| \mathcal{F}_n 
	\right)	
	\to 1
	\quad a.s. \,\, P_D,
	$$
	and
	$$
	\sqrt{n} \left( 
	\widehat{\bm{\beta}}_{n(1)}^w - \widehat{\bm{\beta}}^{LAS+LS}_{n(1)}
	\right) 
	\CONV{c.d.} 
	N_q \left( 
	\bm{0} \,\, , \,\, 
	\sigma^2_W \sigma^2_{\epsilon} C_{11}^{-1} 
	\right)
	\quad a.s. \,\, P_D.
	$$
\end{thm}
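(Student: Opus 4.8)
The plan splits along the two displayed conclusions. The support-recovery statement is essentially a corollary of Theorem~\ref{thm_Model_Select}: the first step~(\ref{eq:first_step}) is literally the one-step objective~(\ref{eq.Bnw.setup}) run with the $(\lambda_n,p_n)$ rates assumed here, so $P(\bnw(\lambda_n)\stackrel{s}{=}\bm{\beta}_0\mid\mathcal{F}_n)\to1$ a.s.\ $P_D$, and since $\widehat{S}_n^w$ is precisely the support of that first-step solution, sign-recovery forces $P(\widehat{S}_n^w=S_0\mid\mathcal{F}_n)\to1$ a.s.\ $P_D$. I would also record that in the same $\lambda_n$-regime the \emph{unweighted} LASSO is sign-consistent under the strong irrepresentable condition with summable error probabilities, so by Borel--Cantelli $\widehat{S}_n=S_0$ for all large $n$, $P_D$-a.s.; from here on I fix a data sequence in the full-$P_D$-measure set on which this holds and on which the laws of large numbers used below converge.

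For the Gaussian limit I would work on the event $E_n:=\{\widehat{S}_n^w=S_0\}$. Since $P(E_n^c\mid\mathcal{F}_n)\to0$, the quantity $\sqrt n(\widehat{\bm{\beta}}_{n(1)}^w-\widehat{\bm{\beta}}_{n(1)}^{LAS+LS})\,\mathbbm{1}_{E_n^c}$ is $o_p(1)$ a.s.\ $P_D$, so it suffices to identify the conditional limit on $E_n$. On $E_n$, and for $n$ large enough that also $\widehat{S}_n=S_0$ (which holds $P_D$-a.s.), both estimators are explicit least-squares fits on $X_{(1)}$ --- weighted for $\widehat{\bm{\beta}}_{n(1)}^w$, ordinary for $\widehat{\bm{\beta}}_{n(1)}^{LAS+LS}$; substituting $Y=X_{(1)}\bm{\beta}_{0(1)}+\bm{\epsilon}$ the $\bm{\beta}_{0(1)}$ terms cancel and, writing $v_n:=n^{-1/2}X_{(1)}'\bm{\epsilon}$,
\[
\sqrt n\bigl(\widehat{\bm{\beta}}_{n(1)}^w-\widehat{\bm{\beta}}_{n(1)}^{LAS+LS}\bigr)
=(\cnwa)^{-1}\znwa-C_{n(11)}^{-1}v_n
=(\cnwa)^{-1}(\znwa-v_n)+\bigl[(\cnwa)^{-1}-C_{n(11)}^{-1}\bigr]v_n .
\]

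The remaining analysis is conditional on $\mathcal{F}_n$. Because $\mu_W=1$, the term $\znwa-v_n=n^{-1/2}\sum_{i=1}^n(W_i-1)\epsilon_i\bm{x}_{i(1)}$ (with $\bm{x}_{i(1)}$ the relevant-coordinate subvector of $\bm{x}_i$) is a sum of independent, conditionally mean-zero $q$-vectors with conditional covariance $n^{-1}\sigma^2_W\sum_i\epsilon_i^2\bm{x}_{i(1)}\bm{x}_{i(1)}'$, which converges to $\sigma^2_W\sigma^2_{\epsilon}C_{11}$ a.s.\ $P_D$ by a weighted strong law (using~(\ref{assume_4thmoment}), (\ref{assume_boundedX}) and $C_{n(11)}\to C_{11}$); a conditional Lindeberg--Feller CLT then gives $\znwa-v_n\CONV{c.d.}N_q(\bm{0},\sigma^2_W\sigma^2_{\epsilon}C_{11})$ a.s.\ $P_D$. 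Coupled with the conditional weak law $\cnwa\CONV{c.p.}\mu_W C_{11}=C_{11}$, hence $(\cnwa)^{-1}\CONV{c.p.}C_{11}^{-1}$ a.s.\ $P_D$, Slutsky's theorem yields $(\cnwa)^{-1}(\znwa-v_n)\CONV{c.d.}N_q(\bm{0},\sigma^2_W\sigma^2_{\epsilon}C_{11}^{-1})$, the asserted limit. It then remains to kill the plug-in remainder: $\cnwa-C_{n(11)}=n^{-1}\sum_i(W_i-\mu_W)\bm{x}_{i(1)}\bm{x}_{i(1)}'$ has conditional second moment $O(1/n)$, so $(\cnwa)^{-1}-C_{n(11)}^{-1}=O_p(n^{-1/2})$ a.s.\ $P_D$, while $\mathbb{E}_{P_D}\Vert v_n\Vert^4=O(1)$ gives $\Vert v_n\Vert=o(n^{1/4+\delta})$ a.s.\ $P_D$ for every $\delta>0$ (Markov plus Borel--Cantelli), whence $[(\cnwa)^{-1}-C_{n(11)}^{-1}]v_n\CONV{c.p.}\bm{0}$ a.s.\ $P_D$. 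Combining the three statements finishes the argument.

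I expect the main obstacle to be the conditional CLT \emph{along almost every data path}: the summands $(W_i-1)\epsilon_i\bm{x}_{i(1)}$ form a triangular array whose conditional law is driven by the realized $\bm{\epsilon}$, and with only a fourth moment on $\bm{\epsilon}$ the conditional Lindeberg condition has to be verified by hand --- typically via $\max_{i\le n}\epsilon_i^2=o(n)$ a.s.\ $P_D$ (a consequence of $\mathbb{E}(\epsilon^4)<\infty$) together with $\mathbb{E}[(W-1)^2\mathbbm{1}(|W-1|>t)]\to0$ --- and one must at the same time extract from these same moment assumptions both the $n^{-1/2}$ rate on $\cnwa-C_{n(11)}$ and the polynomial a.s.\ growth bound on $\Vert v_n\Vert$ needed to neutralize the plug-in remainder. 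The two strong laws ($n^{-1}\sum_i\epsilon_i^2\bm{x}_{i(1)}\bm{x}_{i(1)}'\to\sigma^2_{\epsilon}C_{11}$ and $\cnwa\to C_{11}$) are routine under~(\ref{assume_4thmoment}), (\ref{assume_boundedX}), $C_{n(11)}\to C_{11}$ and the positivity and moment conditions on $F_W$, so the delicate part is packaging the CLT and the remainder estimate cleanly within the conditional-probability framework.
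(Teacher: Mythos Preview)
Your proposal is correct and follows essentially the same architecture as the paper: invoke Theorem~\ref{thm_Model_Select} for $P(\widehat{S}_n^w=S_0\mid\mathcal{F}_n)\to1$, establish $\widehat{S}_n=S_0$ eventually a.s.\ $P_D$ for the unweighted procedure, and on the joint good event write exactly the decomposition
\[
\sqrt{n}\bigl(\bnwa-\widehat{\bm{\beta}}_{n(1)}^{LAS+LS}\bigr)
=\bigl(\cnwa\bigr)^{-1}\tfrac{1}{\sqrt{n}}X_{(1)}'(D_n-I_n)\bm{\epsilon}
-\bigl[C_{n(11)}^{-1}-\bigl(\cnwa\bigr)^{-1}\bigr]\tfrac{1}{\sqrt{n}}X_{(1)}'\bm{\epsilon},
\]
then apply a conditional Lindeberg CLT to the first piece and kill the second.

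Two minor differences are worth noting. For the remainder $[(\cnwa)^{-1}-C_{n(11)}^{-1}]v_n$, the paper uses the Sherman--Morrison--Woodbury identity together with the Kronecker-type Lemma~\ref{lem_ASconv} to obtain $n^{-(1-c)}X_{(1)}'(D_n-I_n)X_{(1)}\to\bm{0}$ a.s.\ and $n^{-(1/2+c)}X_{(1)}'\bm{\epsilon}\to\bm{0}$ a.s.\ $P_D$ for any $0<c<1/2$; your route via $O_p(n^{-1/2})$ on the matrix difference and an a.s.\ polynomial bound on $\Vert v_n\Vert$ from Markov plus Borel--Cantelli is an equally valid alternative. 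For the unweighted selection step, be careful with the phrase ``summable error probabilities'': under only a fourth-moment assumption on $\bm{\epsilon}$ and with $p_n$ growing, summability of $P_D(B_n^c)$ is not immediate from a fourth-moment Markov bound across all admissible $(c_1,c_2,c_3)$. The paper's Lemma~\ref{lem_asPD_LASLS} avoids this by showing directly, via Lemma~\ref{lem_ASconv}, that the defining inequalities of $A_n$ and $B_n$ hold for all large $n$ on a full-$P_D$-measure set; that path is cleaner here than Borel--Cantelli.
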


Theorem \ref{thm_cond_oracle} highlights the improvement brought about by the extended random-weighting framework. With a common regularization parameter $\lambda_n$ (and all regularity conditions that apply), the two-step random-weighting samples  attain conditional model selection consistency and achieve conditional asymptotic normality (by centering at the unweighted two-step LASSO+LS estimator) on the true support $S_0$ under growing $p_n$ setting.   

We conclude this section by establishing that the random-weighting samples from the two-step procedure also achieve the conditional consistency property under growing $p_n$ setting. This could be viewed as an improvement to the result that we have in Theorem \ref{thm_low_Consistency}(a) which applies to fixed dimensional setting only. 

\begin{thm} \label{thm_high_Consistency}
	\textbf{(Conditional Consistency)} Adopt all regularity assumptions as stated in Theorem \ref{thm_Model_Select} (including assumptions about the different rates of $\lambda_n$ and $p_n$ for weighting schemes (\ref{eq.NoWeightPenalty}), (\ref{eq.Bnw.WeightPenalty}) and (\ref{eq.Bnw.WeightPenalty2})). Let $\bnw$ be the two-step random-weighting samples defined in (\ref{eq:second_step}). Then
	$$
	\left\Vert 
		\bnw - \bm{\beta}_0 
	\right\Vert_2
	\CONV{c.p.} 0
	\quad a.s. \,\, P_D. 
	$$  
\end{thm}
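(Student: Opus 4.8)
The plan is to localise everything to the true support $S_0$, exploiting that the first-step selection is consistent and that the active block has fixed dimension $q$. Since the first-step optimization~(\ref{eq:first_step}) is literally the one-step random-weighting objective~(\ref{eq.Bnw.setup}), Theorem~\ref{thm_Model_Select} applies verbatim under the assumptions carried over here: the one-step solution is sign-consistent with conditional probability tending to one, and sign consistency forces its support to equal $S_0$, so $P(\widehat{S}_n^w = S_0 \mid \mathcal{F}_n)\to 1$ a.s.\ $P_D$. On the event $\{\widehat{S}_n^w = S_0\}$ we have $\widehat{\bm{\beta}}^w_{n(2)}=\bm{0}=\bm{\beta}_{0(2)}$ and $\widehat{\bm{\beta}}^w_{n,\widehat{S}_n^w}=\bnwa$, hence $\Vert\bnw-\bm{\beta}_0\Vert_2=\Vert\bnwa-\bm{\beta}_{0(1)}\Vert_2$; consequently, for any $\delta>0$,
\[ P\big(\Vert\bnw-\bm{\beta}_0\Vert_2>\delta \mid \mathcal{F}_n\big)\ \le\ P\big(\Vert\bnwa-\bm{\beta}_{0(1)}\Vert_2>\delta,\ \widehat{S}_n^w=S_0 \mid \mathcal{F}_n\big)\ +\ P\big(\widehat{S}_n^w\ne S_0\mid\mathcal{F}_n\big), \]
and the last term vanishes a.s.\ $P_D$, so only the first must be controlled.

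On $\{\widehat{S}_n^w = S_0\}$ the second step has the closed form $\bnwa = (X_{(1)}'D_nX_{(1)})^{-1}X_{(1)}'D_nY$, and substituting $Y = X_{(1)}\bm{\beta}_{0(1)}+\bm{\epsilon}$ (using $X_{(2)}\bm{\beta}_{0(2)}=\bm{0}$) yields the identity $\bnwa-\bm{\beta}_{0(1)} = (\cnwa)^{-1}\tfrac{1}{\sqrt{n}}\znwa$. It therefore suffices to establish, both conditionally in probability a.s.\ $P_D$, that (i) $\Vert(\cnwa)^{-1}\Vert_2 = O_p(1)$ and (ii) $\tfrac{1}{\sqrt{n}}\znwa=\tfrac{1}{n} X_{(1)}'D_n\bm{\epsilon}\CONV{c.p.}\bm{0}$. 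For (i): the $(j,k)$ entry of $\cnwa$ is $\tfrac{1}{n}\sum_{i=1}^n W_ix_{ij}x_{ik}$, with conditional mean $(\mu_W C_{n(11)})_{jk}$ and conditional variance at most $\sigma_W^2M_1^4/n$ (by~(\ref{assume_boundedX}) and $\mathbb{E}W^4<\infty$), so a conditional Chebyshev bound together with the fixedness of $q$ gives $\Vert\cnwa-\mu_W C_{n(11)}\Vert_2\CONV{c.p.}0$ a.s.\ $P_D$; since~(\ref{assume_X'X_11}) forces $\lambda_{\min}(\mu_W C_{n(11)})\ge \mu_W M_2>0$ for all $n$, Weyl's inequality yields $\lambda_{\min}(\cnwa)\ge \mu_W M_2/2$ on an event $E_n$ of conditional probability tending to one, whence $\Vert(\cnwa)^{-1}\Vert_2\le 2/(\mu_W M_2)$ on $E_n$.

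For (ii): the $j$-th entry of $\tfrac{1}{n} X_{(1)}'D_n\bm{\epsilon}$ is $\tfrac{1}{n}\sum_{i=1}^n W_ix_{ij}\epsilon_i$, with conditional mean $\tfrac{\mu_W}{n}\sum_{i=1}^n x_{ij}\epsilon_i$ and conditional variance at most $\tfrac{\sigma_W^2M_1^2}{n}\cdot\tfrac{1}{n}\sum_{i=1}^n\epsilon_i^2$; under~(\ref{assume_4thmoment}) the strong law gives $\tfrac{1}{n}\sum_{i=1}^n\epsilon_i^2\to\sigma_\epsilon^2$ a.s.\ $P_D$, so this conditional variance is $O(1/n)$ a.s.\ $P_D$, while $\tfrac{1}{n}\sum_{i=1}^n x_{ij}\epsilon_i\to0$ a.s.\ $P_D$ by Kolmogorov's criterion for independent summands of uniformly bounded variance. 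A conditional Chebyshev bound then gives $\tfrac{1}{n}\sum_{i=1}^n W_ix_{ij}\epsilon_i\CONV{c.p.}0$ a.s.\ $P_D$, and fixedness of $q$ upgrades this to $\tfrac{1}{\sqrt{n}}\znwa\CONV{c.p.}\bm{0}$ a.s.\ $P_D$. To assemble: on $E_n\cap\{\widehat{S}_n^w=S_0\}$ one has $\Vert\bnwa-\bm{\beta}_{0(1)}\Vert_2\le \tfrac{2}{\mu_W M_2}\Vert\tfrac{1}{\sqrt{n}}\znwa\Vert_2$, so by (ii) the conditional probability of $\{\Vert\bnwa-\bm{\beta}_{0(1)}\Vert_2>\delta\}\cap E_n\cap\{\widehat{S}_n^w=S_0\}$ tends to $0$ a.s.\ $P_D$; adding $P(E_n^c\mid\mathcal{F}_n)$ and $P(\widehat{S}_n^w\ne S_0\mid\mathcal{F}_n)$, both $o(1)$ a.s.\ $P_D$, the right-hand side of the displayed bound tends to $0$ a.s.\ $P_D$, which is the claim.

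The step I expect to be most delicate is the joint bookkeeping of the two nested sources of randomness: every conclusion must hold in the strong sense ``$\CONV{c.p.}$ a.s.\ $P_D$'', so each weight-driven Chebyshev estimate must be matched with a data-driven strong law ensuring that the relevant conditional moments (here $\tfrac{1}{n}\sum_i\epsilon_i^2$ and $\tfrac{1}{n}\sum_i x_{ij}\epsilon_i$) behave along almost every data sequence. The growing-$p_n$ feature, by contrast, is essentially inert: the fixed-$q$ assumption confines every nontrivial estimate to the $q\times q$ block on $S_0$, and Assumption~(\ref{assume_X'X_11}) supplies exactly the uniform conditioning of that block needed to keep $\Vert(\cnwa)^{-1}\Vert_2$ bounded. (With the additional hypotheses $\mu_W=1$ and $C_{n(11)}\to C_{11}$ of Theorem~\ref{thm_cond_oracle}, the same identity instead yields a $\sqrt{n}$-rate and the result is immediate; arguing directly avoids those extra assumptions.)
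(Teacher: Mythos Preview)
Your proof is correct and follows essentially the same approach as the paper: both condition on $\{\widehat S_n^w=S_0\}$, invoke Theorem~\ref{thm_Model_Select} for selection consistency, write $\bnwa-\bm{\beta}_{0(1)}=(\cnwa)^{-1}\tfrac{1}{n}X_{(1)}'D_n\bm{\epsilon}$, and then control the two factors separately. The only difference is cosmetic---you re-derive the boundedness of $\Vert(\cnwa)^{-1}\Vert_2$ and the vanishing of $\tfrac{1}{n}X_{(1)}'D_n\bm{\epsilon}$ inline via conditional Chebyshev and Weyl's inequality, whereas the paper simply cites Lemmas~\ref{lem_Cnw11inv} and~\ref{lem_Znw1} (the latter with $c=1/2$).
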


Theorem \ref{thm_high_Consistency} indicates a concentration of the conditional distribution of $\bnw$ near $\bm{\beta}_0$ with increasing sample size  given almost any data set. 

\section{Discussion} \label{sec:discuss}

\subsection{Approximate Bayesian Inference}

In fixed dimensional ($p_n$ = $p$) setting where $\bm{\beta}_0$ is not sparse (i.e. $q = p$), Theorems \ref{thm_low_Consistency} and \ref{thm_low_AsympDistn} describe the first order behavior of the conditional distribution of the one-step random-weighting samples $\bnw$. Under typical parametric Bayesian inference for $\bm{\beta}$ in the linear model (\ref{eq.LinearModel}), for any prior measure of $\bm{\beta}$ that is absolutely continuous in a neighborhood of $\bm{\beta}_0$ with a continuous positive density at $\bm{\beta}_0$, the Berstein-von Mises Theorem (e.g., Theorem 10.1 of \citet{vanderVaartbook}) ensures that for every Borel set $A \subset \Theta \subset \mathbb{R}^p$,
$$
P \left[
\sqrt{n} 
\left(
\bm{\beta} -  \bMLE
\right)
\in A \big| \mathcal{F}_n
\right]
\to 
P \left[
Z \in A
\right]
$$   
along almost every sample path, where $Z \sim N ( \bm{0}, \sigma^2_{\epsilon} C^{-1} )$. Hence, based on Theorem \ref{thm_low_AsympDistn} (with centering on $\bMLE$ = $\bLS$), for any $\lambda_n = o(\sqrt{n})$, by drawing random weights from $F_W$ with unitary mean and variance ($\mu_W = \sigma^2_W = 1$), the conditional distribution of the one-step random-weighting samples $\bnw$ converges to the same limit as in the Bernstein-von Mises Theorem, i.e., the conditional distribution of $\bnw$ is the same -- at least up to the first order -- as the posterior distribution of $\bm{\beta}$ under the regime of Bayesian inference. 

Theorem \ref{thm_low_AsympDistn} (with centering on $\bMLE$) highlights an important implication for the choice of $F_W$ in deploying the random-weighting approach to approximate posterior inference. Specifically, non-unitary mean or variance of the random weights would cause the random-weighting samples to converge to a conditional normal distribution with an asymptotic variance that is different from the one guaranteed by the Bernstein-von-Mises Theorem. 

 \citet{Newton&Raftery}'s first-order approximation theory for the random-weighting method relies on some classical regularity assumptions that do not hold in the LASSO setting studied here~(\ref{eq.LassoObj}). 
 The present work therefore extends the range of cases in which random-weighting operates successfully in large samples to achieve approximate Bayesian inference. 
 %Our work proves the affirmative in that setting, assuming the conditions laid out in Theorems \ref{thm_low_Consistency} and \ref{thm_low_AsympDistn}.

Comparison of random weighting and posterior
distribution is less straightforward in cases where $\bm{\beta}_0$ is sparse. \citet{Castillo2015} used a mixture of point masses at zero and continuous distributions as a sparse prior in their full Bayesian procedures for high-dimensional sparse linear regression. For this sparse prior, they showed that the resulting posterior distribution is not approximated by a non-singular normal, but by a random mixture of different dimensional normal distributions.
\textcolor{black}{Whilst we do not have an explicit result on the distributional approximation for $\bnw$ in growing-$p_n$ setting (e.g., Theorem 6 of \citet{Castillo2015}), our Theorem \ref{thm_cond_oracle} ensures that the conditional distribution of $\bnw$ does amass around the true support of $\bm{\beta}$, and on the true support, the random-weighting samples attain asymptotic Gaussian distributional behavior. Theorem 3.4 is therefore comparable to Corollary 2 in \citet{Castillo2015}, although different techniques are deployed; for instance we consider almost sure weak conditional convergence, whereas \citet{Castillo2015} considers sample average total-variation distance convergence, and we have no explicit prior structure.  Yet the basic message of both is that the mass of the posterior distribution, on the one hand, and the random-weighting distribution, on the other, are similarly concentrating on the correct model subset according to the same Gaussian law.} We also acknowledge the fact that these Bayesian models could handle high-dimensional problem where $p_n$ grows nearly exponential with sample size $n$ by using sparse-inducing priors on $\bm{\beta}$. On the other hand, our results require $p_n$ to grow at a polynomial rate of $o(\sqrt{n})$.

%We refer readers to Section 2.4 of \citet{Castillo2015} for more details. On the other hand, our Theorem \ref{thm_cond_oracle} describes the asymptotic Gaussian distributional behavior  of the two-step random-weighting samples $\bnw$ on the true support $S_0$, conditionally upon data. Further,  the conditional probability of random-weighting samples having the same signs as the true model converges to one for almost every data set (Theorem (\ref{thm_Model_Select} ). This result has a similar flavor to the model selection consistency property (conditional on data) of the existing Bayesian procedures  \citep[e.g.,][]{NarisettyHe2014,Castillo2015}.                   

\subsection{Sampling Theory Interpretation}

Though random weighting was motivated from a Bayesian perspective, the two-step random-weighting procedure is a valid bootstrap procedure for \citet{Liu&Yu}'s LASSO+LS estimator $\widehat{\bm{\beta}}_n^{LAS+LS}$ under growing $p_n$ setting. Specifically, using very similar regularity assumptions, \citet{Liu&Yu} showed that their LASSO+LS method results in consistent model selection under $P_D$, and 
$$
\sqrt{n} \left(
\widehat{\bm{\beta}}_{n(1)}^{LAS+LS}
- \bm{\beta}_{0(1)}
\right)
$$
converges to $N \left( \bm{0} \, , \, \sigma^2_\epsilon C_{11}^{-1} \right)$ under $P_D$.  Hence, based on Theorem \ref{thm_cond_oracle}, by fulfilling the appropriate regularity assumptions and drawing random weights from $F_W$ with unitary mean and variance ($\mu_W = \sigma^2_W = 1$), the conditional distribution of the two-step random-weighting samples $\bnw$ converges to the same distributional limit of the LASSO+LS estimator under $P_D$. This enables the two-step random-weighting procedure to produce bootstrap samples that provide valid distributional approximation to the LASSO+LS estimator for inference procedures such as hypothesis testing or constructing confidence regions.

We also point out that by capitalizing on the sub-Gaussian nature of $\bm{\epsilon}$, \citet{Liu&Yu}'s proposed residual bootstrap procedure for their LASSO+LS estimator works under high-dimensional setting where $p_n$ grows nearly exponential with sample size $n$. On the other hand, in this paper, we only require finite fourth moment assumptions for both error term $\bm{\epsilon}$ and random weights $\bm{W}$, and our random-weighting procedure only allows $p_n$ to grow at a polynomial rate of $o(\sqrt{n})$.   

Similarly, under fixed dimensional ($p_n = p$) setting where $\bm{\beta}_0$ is not sparse (i.e. $q = p$), our one-step random-weighting approach in Algorithm \ref{alg:ALG_general} could also be a valid bootstrap procedure for the standard LASSO estimator $\bLAS (\lambda_n)$. Specifically, \citet{Knight&Fu} proved that for $(\lambda_n/ \sqrt{n}) \to \lambda_0 \in [0,\infty)$, 
$$
\sqrt{n} \left(
	\bLAS(\lambda_n) - \bm{\beta}_0
\right)
$$      
converges to the same distributional limit stated in Theorem \ref{thm_low_AsympDistn} under $P_D$. However, for the case where $q < p$, the one-step random-weighting procedure no longer provides valid distributional approximation to $\bLAS (\lambda_n)$, as evident from the Skorokhod argument. This mimics the asymptotic conditional distribution of the LASSO parametric residual bootstrap \citep{Knight&Fu}.      

\section{Numerical Experiments} \label{sec:numerical}

We perform simulation studies and  data analysis using R  \citep{R};  all source code is available at the Github public repository:  \url{https://github.com/wiscstatman/optimizetointegrate/tree/master/Tun}.  

\subsection{Simulation: Part I}

A simulation study of one-step random-weighting procedures (Algorithm \ref{alg:ALG_general}) was previously reported \citep{WBB}, and so here we study performance of the two-step random-weighting procedure (Algorithm \ref{alg:ALG_general_2step}) for all three weighting schemes (\ref{eq.NoWeightPenalty}), (\ref{eq.Bnw.WeightPenalty}) and (\ref{eq.Bnw.WeightPenalty2}) -- denoted RW1, RW2 and RW3 respectively --  in several experimental settings, and compare it with: 
\begin{itemize}
	\item Bayesian LASSO \citep{BayesianLasso}, which can be easily implemented with R package \texttt{monomvn} \citep{monomvn}
	\item parametric residual bootstrap \citep{Knight&Fu}, which is a very common and easily implementable bootstrap procedure in LASSO regression. We denote this method as RB thereafter.  
\end{itemize}

We drew inspiration from \citet{Das2019}, \citet{Liu&Yu} and \citet{WBB} in setting up our simulation schemes. Specifically, we consider 8 simulation settings as tabulated in Table \ref{tabl:simul_setting}. In all  settings, the generative state $\bm{\beta}_0 = (\beta_{0,1}, \cdots, \beta_{0,p})'$ is defined as $\beta_{0,j} = (3/4) + (1/4)j$ for $j = 1, \cdots, q$ and $\beta_{0,j} = 0$ for $j = q+1, \cdots, p$. The predictors $\bm{x}_i$ are drawn from $p$-variate normal distribution with different covariance structures. $\Sigma^{(1)}$ has the following structure
\begin{align} \label{sigmaX_structure}
\Sigma^{(1)}_{i,j} 
= \mathbbm{1}_{\{i = j\}} +
\mathbbm{1}_{\{i \neq j\}} \times 
\left(
	0.3^{|i-j|}
	\mathbbm{1}_{\{i \leq q\}}
	\mathbbm{1}_{\{j \leq q\}}
\right) 
\quad \text{for} \quad
1 \leq i,j \leq 10.
\end{align}   
$\Sigma^{(3)}$ also has the same structure as (\ref{sigmaX_structure}), except that it has larger dimension $p= 50$. Meanwhile, $\Sigma^{(2)}$ has the following structure: for $1 \leq i,j \leq 10$,  
$$
\Sigma^{(2)}_{i,j} 
= \mathbbm{1}_{\{i = j\}} +
\mathbbm{1}_{\{i \neq j\}} \times 
\left[
	0.4
	\mathbbm{1}_{\{i \leq q\}}
	\mathbbm{1}_{\{j \leq q\}}
	+ 0.5
	\left(
	1 - 
	\mathbbm{1}_{\{i \leq q\}}
	\mathbbm{1}_{\{j \leq q\}}
	\right)
\right].
$$
We verify that only simulation settings 5 and 6 violate the strong irrepresentable condition (\ref{assume_StrongIrrepresent}), whereas the other six simulation settings satisfy assumption (\ref{assume_StrongIrrepresent}). By simulating i.i.d. $\epsilon_i$ and $\bm{x}_i$, we generate $y_i = \bm{x}_i \bm{\beta}_0 + \epsilon_i$ for $i = 1, \cdots, n$. 

\begin{table}[h!] 
	\caption{Simulation Settings} \label{tabl:simul_setting}
	\centering
	\begin{tabular}{cccccc}
		\hline
		\\[-.5em]
		Setting & $n$ & $p$ & $q$ & $\epsilon_i$ & $\bm{x}_i \sim N_p(\bm{0},\Sigma)$ \\
		\\[-.5em]
		\hline
		\\[-.5em]
		1 & 100 & 10 & 6 & $N(0,1)$ & $\Sigma = \Sigma^{(1)}$ \\ 
		\\[-.5em]
		2 & 500 & 10 & 6 & $N(0,1)$ & $\Sigma = \Sigma^{(1)}$ \\
		\\[-.5em]
		\hline 
		\\[-.5em]
		3 & 100 & 10 & 6 & $\chi^2_2 - 2$ & $\Sigma = \Sigma^{(1)}$ \\ 
		\\[-.5em]
		4 & 500 & 10 & 6 & $\chi^2_2 - 2$ & $\Sigma = \Sigma^{(1)}$ \\ 
		\\[-.5em]
		\hline
		\\[-.5em]
		5 & 100 & 10 & 6 & $N(0,1)$ & $\Sigma = \Sigma^{(2)}$ \\ 
		\\[-.5em]
		6 & 500 & 10 & 6 & $N(0,1)$ & $\Sigma = \Sigma^{(2)}$ \\ 
		\\[-.5em]
		\hline
		\\[-.5em]
		7 & 100 & 50 & 6 & $N(0,1)$ & $\Sigma = \Sigma^{(3)}$ \\ 
		\\[-.5em]
		8 & 500 & 50 & 6 & $N(0,1)$ & $\Sigma = \Sigma^{(3)}$ \\ 
		\\[-.5em]
		\hline
	\end{tabular}
\end{table}

\begin{figure}[!]
	\centering
	\includegraphics[scale=0.8]{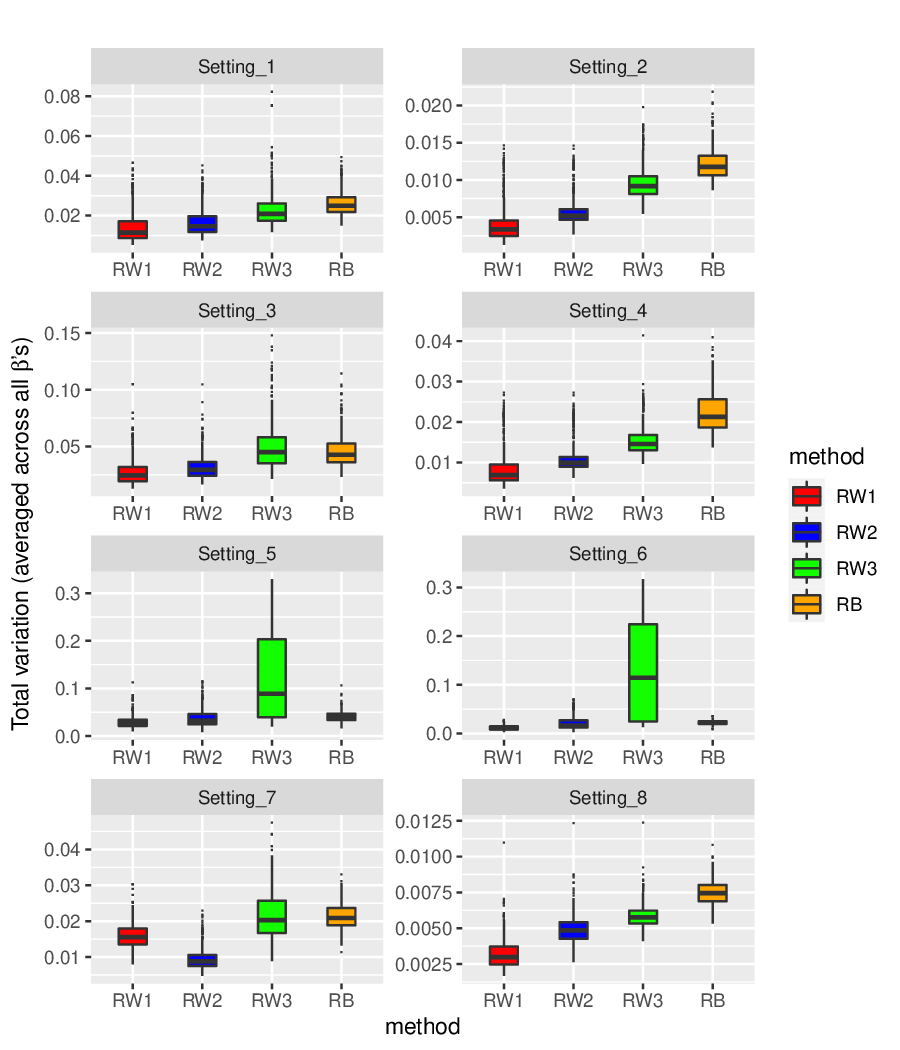}
	\caption{Simulation Part I: Sampling distribution of total variation distance between random-weighting distribution and target posterior (averaged across all $\beta$'s) among $T = 500$ simulated data sets in 8 simulation settings between ecdf of MCMC samples and ecdf of samples from each of the 4 methods: two-step random-weighting approach using weighting schemes (\ref{eq.NoWeightPenalty}) (denoted RW1), (\ref{eq.Bnw.WeightPenalty}) (denoted RW2) and (\ref{eq.Bnw.WeightPenalty2}) (denoted RW3), and LASSO residual bootstrap (denoted RB).}
	\label{fig:simul1_TV}
\end{figure}  

\textbf{Purpose of simulation setup:} The even-numbered simulation settings share the same specifications as their odd-numbered counterparts except with larger sample size $n$ (e.g. Setting 2 versus Setting 1, Setting 4 versus Setting 3, et cetera). Simulation Settings 3 and 4 are used as an example of cases where the error term $\bm{\epsilon}$ is no longer normally distributed, whereas Simulation Settings 5 and 6 are set up to illustrate the situations where the strong irrepresentable condition (\ref{assume_StrongIrrepresent}) is violated. Finally, we increase the dimension $p$ of predictors by five-fold in Settings 7 and 8 to compare performances in higher-dimensional setting.     

For each simulation setting, we generate $T=500$ independent datasets. For each simulated data set, we draw $B=1000$ posterior/bootstrap samples from the 5 aforementioned methods: Bayesian LASSO (BLASSO), two-step random-weighting with schemes (\ref{eq.NoWeightPenalty}), (\ref{eq.Bnw.WeightPenalty}) and (\ref{eq.Bnw.WeightPenalty2}), and residual bootstrap. For the Bayesian LASSO procedure, we specify a 2000 burn-in period. In addition, Bayesian LASSO imposes a noninformative marginal prior on $\sigma^2_\epsilon$, $\pi(\sigma^2_\epsilon) \sim 1/\sigma^2_\epsilon$, and a Jeffrey's prior on $\lambda_n$. To induce sparsity in the MCMC samples of $\bm{\beta}$, the posterior distribution is sampled by a Reversible Jump Markov Chain Monte Carlo (RJMCMC) algorithm \citep{rjmcmc}, with a uniform prior specified on the number of non-zero coefficients to be included in the model. For the three random-weighting schemes, all i.i.d. random weights are drawn from a standard exponential distribution. The regularization parameter $\lambda_n$ is chosen via cross-validation using \citet{Liu&Yu}'s (unweighted) LASSO+LS procedure, and then the same $\lambda_n$ is used to draw the 1000 random-weighting samples according to Algorithm \ref{alg:ALG_general_2step}. We note that the optimization step (\ref{eq:first_step}) can be easily computed using R package \texttt{glmnet} \citep{glmnet}. Meanwhile for residual bootstrap, its regularization parameter $\lambda_n^{\rm RB}$ is chosen via cross-validation using standard LASSO, and values of $\lambda_n^{\rm RB}$ are thereafter fixed for all bootstrap computations on the same dataset. 

\begin{figure}[!]
	\centering
	\includegraphics[scale=0.7]{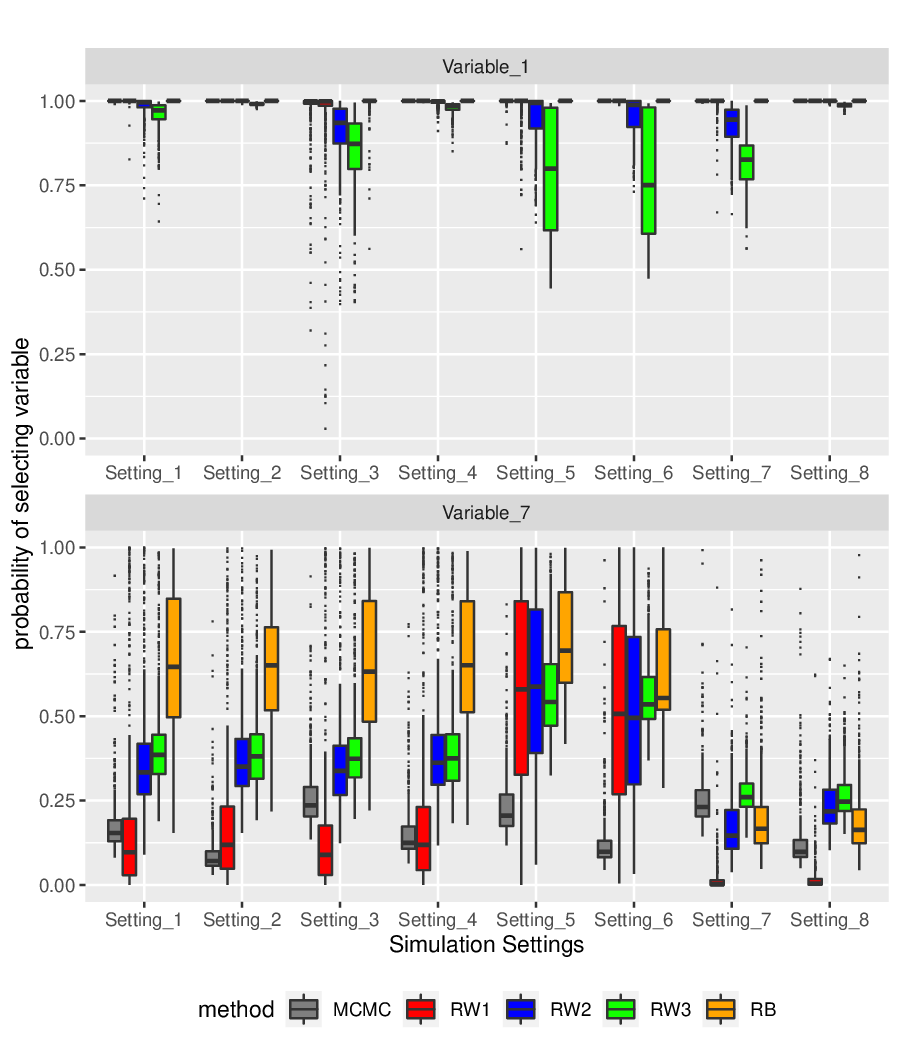}
	\caption{Simulation Part I: Sampling distribution of conditional (on data) probabilities of selecting $\beta_1$ and $\beta_7$ among $T = 500$ simulated data sets in 8 simulation settings by the 5 methods: MCMC via Bayesian LASSO, two-step random-weighting approach using weighting schemes (\ref{eq.NoWeightPenalty}) (denoted RW1), (\ref{eq.Bnw.WeightPenalty}) (denoted RW2) and (\ref{eq.Bnw.WeightPenalty2}) (denoted RW3), and LASSO residual bootstrap (denoted RB).}
	\label{fig:simul1_prob_select}
\end{figure}  

For each of the five aforementioned methods, we obtain $\{ \widehat{\beta}_j^{(b,t)} \}$ that represents the $j^{th}$ component of sampled/bootstrapped $\bm{\beta}$ in the $b^{th}$ iteration for the $t^{th}$ simulated data set, where $j = 1, \cdots, p$, and $b = 1, \cdots, B$, and $t = 1, \cdots, T$. To be precise, we have 
$$
\left\{
\widehat{\beta}_{j (\rm MCMC)}^{(b,t)}, 
\widehat{\beta}_{j (\rm RW1)}^{(b,t)}, 
\widehat{\beta}_{j (\rm RW2)}^{(b,t)}, 
\widehat{\beta}_{j (\rm RW3)}^{(b,t)}, 
\widehat{\beta}_{j (\rm RB)}^{(b,t)}
\right\}
$$
that correspond to the sampled/bootstrapped $\bm{\beta}$'s of the five aforementioned methods, but for brevity we drop the subscripts whenever it does not cause any confusion, since each method is subject to the same performance evaluation. We then assess the performances of each of these five methods -- BLASSO, RW1, RW2, RW3 and RB -- in each of the 8 simulation settings using the following comparison criteria:
\begin{itemize}
	\item Estimation MSE of coefficients. Specifically, for each simulated data set $t = 1, \cdots, T$, we keep track of 
	$$
	{\rm MSE}^{(t)} = \dfrac{1}{B} \sum_{b=1}^B 
	\left\Vert
	\bm{Y}^{(t)} - X^{(t)} \widehat{\bm{\beta}}^{(b,t)}
	\right\Vert_2^2.
	$$
	\item Out-of-sample prediction MSE (abbreviated as MSPE thereafter), where test sets are of the same size as the corresponding training sets. Similarly, for each simulated data set $t = 1, \cdots, T$, we keep track of 
	$$
	{\rm MSPE}^{(t)} = \dfrac{1}{B} \sum_{b=1}^B 
	\left\Vert
	\bm{Y}^{(t)}_{\rm test} - X^{(t)}_{\rm test} \widehat{\bm{\beta}}^{(b,t)}
	\right\Vert_2^2.
	$$ 
	\item Conditional (on data) probability of selecting the $j^{th}$ variable where $j = 1, \cdots, p$. Specifically, for each simulated data set $t = 1, \cdots, T$, we keep track of 
	$$
	\hat{p}_j^{(t)} :=  \dfrac{1}{B}
	\left\vert
	\left\{
	b: \widehat{\beta}_j^{(b,t)} \neq 0
	\right\}
	\right\vert.
	$$ 
	We note that the computation of $\hat{p}_j^{(t)}$ is sensible because all the five methods (including BLASSO with RJMCMC implementation) induce sparsity in the sampled/bootstrapped $\bm{\beta}$'s.
	\item Coverage and average width of the two-sided 90\% credible/confidence interval (CI) for the $j^{th}$ variable  where $j = 1, \cdots, p$. Specifically, denote $\hat{r}_{0.05,j}^{(t)}$ and $\hat{r}_{0.95,j}^{(t)}$ as the $5^{th}$ percentile and $95^{th}$ percentile of the empirical distribution of $\{ \widehat{\beta}_j^{(b,t)} \}_{1 \leq b \leq B}$. Then, the average width (across $T=500$ simulated data sets) of the two-sided 90\% CI for the $j^{th}$ variable is computed as
	$$
	\hat{l}_j :=
	\dfrac{1}{T} \sum_{t=1}^T 
	\left(
	\hat{r}_{0.95,j}^{(t)} - \hat{r}_{0.05,j}^{(t)}
	\right),
	$$
	and its corresponding empirical coverage is calculated as
	$$
	\hat{q}_j :=
	\dfrac{1}{T} %\sum_{t=1}^T 
	\left\vert
	\left\{
	t: \hat{r}_{0.05,j}^{(t)} 
	\leq \beta_{0,j} \leq
	\hat{r}_{0.95,j}^{(t)}
	\right\}
	\right\vert.
	$$
\end{itemize}  

In addition, we obtain the total variation distance between empirical cumulative distribution function (ecdf) of MCMC samples and ecdf of samples produced by one of the other four methods -- the two-step random-weighting (RW1, RW2 and RW3) and residual bootstrap (RB).  The intent is to assess how well the random-weighting methods approximate the MCMC-approximated posterior. Specifically, for the $j^{th}$ variable in the $t^{th}$ simulated data set, let 
$$
\hat{F}_{j(MCMC)}^{(t)} = \text{ ecdf of }
\left\{ \widehat{\beta}_{j(MCMC)}^{(b,t)} \right\}_{1 \leq b \leq B},
$$
and let $\hat{F}_{j(.)}^{(t)}$ be the ecdf of samples produced by one of the other 4 methods: RW1, RW2, RW3 or RB. Note that the ecdf's are easily obtained via the function \texttt{ecdf} in R \texttt{base} package \citep{R}. Then, for each of the 4 methods, we keep track of the total variation (averaged across all $p$ variables) for each simulated data set $t = 1, \cdots, T$:
$$
{\rm TV}^{(t)} = \dfrac{1}{p} \sum_{j=1}^p
\dfrac{1}{2}
\sum_{\omega \in \Omega}
\left\vert
\hat{F}_{j(MCMC)}^{(t)} (\omega)
- \hat{F}_{j(.)}^{(t)} (\omega)
\right\vert,
$$     
where the inner summation is approximated using a trapezoidal rule with an interval width of 0.001.

\begin{table}[ht]
	\caption{Empirical coverage $\hat{q}_j$ and average width $\hat{l}_j$ (in parentheses) of the two-sided 90\% CI for the first 10 variables in Simulation Setting 8, using the five approaches: MCMC via BLASSO, two-step random-weighting approach using weighting schemes (\ref{eq.NoWeightPenalty}) (denoted RW1), (\ref{eq.Bnw.WeightPenalty}) (denoted RW2) and (\ref{eq.Bnw.WeightPenalty2}) (denoted RW3), and LASSO residual bootstrap (denoted RB).}
	\centering
	\begin{tabular}{r|ccccc}
		\hline
		$\beta_{0,j}$ & MCMC & RW1 & RW2 & RW3 & RB \\ 
		\hline
		\multirow{2}{*}{1.00} & 0.918 & 0.878 & 0.882 & 0.906 & 0.344 \\ 
		& (0.161) & (0.152) & (0.152) & (0.16) & (0.153) \\ 
		&&&&&\\[-.6em]
		\multirow{2}{*}{1.25} & 0.908 & 0.88 & 0.876 & 0.904 & 0.588 \\ 
		& (0.169) & (0.158) & (0.159) & (0.168) & (0.16) \\ 
		&&&&&\\[-.6em]
		\multirow{2}{*}{1.50} & 0.894 & 0.864 & 0.868 & 0.886 & 0.578 \\ 
		& (0.168) & (0.158) & (0.158) & (0.165) & (0.16) \\
		&&&&&\\[-.6em] 
		\multirow{2}{*}{1.75} & 0.918 & 0.886 & 0.892 & 0.9 & 0.596 \\ 
		 & (0.168) & (0.159) & (0.159) & (0.165) & (0.16) \\ 
		 &&&&&\\[-.6em]
		\multirow{2}{*}{2.00} & 0.922 & 0.894 & 0.882 & 0.898 & 0.556 \\ 
		& (0.168) & (0.159) & (0.159) & (0.164) & (0.16) \\ 
		&&&&&\\[-.6em]
		\multirow{2}{*}{2.25} & 0.886 & 0.866 & 0.872 & 0.874 & 0.35 \\ 
		& (0.161) & (0.151) & (0.152) & (0.157) & (0.153) \\ 
		&&&&&\\[-.6em]
		\multirow{2}{*}{0.00} & 1 & 1 & 1 & 1 & 0.998 \\ 
		& (0.04) & (0.016) & (0.096) & (0.099) & (0.023) \\ 
		&&&&&\\[-.6em]
		\multirow{2}{*}{0.00} & 1 & 0.998 & 1 & 1 & 1 \\ 
		& (0.041) & (0.018) & (0.097) & (0.1) & (0.024) \\ 
		&&&&&\\[-.6em]
		\multirow{2}{*}{0.00} & 1 & 1 & 1 & 1 & 1 \\ 
		& (0.04) & (0.015) & (0.097) & (0.099) & (0.023) \\ 
		&&&&&\\[-.6em]
		\multirow{2}{*}{0.00} & 0.998 & 1 & 1 & 1 & 1 \\ 
		& (0.04) & (0.015) & (0.097) & (0.1) & (0.023) \\ 
		\hline
	\end{tabular}
\label{tabl:coverage_width}
\end{table}

Firstly, as expected, performance improves with larger sample size $n$, such as smaller MSE's, smaller MSPE's, higher coverage probabilities and narrower CI's.  Secondly, we note that the MSE's and MSPE's are very similar among all the five methods in all 8 simulation settings (figures not shown). However, the two-step random-weighting approach, especially weighting schemes (\ref{eq.NoWeightPenalty}) and (\ref{eq.Bnw.WeightPenalty}) -- denoted RW1 and RW2, outperforms the LASSO residual bootstrap (denoted RB) in all other performance measures.

Figure \ref{fig:simul1_TV} displays the sampling distribution of total variation distance between random-weighting distribution and target posterior (averaged across all $\beta$'s),  $\{TV^{(t)}\}_{1 \leq t \leq T}$, among the $T=500$ simulated data sets in the 8 simulation settings for the 4 methods: RW1, RW2, RW3 and RB. Generally, larger sample size $n$ leads to smaller total variations. Moreover, in all simulation settings, RW1 and RW2 have smaller total variations than that of RB, which illustrates the viability of the two-step random-weighting samples to approximate posterior inference. RW3 has larger total variations especially in Settings 5 and 6, where the strong irrepresentable condition (\ref{assume_StrongIrrepresent}) is violated. This illustrates the need for restrictive regularity assumption for weighting scheme (\ref{eq.Bnw.WeightPenalty2}) that we highlighted in part (c) of Theorem \ref{thm_Model_Select}.     

In Figure \ref{fig:simul1_prob_select}, we show the sampling distributions of $\big\{ \hat{p}_1^{(t)} \big\}_{1 \leq t \leq T}$ and $\big\{ \hat{p}_7^{(t)} \big\}_{1 \leq t \leq T}$ among the $T=500$ simulated data sets in the 8 simulation settings for all the five methods. Recall that the first variable corresponds to $\beta_{0,1} = 1$ and the seventh variable corresponds to $\beta_{0,7} = 0$. Sampling distribution of conditional (on data) probabilities of selecting other relevant predictors is similar to that of the first variable, and sampling distribution of conditional  probabilities of selecting other irrelevant predictors is similar to that of the seventh variable. In all 8 simulation settings, all methods almost always select the first variable, except for RW3 in Simulation Settings 5 and 6, due to the violation of condition (\ref{assume_StrongIrrepresent}). However, similar to MCMC, the two-step random-weighting schemes (especially RW1) have lower conditional  probabilities of selecting the seventh variable (which is an irrelevant predictor) than the LASSO RB. This illustrates that the two-step random-weighting approach is more capable of discarding irrelevant variables as compared to LASSO residual bootstrap. Only in Simulation Settings 5 and 6 do we see similarly high conditional  probabilities of selecting the seventh variable among RW1, RW2, RW3 and RB, due to violation of condition (\ref{assume_StrongIrrepresent}).

Empirical coverage and average width of the two-sided 90\% CI's for relevant predictors (i.e. $\beta_{0,j} \neq 0$)  paint a similar story. For illustration, the empirical coverage $\hat{q}_j$ and average width $\hat{l}_j$ (in parentheses) of the two-sided 90\% CI for the first 10 variables, i.e. for $j= 1, \cdots, 10$, in Simulation Setting 8, are tabulated in Table \ref{tabl:coverage_width}. Generally, average widths of CI's are similar among all five methods in all but two simulation settings, where RW3 has much wider 90\% CI's in Simulation Settings 5 and 6. Interestingly, empirical coverage for MCMC and random-weighting samples is similar and close to 90\% , but the LASSO residual bootstrap samples always have the lowest empirical coverage, especially in Simulation Settings 7 and 8, where their empirical coverage is only around 30\% - 40\%.

\subsection{Simulation: Part II}

On a separate calculation, we use Simulation Setting 2 (see Table \ref{tabl:simul_setting}) to illustrate that there are computational advantages in using $\lambda_n$ chosen via cross-validation on the unweighted LASSO+LS procedure \citep{Liu&Yu}, instead of cross-validation on the standard LASSO method, for obtaining the two-step random-weighting samples. For brevity, we shall refer to the former as the two-step cross validation, and the latter as the one-step cross validation. 

\begin{figure}[!]
	\centering
	\includegraphics[scale=0.6]{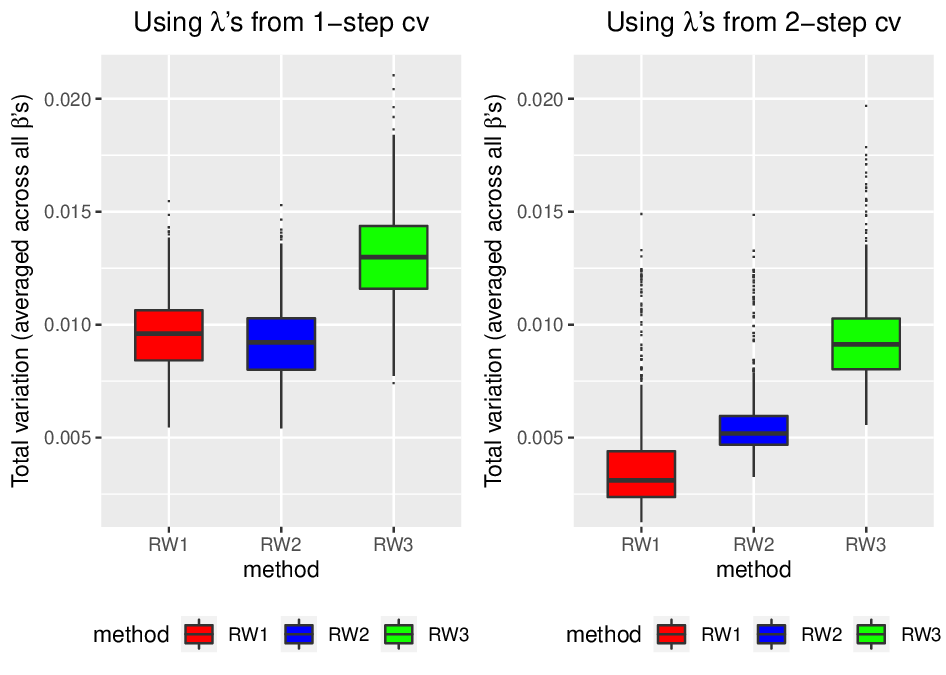}
	\caption{Simulation Part II: Sampling distribution of total variation distance between random-weighting distribution and 
	target posterior (averaged across all $\beta$'s) among $T = 500$ simulated data sets in Simulation Setting 2 between ecdf of MCMC samples and ecdf of the two-step random-weighting samples, computed with $\lambda_n$ obtained via 1-step cross validation or 2-step cross validation, using weighting schemes (\ref{eq.NoWeightPenalty}) (\ref{eq.Bnw.WeightPenalty}) and (\ref{eq.Bnw.WeightPenalty2}) (denoted RW1, RW2 and RW3 respectively).}
	\label{fig:simul2_TV}
\end{figure}  

\begin{figure}[!]
	\centering
	\includegraphics[scale=0.6]{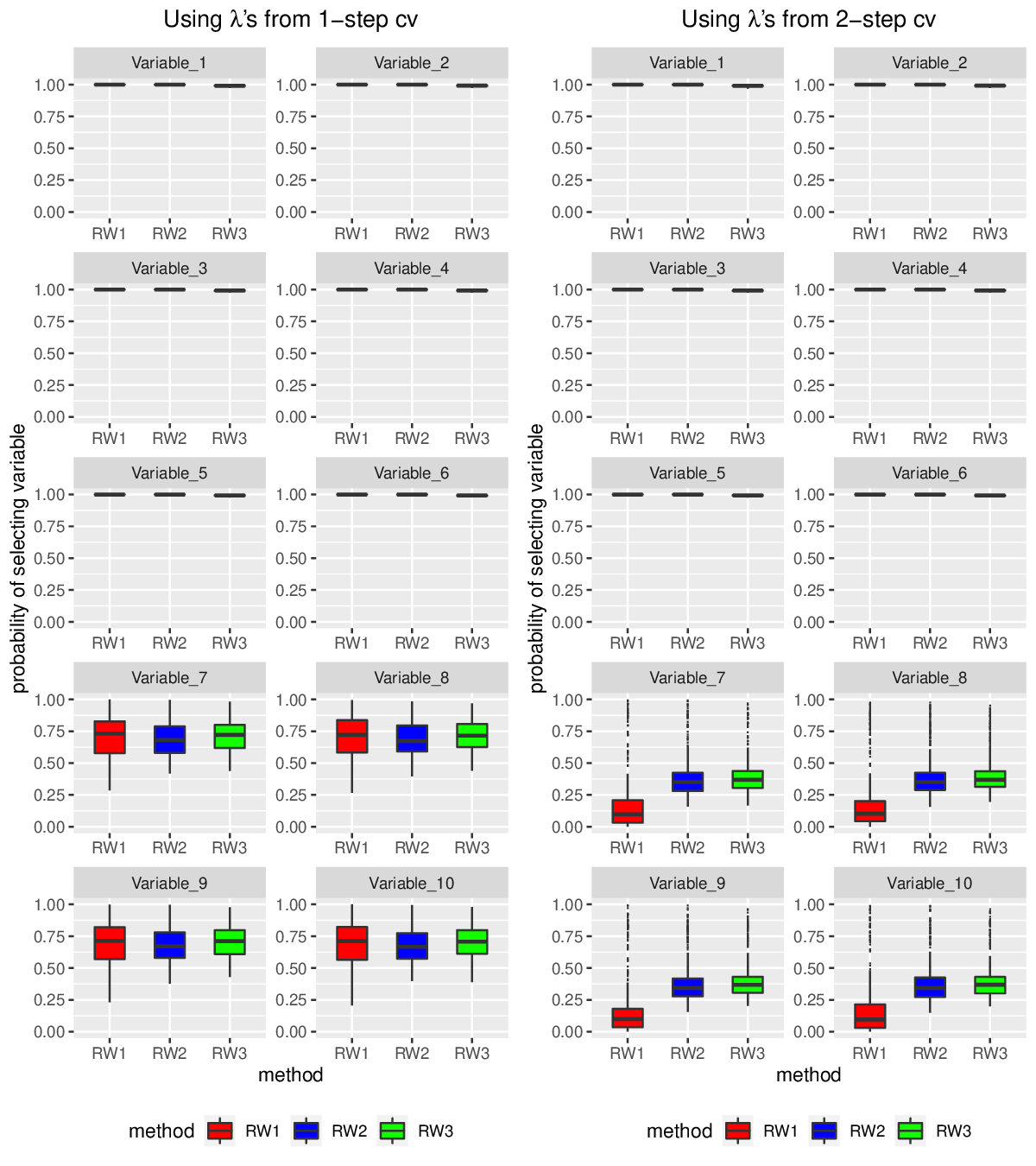}
	\caption{Simulation Part II: Sampling distribution of conditional (on data) probabilities of selecting $\bm{\beta}$'s among $T = 500$ simulated data sets in Simulation Setting 2 by the two-step random-weighting approach, computed with $\lambda_n$ obtained via 1-step cross validation or 2-step cross validation, using weighting schemes (\ref{eq.NoWeightPenalty}) (\ref{eq.Bnw.WeightPenalty}) and (\ref{eq.Bnw.WeightPenalty2}) (denoted RW1, RW2 and RW3 respectively).}
	\label{fig:simul2_prob_select}
\end{figure}  

Specifically, for each of the $T=500$ simulated data sets under Simulation Setting 2, we repeat the two-step random-weighting calculations outlined in Algorithm \ref{alg:ALG_general_2step}, but with $\lambda_n$ chosen via cross-validation on the standard LASSO method. This is in fact the same regularization parameter $\lambda_n^{\rm RB}$ that we used to generate the residual bootstrap samples. 

We find from the simulation results that the two-step cross-validation leads to larger $\lambda_n$ as compared to the one-step cross-validation. This ties back to the conflicting demands of the standard LASSO method on $\lambda_n$: smaller $\lambda_n$ allows more variables into the model to reduce estimation MSE; and larger $\lambda_n$ enables more regularization to discard irrelevant variables. On the other hand, using a two-step LASSO+LS procedure frees up these conflicting constraints on $\lambda_n$. 

For these two sets of random-weighting samples, we repeat the same calculations of performance measures as we did in Part I of our simulation studies. We found out that MSE's, MSPE's and empirical coverage of the two-sided 90\% CI are very similar between these two sets of random-weighting samples. However, from Figure \ref{fig:simul2_TV}, we see that larger regularization $\lambda_n$ based on the two-step cross validation leads to lower total variation distance between random-weighting distribution and target posterior, which indicates better approximation to the posterior samples. Meanwhile, in Figure \ref{fig:simul2_prob_select}, the random-weighting samples computed with the larger $\lambda_n$ have much lower conditional  probabilities of selecting irrelevant variables (variables 7 -- 10), whilst almost always selecting relevant predictors (variables 1 -- 6). This also helps to illustrate the fact that the two-step random-weighting approach is able to utilize more regularization to discard irrelevant predictors while maintaining estimation accuracy.   

\subsection{Benchmark data example}

To further illustrate the two-step random-weighting methodology, we apply it to the often-analyzed Boston Housing data set, which is available in the R package \texttt{MASS} \citep{MASS}. Data from $n = 506$ housing prices in the suburbs of Boston are available, with response the median value of owner-occupied homes in \$1000's, and with 13 variables ($p= 13$) listed in Table \ref{tabl:housing_variable}.

\begin{table}[h!] 
	\caption{Variables in Boston Housing Data Set} \label{tabl:housing_variable}
	\centering
	\begin{tabular}{l|l}
		\hline
		Abbreviation & Variable \\
		\hline
		crim & per capita crime rate by town \\
		zn & proportion of residential land zoned for lots over 25,000 sq.ft. \\
		indus & proportion of non-retail business acres per town \\
		chas & Charles River dummy variable (= 1 if tract bounds river; 0 otherwise) \\
		nox & nitrogen oxides concentration (parts per 10 million) \\
		rm & average number of rooms per dwelling \\
		age & proportion of owner-occupied units built prior to 1940 \\
		dis & weighted mean of distances to five Boston employment centers \\
		rad & index of accessibility to radial highways \\
		tax & full-value property-tax rate per \$10,000 \\
		ptratio & pupil-teacher ratio by town \\
		Black & proportion of Black residents by town \\
		lstat & lower status of the population (percent) \\
		\hline
\end{tabular}
\end{table}

\begin{figure}[!]
	\centering
	\includegraphics[scale=0.72]{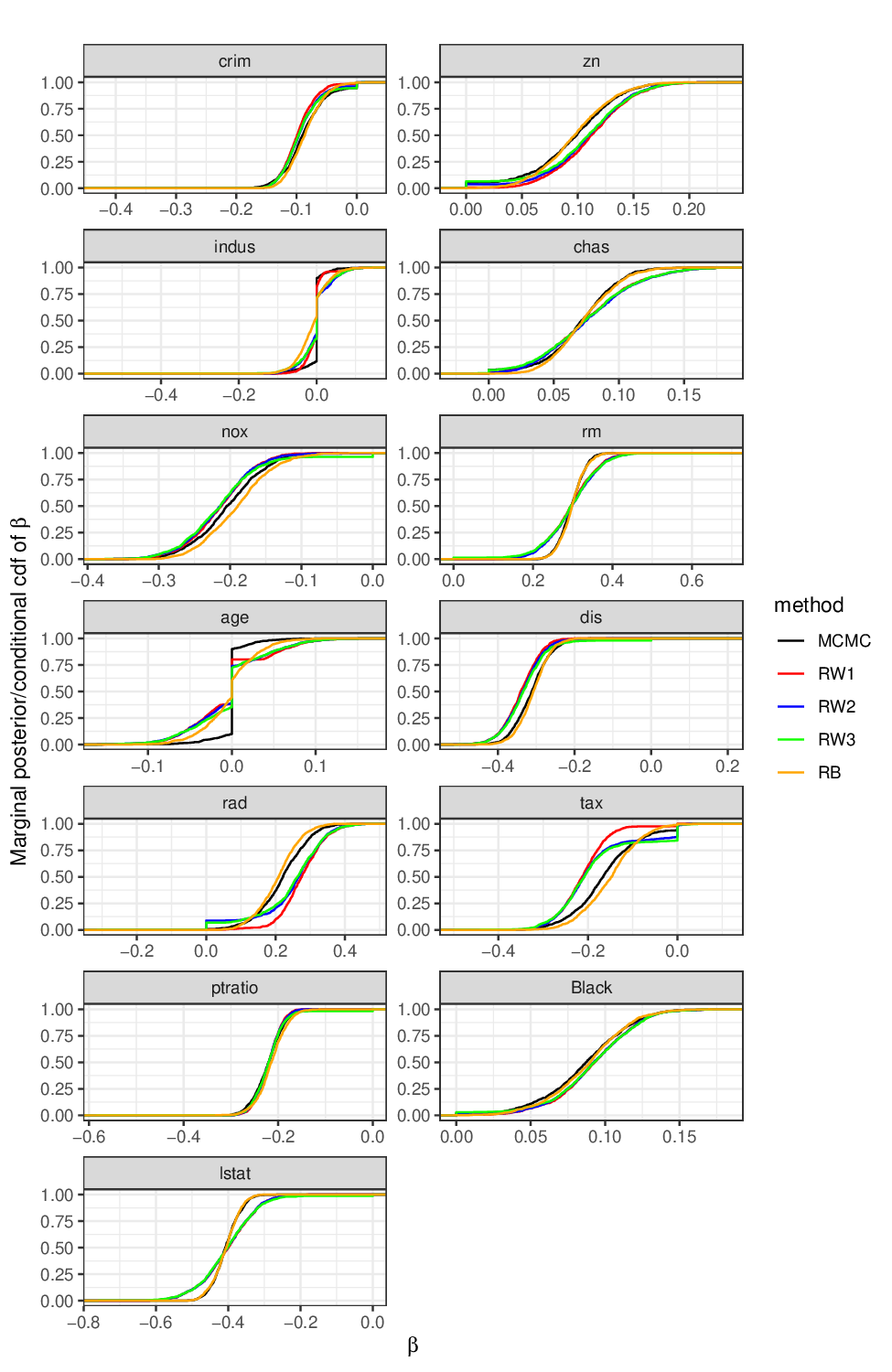}
	\caption{Boston Housing data example: Marginal posterior distribution plots for $\bm{\beta} = (\beta_1, \cdots, \beta_{13})'$ sampled from the 4 methods -- MCMC via Bayesian LASSO, and the two-step random-weighting approach using weighting schemes (\ref{eq.NoWeightPenalty}) (\ref{eq.Bnw.WeightPenalty}) and (\ref{eq.Bnw.WeightPenalty2}) (denoted RW1, RW2 and RW3 respectively).}
	\label{fig:boston_beta_cdf}
\end{figure}

Again, we apply Bayesian LASSO, the random-weighting approach for all three weighting schemes (\ref{eq.NoWeightPenalty}), (\ref{eq.Bnw.WeightPenalty}) and (\ref{eq.Bnw.WeightPenalty2}) according to Algorithm \ref{alg:ALG_general_2step}, as well as the parametric residual bootstrap method \citep{Knight&Fu} with $B=1000$.  We use the same prior specifications as well as RJMCMC implementation for Bayesian LASSO as we did in our simulation studies. For the random-weighting approach, random weights are drawn from a standard exponential distribution, and the regularization parameter is chosen with cross-validation using \citet{Liu&Yu}'s unweighted LASSO+LS procedure (i.e. 2-step cross-validation). Meanwhile, for residual bootstrap, its regularization parameter is chosen via cross-validation using standard LASSO. 

Figure \ref{fig:boston_beta_cdf} shows the marginal posterior distributions of $\bm{\beta}$'s sampled from MCMC as well as the marginal conditional (on data) distributions of $\bm{\beta}$'s obtained from the random-weighting methods and the parametric residual bootstrap. For most of the coefficients, there is very good agreement among the methods. One notable feature is that the parametric residual bootstrap approach induces the least sparsity among all five methods for variables \texttt{indus} and \texttt{age}. In addition, Bayesian LASSO appears to introduce slightly more sparsity than the random-weighting schemes for the variable \texttt{age}. Besides that, random-weighting with different penalty weights (\ref{eq.Bnw.WeightPenalty2}) appears to produce lower outliers for variables \texttt{crim}, \texttt{indus} and \texttt{ptratio}.\appendix

\section{} \label{sec_Appendix}

We present the proofs for all the theorems, proposition and corollaries in this paper. Many subsequent proofs rely on this following result.

\allowdisplaybreaks

\begin{lem} \label{lem_ASconv}
	Let $U_1, U_2, \cdots$ be any i.i.d. random variables with $\mathbb{E} (U_i) = 0$ and $\mathbb{E} [(U_i)^2] = \sigma^2 < \infty$. Then for any bounded sequence of real numbers $\{k_i\}$ and for any $\frac{1}{2} < c < 1$, 
	$$
	\dfrac{1}{n^c}
	\sumin k_i U_i \CONV{a.s.} 0.
	$$ 
\end{lem}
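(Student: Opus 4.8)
The plan is to reduce the almost-sure convergence to the convergence of a random series via Kronecker's lemma, and then to establish that convergence using Kolmogorov's theorem on sums of independent, square-integrable random variables. Set $M := \sup_i |k_i| < \infty$ (which exists since $\{k_i\}$ is bounded) and write $S_n := \sum_{i=1}^n k_i U_i$, $b_n := n^c$.

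First I would examine the series $\sum_{i=1}^\infty \frac{k_i U_i}{i^c}$. Its terms are independent (each being a function of the independent $U_i$) with mean zero, and with variance $\frac{k_i^2 \sigma^2}{i^{2c}} \le \frac{M^2 \sigma^2}{i^{2c}}$. Since $c > \tfrac12$ we have $2c > 1$, so $\sum_{i=1}^\infty \frac{M^2 \sigma^2}{i^{2c}} < \infty$, whence $\sum_{i=1}^\infty \mathrm{Var}\!\left(\frac{k_i U_i}{i^c}\right) < \infty$. By the Khintchine--Kolmogorov convergence theorem (the mean-zero case of the Kolmogorov two-series theorem), the series $\sum_{i=1}^\infty \frac{k_i U_i}{i^c}$ converges almost surely. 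Then I would apply Kronecker's lemma with the strictly increasing divergent sequence $b_n = n^c \uparrow \infty$: whenever $\sum_{i=1}^\infty a_i / b_i$ converges, one has $\frac{1}{b_n}\sum_{i=1}^n a_i \to 0$. Taking $a_i = k_i U_i$ yields $\frac{1}{n^c} \sum_{i=1}^n k_i U_i \to 0$ almost surely, which is the assertion.

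There is no genuine obstacle here; the only points needing care are (i) checking the hypotheses of the convergence theorem, i.e. that boundedness of $\{k_i\}$ together with $c > \tfrac12$ delivers the summable variance bound, and (ii) observing that the upper restriction $c < 1$ plays no role in this lemma and is presumably imposed only for compatibility with the downstream applications. As an alternative route one could bypass Kronecker's lemma by truncating $U_i$ at level $i^c$, using $\mathbb{E}(U_i^2) < \infty$ to control the truncation error via Borel--Cantelli, and bounding the within-block maximal fluctuation of $S_n / n^c$ along a geometric subsequence $n_k$ with $n_k^c \asymp 2^k$ by Kolmogorov's maximal inequality; but the Kronecker argument is shorter, so I would present that one.
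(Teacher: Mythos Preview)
Your proposal is correct and follows essentially the same argument as the paper: bound the variances using $|k_i|\le M$ and $2c>1$ to get $\sum_i \mathrm{Var}(k_iU_i/i^c)<\infty$, invoke the Kolmogorov-type convergence theorem (the paper cites Durrett's Theorem~2.5.3) to conclude the series $\sum_i k_iU_i/i^c$ converges almost surely, and finish with Kronecker's lemma. Your remark that the constraint $c<1$ is not used here is also apt.
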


\begin{proof}
	Since $\{k_i\}$ are bounded, $\exists$ $M>0$ such that $|k_i| \leq M$ $\forall$ $i$. Then
	\begin{align*}
	\sum_{n=1}^{\infty} Var
	\left( \dfrac{k_n U_n}{n^c} \right)
	= \sigma^2 \sum_{n=1}^{\infty} \dfrac{k_n^2}{n^{2c}} 
	\leq \sigma^2 M^2 \sum_{n=1}^{\infty} \dfrac{1}{n^{2c}} 
	< \infty.
	\end{align*} 
	B  y Theorem 2.5.3 of \citet{durrett}, with probability one,
	$$
	\sum_{n=1}^{\infty}  \dfrac{k_n U_n}{n^c} < \infty.
	$$
	Finally, apply Kronecker's Lemma to obtain the desired result. \\
\end{proof}

\begin{lem} \label{lem_Cnw11inv}
	Assume assumptions (\ref{assume_boundedX}) and (\ref{assume_X'X_11}). Then, 
	$$
	\left\Vert
	\left( \cnwa \right)^{-1} 
	\right\Vert_2
	= O_p(1).
	$$
\end{lem}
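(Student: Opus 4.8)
The plan is to bound the smallest eigenvalue of $\cnwa$ away from zero, uniformly in $n$ along almost every realization of the weights, and then convert this into the stated conditional $O_p$-bound. Writing $\bm{x}_{i(1)}$ for the length-$q$ transpose of the $i$th row of $X_{(1)}$, we have $\cnwa = \frac{1}{n}\sum_{i=1}^n W_i\, \bm{x}_{i(1)} \bm{x}_{i(1)}'$. Decompose
$$
\cnwa = \mu_W C_{n(11)} + R_n, \qquad R_n := \frac{1}{n}\sum_{i=1}^n (W_i - \mu_W)\, \bm{x}_{i(1)} \bm{x}_{i(1)}',
$$
where $C_{n(11)} = \frac{1}{n} X_{(1)}' X_{(1)}$. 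Since both matrices on the right are symmetric, a standard eigenvalue perturbation inequality together with Assumption (\ref{assume_X'X_11}) gives
$$
\lambda_{\min}\!\left( \cnwa \right) \ \geq\ \mu_W\, \lambda_{\min}\!\left( C_{n(11)} \right) - \Vert R_n \Vert_2 \ \geq\ \mu_W M_2 - \Vert R_n \Vert_2 .
$$

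Next I would show $\Vert R_n \Vert_2 \to 0$ almost surely under $P_W$. Since $q$ is fixed, $\Vert R_n \Vert_2 \leq \Vert R_n \Vert_F$, so it suffices to handle each entry $(R_n)_{jk} = \frac{1}{n}\sum_{i=1}^n (W_i - \mu_W)\, x_{ij} x_{ik}$ for $1 \leq j,k \leq q$. Here $\{x_{ij}x_{ik}\}_i$ is a bounded deterministic sequence by Assumption (\ref{assume_boundedX}), and $\{W_i - \mu_W\}$ are i.i.d.\ with mean $0$ and finite variance; applying Lemma \ref{lem_ASconv} with some fixed $c \in (\tfrac12, 1)$ shows $\tfrac{1}{n^c}\sum_{i=1}^n (W_i - \mu_W) x_{ij}x_{ik} \to 0$ a.s., and multiplying through by $n^{c-1}\to 0$ yields $(R_n)_{jk}\to 0$ a.s. Hence $\Vert R_n \Vert_2 \to 0$ a.s., and combining with the display above and $\lambda_{\min}(C_{n(11)}) \geq M_2$ gives $\liminf_n \lambda_{\min}(\cnwa) \geq \mu_W M_2 > 0$ a.s., equivalently $\limsup_n \Vert (\cnwa)^{-1}\Vert_2 \leq (\mu_W M_2)^{-1} < \infty$ a.s.

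To pass from $\limsup$ to $\sup$, note that for every $n$ Assumption (\ref{assume_X'X_11}) forces $X_{(1)}$ to have full column rank $q$, while $P(W>0)=1$ makes $D_n$ positive definite a.s., so $\cnwa = \frac{1}{n} X_{(1)}' D_n X_{(1)}$ is positive definite a.s.\ and $\Vert (\cnwa)^{-1}\Vert_2 < \infty$ a.s. A sequence each of whose terms is finite and whose $\limsup$ is finite is bounded, so $T := \sup_n \Vert (\cnwa)^{-1}\Vert_2 < \infty$ a.s. Finally, the weights are independent of the data, so the event $\{\Vert (\cnwa)^{-1}\Vert_2 \geq C\}$ is independent of $\mathcal{F}_n$ and $P(\cdot \mid \mathcal{F}_n) = P(\cdot)$ a.s.; given $\delta > 0$, pick $C_\delta$ with $P(T \geq C_\delta) < \delta$ (possible as $T$ is an a.s.\ finite random variable), and then for every $n$, $P(\Vert (\cnwa)^{-1}\Vert_2 \geq C_\delta \mid \mathcal{F}_n) = P(\Vert (\cnwa)^{-1}\Vert_2 \geq C_\delta) \leq P(T \geq C_\delta) < \delta$ a.s.\ $P_D$, which is precisely $\Vert (\cnwa)^{-1}\Vert_2 = O_p(1)$.

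I expect the only real subtlety to be conceptual rather than computational: recognizing that the conditional $O_p(1)$ statement collapses, by independence of weights and data, to almost-sure boundedness of the full sequence $\{\Vert (\cnwa)^{-1}\Vert_2\}_n$ under $P_W$, and that this in turn follows from the eigenvalue perturbation bound plus the a.s.\ invertibility of $\cnwa$ at each $n$. An alternative that would even give a rate is to bound $\mathbb{E}\,\Vert R_n\Vert_F^2 \leq q^2 M_1^4 \sigma_W^2 / n$ and use Markov's inequality to make $P(\Vert R_n\Vert_2 \geq \mu_W M_2 / 2)$ smaller than $\delta$ for all $n$ beyond a threshold, treating the remaining finitely many $n$ as above; the almost-sure route via Lemma \ref{lem_ASconv} seems cleanest here since that lemma is reused throughout the Appendix.
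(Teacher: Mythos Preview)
Your proof is correct and follows essentially the same route as the paper's: decompose $\cnwa = \mu_W C_{n(11)} + R_n$, show $R_n \to 0$ almost surely via the bounded-coefficients strong law, and conclude that the inverse is almost surely bounded. The paper's version is terser (it simply writes $\Vert (C_{n(11)} + o(1))^{-1}\Vert_2 = \mathcal{O}(1)$ a.s.\ and invokes the SLLN directly rather than Lemma~\ref{lem_ASconv}), whereas you spell out the Weyl-type eigenvalue bound, the passage from $\limsup$ to $\sup$, and the reduction of the conditional $O_p$-statement to an unconditional one via independence of weights and data---all of which are implicit in the paper's two-line argument.
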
 

\begin{proof}
	Due to assumptions (\ref{assume_boundedX}) and (\ref{assume_X'X_11}) and that $q$ is fixed, $C_{n(11)}$ is invertible for all $n$. We also verify the invertibility of $\cnwa$ by recognizing that 
	$$
	\cnwa = \dfrac{1}{n} X_{(1)}' D_n X_{(1)} = 
	\dfrac{1}{n} \left( D_n^{\frac{1}{2}} X_{(1)} \right)'
	\left( D_n^{\frac{1}{2}} X_{(1)} \right) 
	$$
	where $D_n^{1/2} = diag \left(\sqrt{W_1}, \cdots, \sqrt{W_n} \right)$, which is a full-rank square matrix. Thus,
	$$
	{\rm rank} \left( \cnwa \right)
	= {\rm rank} \left( D_n^{\frac{1}{2}} X_{(1)} \right)
	= {\rm rank} \left( X_{(1)} \right)
	= q,
	$$
	i.e. $\cnwa$ is full-rank and is invertible for every $n$. Next, 
	$$
	\cnwa = C_{n(11)} + 
	\dfrac{1}{n} X'_{(1)} (D_n - \mu_W I_n) X_{(1)}
	$$
	where the Strong Law of Large Numbers ensures that
	$$
	\dfrac{1}{n} X'_{(1)} (D_n - \mu_W I_n) X_{(1)}
	\CONV{a.s.} \bm{0} 
	$$
	due to assumption (\ref{assume_boundedX}). Since $C_{n(11)}$ is invertible for all $n$, we have 
	$$
	\left\Vert
	 \left( \cnwa \right)^{-1} 
	\right\Vert_2
	= \left\Vert
		\left( C_{n(11)} + o(1) \right)^{-1}
	\right\Vert_2
	= \mathcal{O}(1) \,\, a.s.
	$$
\end{proof}

In fact, if we assume $C_{n(11)} \to C_{11}$ for some nonsingular matrix $C_{11}$ in Lemma \ref{lem_Cnw11inv}, then by the Strong Law of Large Numbers and Continuous Mapping Theorem,
	\begin{align*}
	\left( \cnwa \right)^{-1}
	\CONV{a.s.} \frac{1}{\mu_W} C_{11}^{-1}. 
	\end{align*}
 
\begin{lem} \label{lem_CtildeW}
Assume assumptions (\ref{assume_boundedX}) and (\ref{assume_X'X_11}). For any $\frac{1}{2} < c_1 < 1$, if $\exists$ $0 \leq c_3 < 2 c_1 - 1$ for which $p_n = \mathcal{O}(n^{c_3})$, then
$$
\left\Vert
n^{1-c_1} \widetilde{C}^w_n 
\right\Vert_2 
= o_p(1). 
$$
\end{lem}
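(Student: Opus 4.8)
The plan is to linearize in the random weights by centering the weight matrix at its mean. Write $D_n = \mu_W I_n + (D_n - \mu_W I_n)$ and set
$$
R_n^{(11)} := \frac{1}{n} X'_{(1)}(D_n - \mu_W I_n) X_{(1)}, \qquad R_n^{(21)} := \frac{1}{n} X'_{(2)}(D_n - \mu_W I_n) X_{(1)},
$$
so that $\cnwa = \mu_W C_{n(11)} + R_n^{(11)}$ and $\cnwc = \mu_W C_{n(21)} + R_n^{(21)}$. Both $\cnwa$ and $\mu_W C_{n(11)}$ are invertible (the latter by (\ref{assume_X'X_11}) with $q$ fixed, the former by Lemma \ref{lem_Cnw11inv}), so the exact identity $A^{-1} - B^{-1} = B^{-1}(B - A) A^{-1}$ applied with $A = \cnwa$ and $B = \mu_W C_{n(11)}$ gives $(\cnwa)^{-1} = \frac{1}{\mu_W} C_{n(11)}^{-1} - \frac{1}{\mu_W} C_{n(11)}^{-1} R_n^{(11)} (\cnwa)^{-1}$. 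Substituting this and $\cnwc = \mu_W C_{n(21)} + R_n^{(21)}$ into $\widetilde{C}^w_n = \cnwc (\cnwa)^{-1} - C_{n(21)} C_{n(11)}^{-1}$ collapses the definition to the exact identity
$$
\widetilde{C}^w_n = \frac{1}{\mu_W} R_n^{(21)} C_{n(11)}^{-1} - \frac{1}{\mu_W}\, \cnwc\, C_{n(11)}^{-1} R_n^{(11)} (\cnwa)^{-1},
$$
which reduces the lemma to bounding the two products on the right.

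Next I would collect the relevant norm estimates. Assumption (\ref{assume_X'X_11}) gives $\Vert C_{n(11)}^{-1}\Vert_2 \le 1/M_2$; Lemma \ref{lem_Cnw11inv} gives $\Vert (\cnwa)^{-1}\Vert_2 = O_p(1)$; and assumption (\ref{assume_boundedX}) gives $\Vert C_{n(21)}\Vert_2 \le \Vert C_{n(21)}\Vert_F \le M_1^2\sqrt{(p_n - q)q}$, hence $\Vert \cnwc\Vert_2 \le \mu_W\Vert C_{n(21)}\Vert_2 + \Vert R_n^{(21)}\Vert_2 = O_p(\sqrt{p_n})$. For the weighted remainders I would use Chebyshev entrywise: the $(j,k)$ entry of $R_n^{(21)}$ is $\frac{1}{n}\sum_i x_{i,q+j}(W_i - \mu_W) x_{ik}$, which has mean zero and variance at most $M_1^4\sigma_W^2/n$ by (\ref{assume_boundedX}); summing over entries gives $\mathbb{E}\Vert R_n^{(21)}\Vert_F^2 \le (p_n - q) q\, M_1^4\sigma_W^2 / n$ and therefore $\Vert R_n^{(21)}\Vert_2 \le \Vert R_n^{(21)}\Vert_F = O_p(\sqrt{p_n}/\sqrt{n})$, and the same argument restricted to the $q^2$ entries of $R_n^{(11)}$ gives $\Vert R_n^{(11)}\Vert_2 = O_p(n^{-1/2})$. (One could instead invoke Lemma \ref{lem_ASconv} to get an almost-sure rate for $R_n^{(11)}$, but the second-moment bound suffices.)

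Finally, multiplying the identity by $n^{1-c_1}$: the first term is bounded by $n^{1-c_1}\Vert R_n^{(21)}\Vert_2\Vert C_{n(11)}^{-1}\Vert_2 = O_p(n^{1 - c_1} \sqrt{p_n}\, n^{-1/2}) = O_p(n^{1/2 - c_1}\sqrt{p_n})$, and the second by $\Vert \cnwc\Vert_2\Vert C_{n(11)}^{-1}\Vert_2 \cdot n^{1-c_1}\Vert R_n^{(11)}\Vert_2 \cdot \Vert (\cnwa)^{-1}\Vert_2 = O_p(\sqrt{p_n})\cdot n^{1-c_1}\cdot O_p(n^{-1/2}) = O_p(n^{1/2 - c_1}\sqrt{p_n})$. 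Since $p_n = O(n^{c_3})$, both are $O_p(n^{c_3/2 + 1/2 - c_1})$, and the exponent $c_3/2 + 1/2 - c_1$ is strictly negative precisely because $c_3 < 2 c_1 - 1$; hence $\Vert n^{1-c_1}\widetilde{C}^w_n\Vert_2 = o_p(1)$. The main obstacle — and the reason the hypothesis $c_3 < 2c_1 - 1$ is exactly the right one — is the $\sqrt{p_n}$ inflation arising simultaneously from $\Vert C_{n(21)}\Vert_2$ and from the $O(p_n)$ entries of $R_n^{(21)}$: it must be absorbed by the decaying factor $n^{1/2 - c_1}$, and the $p_n$-power stays at $1/2$ only because the Frobenius/second-moment route keeps each entry of $R_n^{(21)}$ at the sharp order $n^{-1/2}$; a cruder entrywise union bound would force a stronger restriction on $c_3$.
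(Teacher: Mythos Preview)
Your proof is correct. Both you and the paper linearize in the centered weights $D_n-\mu_W I_n$, but the algebraic packaging differs. The paper introduces the single auxiliary matrix $H = X_{(1)} C_{n(11)}^{-1} C_{n(12)} - X_{(2)}$ and checks the exact one-term identity
\[
\widetilde{C}^w_n \;=\; \frac{1}{n}\,H'(\mu_W I_n - D_n) X_{(1)}\,(\cnwa)^{-1},
\]
then bounds $\bigl\Vert n^{-c_1} H'(\mu_W I_n - D_n) X_{(1)}\bigr\Vert_F$ entrywise via Lemma~\ref{lem_ASconv} (almost-sure Kronecker), obtaining $o(1)$ a.s.\ before multiplying by $\Vert(\cnwa)^{-1}\Vert_2=O_p(1)$. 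Your route instead expands $(\cnwa)^{-1}$ by the resolvent identity, producing two terms, and controls the remainders $R_n^{(11)},R_n^{(21)}$ by a Chebyshev/second-moment Frobenius bound. The paper's single-term representation is a little more economical (it avoids bounding $\Vert\cnwc\Vert_2$ separately), while your decomposition is more modular and makes the $\sqrt{p_n}$ bookkeeping transparent; either way the governing rate is $n^{1/2-c_1}\sqrt{p_n}$, which is exactly where the hypothesis $c_3<2c_1-1$ enters.
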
 

\begin{proof}
Let 
$$
H = X_{(1)} C_{n(11)}^{-1} C_{n(12)} - X_{(2)}.
$$
Then 
$$
n^{1-c_1} \widetilde{C}^w_n 
= \dfrac{1}{n^{c_1}} H'(\mu_W I_n - D_n) X_{(1)} 
\left( \cnwa \right)^{-1}.
$$
Due to assumptions (\ref{assume_boundedX}) and (\ref{assume_X'X_11}) and that $q$ is fixed, every element of the matrix $H$ is bounded. Let $h_{ij}$ and $x_{ij}$ be the $(i,j)^{th}$ element of $H$ and $X_{(1)}$ respectively. For $0 \leq c_3 < 2 c_1 - 1$, by Lemma \ref{lem_ASconv},  
 $$
 \dfrac{1}{n^{c_1 - \frac{c_3}{2}}}
 \sumin h_{k,i} x_{i,l}
 (W_i - \mu_W)
 \CONV{a.s.} 0
 $$
 for every $k = 1, \cdots, p_n - q$ and $l = 1, \cdots, q$. Thus, 
 \begin{align*}
 &\left\Vert
 \dfrac{1}{n^{c_1}} 
 H' (\mu_W I_n - D_n) 
 X_{(1)}
 \right\Vert_2^2 \\
 &\leq \left\Vert
 \dfrac{1}{n^{c_1}} 
 H' (\mu_W I_n - D_n) 
 X_{(1)}
 \right\Vert_F^2 \\
 &= \sum_{k=1}^{p_n-q} \sum_{l=1}^q
 \left[
 	\dfrac{1}{n^{\frac{c_3}{2}}}
 	\times
 	\dfrac{1}{n^{c_1 - \frac{c_3}{2}}}
 	\sumin h_{k,i} x_{i,l}
 	(\mu_W - W_i)
 \right]^2 \\
 &= \mathcal{O}(p_n) \times o\left(n^{-c_3}\right) 
 = o(1) \quad {a.s.} .
 \end{align*}

 Finally, by Lemma \ref{lem_Cnw11inv},
 $$
 \left\Vert
 	n^{1-c_1} \widetilde{C}^w_n 
 \right\Vert_2 
 \leq 
 \left\Vert
 	\dfrac{1}{n^{c_1}} 
 	H' (\mu_W I_n - D_n) 
 	X_{(1)}
 \right\Vert_2
 \left\Vert
 	\left( \cnwa \right)^{-1} 
 \right\Vert_2
 = o_p(1). 
 $$
\end{proof}	

\begin{lem} \label{lem_X'DnX}
	Suppose that $p_n = p$ is fixed. Assume (\ref{assume_boundedX}) and (\ref{assume_X'X}). Then, as $n \to \infty$, 
	$$
	\dfrac{\mu_W}{n} X' D_n X
	\CONV{a.s.} \mu_W C. 
	$$
\end{lem}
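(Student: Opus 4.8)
The plan is to prove the convergence entrywise and then invoke finiteness of $p$. Since $p$ is fixed, the $p\times p$ matrix $\frac1n X'D_nX$ converges almost surely (in Frobenius norm, equivalently entrywise) as soon as each of its $p^2$ entries does, because a finite union of $P_W$-null sets is $P_W$-null. So fix indices $j,k\in\{1,\dots,p\}$ and consider the $(j,k)$ entry $\frac1n\sum_{i=1}^n W_i x_{ij}x_{ik}$.

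First I would split the weights as $W_i=\mu_W+(W_i-\mu_W)$ to write
$$
\frac1n\sum_{i=1}^n W_i x_{ij}x_{ik}
=\mu_W\cdot\frac1n\sum_{i=1}^n x_{ij}x_{ik}
\;+\;\frac1n\sum_{i=1}^n (W_i-\mu_W)\,x_{ij}x_{ik}.
$$
The first term on the right is $\mu_W$ times the $(j,k)$ entry of $\frac1n X'X$, which tends to $\mu_W C_{jk}$ by Assumption (\ref{assume_X'X}); this part is deterministic. It therefore remains to show that the second ``noise'' term vanishes almost surely under $P_W$.

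For the noise term, set $k_i:=x_{ij}x_{ik}$ and $U_i:=W_i-\mu_W$. By the boundedness assumption (\ref{assume_boundedX}) we have $|k_i|\le M_1^2$, so $\{k_i\}$ is a bounded sequence, while $U_1,U_2,\dots$ are i.i.d.\ with $\mathbb{E}(U_i)=0$ and $\mathbb{E}(U_i^2)=\sigma_W^2<\infty$. Choosing any $c\in(\tfrac12,1)$, Lemma \ref{lem_ASconv} gives $n^{-c}\sum_{i=1}^n k_iU_i\CONV{a.s.}0$, and since $n^{1-c}\to\infty$ we also get $\frac1n\sum_{i=1}^n k_iU_i=n^{-(1-c)}\bigl(n^{-c}\sum_{i=1}^n k_iU_i\bigr)\CONV{a.s.}0$. (Equivalently, one can argue directly as in the proof of Lemma \ref{lem_ASconv}: $\sum_{n\ge1}\mathrm{Var}(k_nU_n)/n^2\le\sigma_W^2 M_1^4\sum_{n\ge1}n^{-2}<\infty$, so $\sum_n k_nU_n/n$ converges a.s., and Kronecker's lemma finishes it.) Combining the two terms, the $(j,k)$ entry of $\frac1n X'D_nX$ converges a.s.\ to $\mu_W C_{jk}$; intersecting over the finitely many pairs $(j,k)$ gives $\frac1n X'D_nX\CONV{a.s.}\mu_W C$, and multiplying by the fixed positive scalar $\mu_W$ preserves almost-sure convergence, which is the stated claim.

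I do not expect any real obstacle here: the only mild point to watch is that the summands $(W_i-\mu_W)x_{ij}x_{ik}$ are independent but \emph{not} identically distributed (their law varies with $i$ through the design), so one should not invoke the classical i.i.d.\ strong law and instead appeal to the variance-summability/Kronecker mechanism packaged in Lemma \ref{lem_ASconv}; the fixed-$p$ hypothesis then lets this entrywise statement assemble into matrix convergence.
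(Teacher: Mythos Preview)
Your proof is correct and follows essentially the same approach as the paper: split $\tfrac{1}{n}X'D_nX$ into $\tfrac{\mu_W}{n}X'X$ plus the centered-weight term $\tfrac{1}{n}X'(D_n-\mu_W I_n)X$, use assumption~(\ref{assume_X'X}) for the first and a strong-law argument for the second. Your appeal to Lemma~\ref{lem_ASconv} is in fact slightly more careful than the paper's bare ``Strong Law of Large Numbers'' (since the summands are independent but not identically distributed); the only blemish is your final sentence about multiplying by $\mu_W$, which is an artifact of a typo in the lemma statement---the intended conclusion is $\tfrac{1}{n}X'D_nX\CONV{a.s.}\mu_W C$, which you have already established.
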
 

\begin{proof}
	Due to assumption (\ref{assume_boundedX}), the Strong Law of Large Numbers gives 
	$$
	\dfrac{1}{n} X' (D_n -\mu_W I_n) X 
	= \dfrac{1}{n} \sumin (W_i - \mu_W) \bm{x}_i \bm{x}_i' 
	\CONV{a.s.} \bm{0},
	$$ 
	where $\bm{x}_i$ is the $i^{th}$ row of $X$. Then, due to assumption (\ref{assume_X'X}), 
	$$
	\dfrac{1}{n} X' D_n X 
	= \dfrac{1}{n} X' (D_n - \mu_W I_n) X 
	+ \dfrac{\mu_W}{n} X'X
	\CONV{a.s.} \bm{0} + \mu_W C = \mu_W C.
	$$ 
\end{proof}

An immediate consequence of Lemma \ref{lem_X'DnX} is that when $p$ is fixed,
$$
C_{n(ij)}^w \CONV{a.s.} \mu_W C_{ij} \quad \forall \,\, i,j = 1,2.
$$
\\
We remind readers that in this paper, we consider a common probability space $P = P_D \times P_W$, which correspond to the two sources of randomness $(\bm{\epsilon}, \bm{W})$. Note that the product probability space highlights the fact that the random weights $\bm{W}$ are drawn independently from the data $D$. The rest of the proofs deals with convergence of conditional probabilities/distributions (given data, i.e. given $\mathcal{F}_n$) for expressions containing $\bm{\epsilon}$, where the convergence takes place almost surely under $P_D$ (i.e. for almost every data set). See \citet{mason1992rank}  for relevant background.
%the underlying concept and techniques of proofs regarding convergence of conditional probabilities/distributions a.s.-$P_D$ in a random-weighting or weighted bootstrap setup.    

\begin{lem} \label{lem_EpsDnEps}
	Assume (\ref{assume_4thmoment}). Then
	$$
	\dfrac{\bm{\epsilon}' D_n \bm{\epsilon}}{n}
	\CONV{c.p.} \mu_W \sigma^2_{\epsilon}
	\quad a.s. \,\, P_D.
	$$
\end{lem}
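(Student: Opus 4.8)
The plan is to condition on the data $\sigma$-field $\mathcal{F}_n$ and exploit that, given $\mathcal{F}_n$, the quantity
$$
\bm{\epsilon}' D_n \bm{\epsilon} = \sumin W_i \epsilon_i^2
$$
is a weighted sum of the mutually independent weights $W_i$ with fixed (data-determined) nonnegative coefficients $\epsilon_i^2$, since $\bm{W} \perp \mathcal{F}_n$ under the product measure $P = P_D \times P_W$. First I would identify the conditional mean: because $\mathbb{E}(W_i) = \mu_W$ and $W_i \perp \mathcal{F}_n$,
$$
\mathbb{E}\left[ \dfrac{\bm{\epsilon}' D_n \bm{\epsilon}}{n} \, \Big| \, \mathcal{F}_n \right]
= \dfrac{\mu_W}{n} \sumin \epsilon_i^2,
$$
and by the Strong Law of Large Numbers (valid since $\mathbb{E}(\epsilon_i^2) = \sigma^2_\epsilon < \infty$ under assumption (\ref{assume_4thmoment})), this converges to $\mu_W \sigma^2_\epsilon$ almost surely $P_D$.

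Next I would control the conditional variance. By conditional independence of the $W_i$ given $\mathcal{F}_n$ and $\mathrm{Var}(W_i) = \sigma^2_W$,
$$
\mathrm{Var}\left( \dfrac{\bm{\epsilon}' D_n \bm{\epsilon}}{n} \, \Big| \, \mathcal{F}_n \right)
= \dfrac{\sigma^2_W}{n^2} \sumin \epsilon_i^4
= \dfrac{\sigma^2_W}{n} \cdot \dfrac{1}{n} \sumin \epsilon_i^4.
$$
Here the finite fourth-moment assumption (\ref{assume_4thmoment}) is essential: again by the Strong Law of Large Numbers, $\tfrac{1}{n}\sum_{i=1}^n \epsilon_i^4 \to \mathbb{E}(\epsilon_1^4) < \infty$ almost surely $P_D$, so the conditional variance is $O(1/n) \to 0$ almost surely $P_D$.

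Then I would combine the two pieces. Decompose
$$
\dfrac{\bm{\epsilon}' D_n \bm{\epsilon}}{n} - \mu_W \sigma^2_\epsilon
= \left( \dfrac{\bm{\epsilon}' D_n \bm{\epsilon}}{n} - \dfrac{\mu_W}{n}\sumin \epsilon_i^2 \right)
+ \left( \dfrac{\mu_W}{n}\sumin \epsilon_i^2 - \mu_W \sigma^2_\epsilon \right).
$$
On the $P_D$-full-probability event where both SLLN limits above hold, the second term is $o(1)$, hence eventually below $\delta/2$; for the first (conditionally centered) term, a conditional Chebyshev inequality together with the variance bound from the second step yields $P\big(|\,\cdot\,| > \delta/2 \mid \mathcal{F}_n\big) \to 0$ almost surely $P_D$. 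Therefore $P\big(|n^{-1}\bm{\epsilon}' D_n \bm{\epsilon} - \mu_W \sigma^2_\epsilon| > \delta \mid \mathcal{F}_n\big) \to 0$ almost surely $P_D$ for every $\delta > 0$, which is the assertion.

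There is no deep obstacle here; the only care required is bookkeeping of the $P_D$-null exceptional sets — arranging that the two SLLN applications (to $\epsilon_i^2$ and to $\epsilon_i^4$) and the resulting conditional-probability limit all hold off a single common null set — and verifying that the conditional Chebyshev step is applied with genuinely $\mathcal{F}_n$-measurable conditional first and second moments, which is immediate from $\bm{W} \perp \mathcal{F}_n$.
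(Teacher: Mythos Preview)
Your proposal is correct and follows essentially the same route as the paper: decompose $\tfrac{1}{n}\bm{\epsilon}'D_n\bm{\epsilon}$ into $\tfrac{1}{n}\sum\epsilon_i^2(W_i-\mu_W)+\tfrac{\mu_W}{n}\sum\epsilon_i^2$, handle the second term by the SLLN on $\epsilon_i^2$, and dispatch the first term via a conditional second-moment bound driven by $\tfrac{1}{n}\sum\epsilon_i^4=O(1)$ a.s.\ $P_D$. The only cosmetic difference is that the paper cites a triangular-array WLLN (Theorem~1.14(ii) of \citet{junshao}) where you invoke conditional Chebyshev directly.
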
 

\begin{proof}
	Clearly, 
	$$
	\dfrac{1}{n} \sumin\epsilon_i^2
	\to \sigma^2_{\epsilon} \quad a.s. \,\, P_D.
	$$
	Due to assumption~(\ref{assume_4thmoment}),
	$$
	\dfrac{1}{n} \sumin \epsilon_i^4 
	= \mathcal{O}(1) 
	\quad a.s. \,\, P_D,
	$$
	which leads to
	$$
	\dfrac{1}{n^2} \sumin  \mathbb{E} (\epsilon_i^4 W_i ^2 \big| \mathcal{F}_n)
	= \dfrac{1}{n^2} \sumin \epsilon_i^4 \mathbb{E} (W_i ^2)
	= \dfrac{\sigma^2_W + \mu_W^2}{n} \left( \dfrac{1}{n} \sumin \epsilon_i^4 \right)
	\to 0 \,\, a.s. \,\, P_D. 
	$$
	Hence, by the Weak Law of Large Numbers (e.g., Theorem 1.14(ii) of \citet{junshao}),
	$$
	\dfrac{1}{n} \bm{\epsilon}' (D_n - \mu_W I_n) \bm{\epsilon}
	= \dfrac{1}{n} \sumin \epsilon_i^2 (W_i - \mu_W)
	\CONV{c.p.} 0 \quad a.s. \,\, P_D,
	$$ 
	and thus,
	$$
	\dfrac{\bm{\epsilon}' D_n \bm{\epsilon}}{n}
	= \dfrac{1}{n} \sumin \epsilon_i^2 (W_i - \mu_W)
	+ \dfrac{\mu_W}{n} \sumin\epsilon_i^2
	\CONV{c.p.} 0 + \mu_W \sigma^2_{\epsilon}
	= \mu_W \sigma^2_{\epsilon} \quad a.s. \,\, P_D.
	$$ 
\end{proof}

%Another proof for Lemma \ref{lem_EpsDnEps} goes like this: For any $\xi > 0$, by Chebyshev's inequality, assumption (\ref{assume_4thmoment}) ensures that
%\begin{align*}
%P \left(
%	\left|
%		\frac{1}{n} \bm{\epsilon}' (D_n - \mu_W I_n) \bm{\epsilon} 
%	\right| > \xi
%\Bigg| \mathcal{F}_n
%\right) 
%&\leq \dfrac{1}{\xi^2} Var
%	\left(
%		\frac{1}{n} \sumin \epsilon_i^2 W_i
%	\Bigg| \mathcal{F}_n
%\right) \\
%&= \dfrac{\sigma^2_W}{\xi^2 n^2} 
%\sumin \epsilon_i^4 \\
%&\to 0 \quad a.s. \,\, P_D.
%\end{align*}
%Thus, we have
%$$
%\frac{1}{n} \bm{\epsilon}' (D_n - \mu_W I_n) \bm{\epsilon}  
%\CONV{c.p.} \bm{0} \quad a.s. \,\, P_D,
%$$
%whereas 
%$$
%\dfrac{\mu_W}{n} \bm{\epsilon}' \bm{\epsilon}
%\to \mu_W \sigma^2_\epsilon 
%\quad a.s. \,\, P_D,
%$$
%and the result follows from Slutsky's Theorem.

\begin{lem} \label{lem_Znw1}
	Assume (\ref{assume_4thmoment}), (\ref{assume_boundedX}) and (\ref{assume_X'X_11}). Then for any $c>0$,
	$$
	\dfrac{1}{n^c} \znwa
	= o_p(1) \quad a.s. \,\, P_D.
	$$
\end{lem}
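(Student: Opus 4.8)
Recall that $\znwa = \frac{1}{\sqrt{n}} X_{(1)}' D_n \bm{\epsilon}$ is a $q \times 1$ vector whose $j$-th coordinate equals $n^{-1/2}\sumin x_{ij} W_i \epsilon_i$. Since $q$ is fixed, it suffices to control the squared Euclidean norm, and the plan is to show that the conditional second moment satisfies
$$
\mathbb{E}\!\left[ \left\Vert \tfrac{1}{n^c}\znwa \right\Vert_2^2 \,\Big|\, \mathcal{F}_n \right] \to 0 \quad a.s.\ P_D .
$$
Applying the conditional Markov inequality to $\left\Vert n^{-c}\znwa \right\Vert_2^2$ then gives $n^{-c}\znwa \CONV{c.p.} 0$ a.s. $P_D$, which is exactly the claim $n^{-c}\znwa = o_p(1)$ a.s. $P_D$. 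Since $\bm{W}$ is independent of $\mathcal{F}_n$ and the $W_i$ are i.i.d. with mean $\mu_W$ and variance $\sigma^2_W$, conditioning on $\mathcal{F}_n$ and expanding the square yields
$$
\mathbb{E}\!\left[ \left\Vert \tfrac{1}{n^c}\znwa \right\Vert_2^2 \,\Big|\, \mathcal{F}_n \right]
= \frac{\mu_W^2}{n^{2c+1}} \sum_{j=1}^q \Big(\sumin x_{ij}\epsilon_i\Big)^2
+ \frac{\sigma^2_W}{n^{2c+1}} \sum_{j=1}^q \sumin x_{ij}^2 \epsilon_i^2 ,
$$
so it is enough to show that each of these two terms vanishes a.s. $P_D$.

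The variance term is immediate: by Assumption (\ref{assume_boundedX}) it is bounded above by $\sigma^2_W q M_1^2\, n^{-2c}\,\big(\tfrac{1}{n}\sumin\epsilon_i^2\big)$, and since $\tfrac{1}{n}\sumin\epsilon_i^2 \to \sigma^2_\epsilon$ a.s. $P_D$ by the strong law while $n^{-2c}\to 0$, this term tends to $0$ a.s. $P_D$.

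For the bias term I would invoke Lemma \ref{lem_ASconv}, whose hypotheses hold with the bounded sequence $k_i = x_{ij}$ (Assumption (\ref{assume_boundedX})) and the i.i.d. mean-zero finite-variance variables $U_i = \epsilon_i$ (Assumption (\ref{assume_4thmoment}) in particular gives $\mathbb{E}\epsilon_i^2<\infty$). Since $c>0$, the interval $(\tfrac12,\min\{1,c+\tfrac12\})$ is nonempty; fix any $c^\ast$ in it, so that $\tfrac12 < c^\ast < 1$ and $2c^\ast - 2c - 1 < 0$. Lemma \ref{lem_ASconv} gives $n^{-c^\ast}\sumin x_{ij}\epsilon_i \to 0$ a.s. $P_D$, hence
$$
\frac{1}{n^{2c+1}}\Big(\sumin x_{ij}\epsilon_i\Big)^2 = n^{\,2c^\ast-2c-1}\Big(n^{-c^\ast}\sumin x_{ij}\epsilon_i\Big)^2 \to 0 \quad a.s.\ P_D ,
$$
and summing over the finitely many $j = 1,\dots,q$ shows the bias term vanishes a.s. $P_D$. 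Combining the two bounds establishes the displayed convergence of the conditional second moment, and the lemma follows.

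The only point requiring care is the case-dependent choice of $c^\ast$: for small $c$ one must take $c^\ast < c + \tfrac12 < 1$, while for $c \ge \tfrac12$ any $c^\ast \in (\tfrac12,1)$ works; this is needed so that the exponent fed into Lemma \ref{lem_ASconv} lies strictly inside $(\tfrac12,1)$. Everything else is routine moment bookkeeping. I also note that this argument uses only Assumptions (\ref{assume_4thmoment}) and (\ref{assume_boundedX}) together with the fixed-$q$ convention, and not Assumption (\ref{assume_X'X_11}).
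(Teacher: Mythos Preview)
Your proof is correct and is more elementary than the paper's. Both arguments split each coordinate of $\znwa$ into a centered part $\tfrac{1}{\sqrt n}\sum_i x_{ij}\epsilon_i(W_i-\mu_W)$ and a conditional-mean part $\tfrac{\mu_W}{\sqrt n}\sum_i x_{ij}\epsilon_i$, and both dispatch the latter via Lemma~\ref{lem_ASconv}. The difference is in how the centered part is handled: the paper verifies Liapounov's condition and invokes the Lindeberg CLT to conclude that $\tfrac{1}{\sqrt n}\sum_i x_{ij}\epsilon_i(W_i-\mu_W)=O_p(1)$ a.s.\ $P_D$, whereas you bypass distributional results entirely by bounding the conditional second moment $\mathbb{E}\big[\Vert n^{-c}\znwa\Vert_2^2\mid\mathcal F_n\big]$ and applying the conditional Markov inequality. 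Your route is shorter and uses only second moments of $W$ (not the fourth moment the paper needs for Liapounov). The trade-off is that the paper's CLT step is not wasted effort: it is recycled verbatim later (see the discussion after the proof of Lemma~\ref{lem_Znw1} and the proof of Theorem~\ref{thm_cond_oracle}) to obtain the limiting normal law $\tfrac{1}{\sqrt n}X_{(1)}'(D_n-\mu_W I_n)\bm\epsilon\CONV{c.d.}N_q(\bm 0,\sigma_W^2\sigma_\epsilon^2 C_{11})$, which your argument does not supply. Your closing remark that Assumption~(\ref{assume_X'X_11}) is unused is also accurate; the paper's own proof does not invoke it either.
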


\begin{proof}
	Let $x_{ij}$ be the $(i,j)^{th}$ element of $X_{(1)}$. Then, we can rewrite
	\begin{align*}
	\left(
		\dfrac{1}{n^c}
		\left\Vert 
			\znwa
		\right\Vert_2
	\right)^2
	&= \dfrac{1}{n^{2c}}
	\sum_{j=1}^q 
	\left(
		\dfrac{1}{\sqrt{n}} \sumin \epsilon_i x_{ji} (W_i - \mu_W) 
		+ \dfrac{\mu_W}{\sqrt{n}} \sumin \epsilon_i x_{ji} 
	\right)^2 \\
	&= \sum_{j=1}^q 
	\left(
		\dfrac{1}{n^{\frac{1}{2}+c}} \sumin \epsilon_i x_{ji} (W_i - \mu_W) 
		+ \dfrac{\mu_W}{n^{\frac{1}{2}+c}} \sumin \epsilon_i x_{ji} 
	\right)^2,
	\end{align*}
	where we note that
	$$
	\mathbb{E}
	\left(
		\sumin \epsilon_i x_{ji} W_i  
	\bigg| \mathcal{F}_n
	\right)
	= \sumin \epsilon_i x_{ji} \mathbb{E} (W_i)
	= \mu_W \sumin \epsilon_i x_{ji}, 
	$$
	and
	$$
	Var
	\left(
		\sumin \epsilon_i x_{ji} W_i 
		\bigg| \mathcal{F}_n
	\right)
	= \sumin \epsilon_i^2 x_{ji}^2 Var(W_i) 
	= \sigma^2_W \sumin \epsilon_i^2 x_{ji}^2 .
	$$
	Now, due to assumption (\ref{assume_boundedX}), 
	$$
	\dfrac{1}{n} \sumin \epsilon_i^2 x_{ji}^2 
	= \mathcal{O}(1) \,\,\, a.s. \,\, P_D 
	\,\,\, \Longrightarrow \,\,\,
	\sumin \epsilon_i^2 x_{ji}^2 
	= \mathcal{O}(n) \,\,\, a.s. \,\, P_D,
	$$
	and coupled with assumption (\ref{assume_4thmoment}),
	$$
	\dfrac{1}{n} \sumin \epsilon_i^4 x_{ji}^4 
	= \mathcal{O}(1) \,\,\, a.s. \,\, P_D 
	\,\,\, \Longrightarrow \,\,\,
	\sumin \epsilon_i^4 x_{ji}^4 
	= \mathcal{O}(n) \,\,\, a.s. \,\, P_D.
	$$
	Thus, by using assumptions (\ref{assume_4thmoment}) and (\ref{assume_boundedX}) and that $F_W$ has finite fourth moment, the Liapounov's sufficient condition is satisfied
	\begin{align*}
	&\left[ 
		\sumin \epsilon_i^2 x_{ji}^2 Var(W_i) 
	\right]^{-2}
	\left[
		\sumin \epsilon_i^4 x_{ji}^4
		\mathbb{E} (W_i - \mu_W)^4
	\right] \\
	&= \mathcal{O} \left( n^{-2} \right) 
	\times
	\mathcal{O} \left( n \right) 
	= \mathcal{O} \left( n^{-1} \right)
	\quad a.s. \,\, P_D, 
	\end{align*}
	in order to deploy the Lindeberg's Central Limit Theorem
	$$
	\dfrac{ \sumin \epsilon_i x_{ji} (W_i - \mu_W) }
	{ \sqrt{ \sigma_W^2 \sumin \epsilon_i^2 x_{ji}^2 } }
	\CONV{c.d.} N(0,1) \quad a.s. \,\, P_D. 
	$$
	Subsequently, for all $j = 1, \cdots, q$, 
	\begin{align*}
	&\dfrac{1}{\sqrt{n}} \sumin \epsilon_i x_{ji} (W_i - \mu_W) \\
	&= \sqrt{\dfrac{\sigma^2_W}{n} \sumin \epsilon_i^2 x_{ji}^2} 
	\times \dfrac{ \sumin \epsilon_i x_{ji} (W_i - \mu_W) }
	{ \sqrt{ \sigma_W^2 \sumin \epsilon_i^2 x_{ji}^2 } } \\
	&=\mathcal{O}_p(1) \quad a.s. \,\, P_D,
	\end{align*}
	and hence,
	$$
	\dfrac{1}{n^{\frac{1}{2}+c}} \sumin \epsilon_i x_{ji} (W_i - \mu_W) 
	= o_p(1) \quad a.s. \,\, P_D.
	$$
	Finally, by assumption (\ref{assume_boundedX}) and Lemma \ref{lem_ASconv},
	$$
	\dfrac{\mu_W}{n^{\frac{1}{2}+c}} \sumin \epsilon_i x_{ji}
	\to 0 \quad a.s. \,\, P_D 
	$$
	for all $j = 1, \cdots, q$. Since $q$ is fixed,
	$$
	\left(
		\dfrac{1}{n^c}
		\left\Vert 
			\znwa
		\right\Vert_2
	\right)^2
	= o_p(1) \quad a.s. \,\, P_D,
	$$ 
	and the result follows.

\end{proof}

If we assume that $C_{n(11)} \to C_{11}$ for some nonsingular matrix $C_{11}$ in Lemma \ref{lem_Znw1}, notations could be simplified in the preceding proof by using Cramer-Wold device. We point out to readers that the $C_{n(11)} \to C_{11}$ assumption is required in Theorem \ref{thm_cond_oracle} but not in Theorem \ref{thm_Model_Select}. The following proof contains some interim results that will be utilized in the proof of Theorem \ref{thm_cond_oracle}. 

Specifically, let $\bm{x}_{i(1)}$ be the $i^{th}$ row of $X_{(1)}$. Then, for every $\bm{z} \in \mathbb{R}^q$,    
\begin{align*}
	&\bm{z}' \left[ \dfrac{1}{\sqrt{n}} X'_{(1)} (D_n - \mu_W I_n) \bm{\epsilon} \right] \\
	&= \dfrac{1}{\sqrt{n}} \sumin \epsilon_i (W_i - \mu_W) \bm{z}' \bm{x}_{i(1)} \\
	&= \sqrt{ \dfrac{\sigma_W^2}{n} \sumin \epsilon_i^2 \left( \bm{z}' \bm{x}_{i(1)} \right)^2 }
	\times \dfrac{ \sumin \epsilon_i (W_i - \mu_W) \bm{z}' \bm{x}_{i(1)} }
	{ \sqrt{ \sigma_W^2 \sumin \epsilon_i^2 \left( \bm{z}' \bm{x}_{i(1)} \right)^2 } },
\end{align*}
where we note that
	$$
	\mathbb{E}
	\left(
		\sumin \epsilon_i W_i (\bm{z}' \bm{x}_{i(1)}) 
		\bigg| \mathcal{F}_n
	\right)
	= \sumin \epsilon_i (\bm{z}' \bm{x}_{i(1)}) \mathbb{E} (W_i)
	= \mu_W \sumin \epsilon_i (\bm{z}' \bm{x}_{i(1)}), 
	$$
and
	$$
	Var
	\left(
		\sumin \epsilon_i W_i (\bm{z}' \bm{x}_{i(1)}) 
		\bigg| \mathcal{F}_n
	\right)
	= \sumin \epsilon_i^2 (\bm{z}' \bm{x}_{i(1)})^2 Var(W_i) 
	= \sigma^2_W \sumin \epsilon_i^2 (\bm{z}' \bm{x}_{i(1)})^2 .
	$$
Now,
\begin{align*}
\dfrac{1}{n} \sumin \epsilon_i^2 \left( \bm{z}' \bm{x}_{i(1)} \right)^2 
&=  \bm{z}' \left( \dfrac{1}{n}\sumin \epsilon_i^2 \bm{x}_{i(1)} \bm{x}_{i(1)}' \right) \bm{z} \\
&= \bm{z}' \left(
	\sigma^2_\epsilon C_{n(11)} +
	\dfrac{1}{n}\sumin \left( \epsilon_i^2 - \sigma^2_\epsilon \right) \bm{x}_{i(1)} \bm{x}_{i(1)}'
\right) \bm{z} \\
&\to \bm{z}' \left( \sigma^2_\epsilon C_{11} \right) \bm{z} 
\quad a.s. \,\, P_D
\end{align*}
due to assumption (\ref{assume_boundedX}) and the Strong Law of Large Numbers. Thus,
$$
\sumin \epsilon_i^2 \left( \bm{z}' \bm{x}_{i(1)} \right)^2 
= \mathcal{O}(n) \quad a.s. \,\, P_D. 
$$
In addition, by assumptions (\ref{assume_4thmoment}) and (\ref{assume_boundedX}),
\begin{align*}
\dfrac{1}{n} \sumin \epsilon_i^4 \left( \bm{z}' \bm{x}_{i(1)} \right)^4 
&\leq (q M_1 \Vert \bm{z} \Vert_2)^4 
\left( \dfrac{1}{n} \sumin \epsilon_i^4  \right)
= \mathcal{O}(1) \quad a.s. \,\, P_D,
\end{align*}
which implies
$$
\sumin \epsilon_i^4 \left( \bm{z}' \bm{x}_{i(1)} \right)^4
= \mathcal{O}(n) \quad a.s. \,\, P_D.
$$
Therefore, by using assumptions (\ref{assume_4thmoment}) and (\ref{assume_boundedX}) and that $F_W$ has finite fourth moment, we could verify the Liapounov's sufficient condition
\begin{align*}
&\left[ 
	\sumin \epsilon_i^2 \left( \bm{z}'\bm{x}_{i(1)} \right)^2 Var(W_i) 
\right]^{-2}
\left[
	\sumin \epsilon_i^4 
	\left( \bm{z}'\bm{x}_{i(1)} \right)^4
\mathbb{E} (W_i - \mu_W)^4
\right] \\
&= \mathcal{O} \left( n^{-2} \right) 
\times
\mathcal{O} \left( n \right) 
= \mathcal{O} \left( n^{-1} \right)
\quad a.s. \,\, P_D, 
\end{align*}
in order to deploy the Lindeberg's Central Limit Theorem
$$
\dfrac{ \sumin \epsilon_i (W_i - \mu_W) \bm{z}' \bm{x}_{i(1)} }
{ \sqrt{ \sigma_W^2 \sumin \epsilon_i^2 \left( \bm{z}' \bm{x}_{i(1)} \right)^2 } }
\CONV{c.d.} N(0,1) \quad a.s. \,\, P_D. 
$$
Then, by Slutsky's Theorem, for every $\bm{z} \in \mathbb{R}^q$, 
$$
\bm{z}' \left[ \dfrac{1}{\sqrt{n}} X'_{(1)} (D_n - \mu_W I_n) \bm{\epsilon} \right] 
\CONV{c.d.} N 
\left(
	0 \,\, , \,\,
	\bm{z}' \left( \sigma^2_W \sigma^2_\epsilon C_{11} \right) \bm{z} 
\right).
$$
and by Cramer-Wold device,
$$
\dfrac{1}{\sqrt{n}} X'_{(1)} (D_n - \mu_W I_n) \bm{\epsilon} 
\CONV{c.d.} N_q 
\left(
\bm{0} \,\, , \,\,
\sigma^2_W \sigma^2_\epsilon C_{11} 
\right),
$$
Since assumption (\ref{assume_boundedX}) and Lemma \ref{lem_ASconv} ensure that for any $c >0$,
	$$
	\dfrac{1}{n^{\frac{1}{2}+c}} X'_{(1)} \bm{\epsilon}
	\to \bm{0} \quad a.s. \,\, P_D, 
	$$
we finally have
	$$
	\dfrac{1}{n^c} \znwa
	= \dfrac{1}{n^c} \left[ \dfrac{1}{\sqrt{n}} X'_{(1)} (D_n - \mu_W I_n) \bm{\epsilon} \right]
	+ \dfrac{\mu_W}{n^{\frac{1}{2}+c}} X'_{(1)} \bm{\epsilon}
	= o_p(1) \quad a.s. \,\, P_D.
	$$
\\

\begin{lem} \label{lem_Znw3}
	Assume (\ref{assume_4thmoment}), (\ref{assume_boundedX}) and (\ref{assume_X'X_11}).
	\begin{itemize}
		\item [ (a) ] If there exists $\frac{1}{2} < c_1 < c_2 < 1.5 - c_1$ and $0 \leq c_3 < 2(c_2 - c_1)$ for which $p_n = \mathcal{O}(n^{c_3})$, then
		$$
		\dfrac{1}{n^{c_2 - \frac{1}{2}}}
		\left\Vert \znwc \right\Vert_2
		= o_p(1) \quad a.s. \,\, P_D.
		$$
		\item [ (b) ] If there exists $\frac{1}{2} < c_1 < c_2 < 1.5 - c_1$ and $0 \leq c_3 < \frac{2}{3}(c_2 - c_1)$ for which $p_n = \mathcal{O}(n^{c_3})$, then
		$$
		\dfrac{p_n-q}{n^{c_2 - \frac{1}{2}}}
		\left\Vert \znwc \right\Vert_2
		= o_p(1) \quad a.s. \,\, P_D.
		$$
	\end{itemize} 
\end{lem}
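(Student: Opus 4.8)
The plan is to prove (a) and (b) in parallel, starting from the exact identity
$$\znwc \;=\; C_{n(21)}(C_{n(11)})^{-1}\znwa - \znwb \;=\; \tfrac{1}{\sqrt n}\,H_n' D_n \bm{\epsilon}, \qquad H_n := X_{(1)}(C_{n(11)})^{-1}C_{n(12)} - X_{(2)},$$
so that $H_n$ is the matrix already used in the proof of Lemma \ref{lem_CtildeW}; equivalently $H_n = (P_n - I_n)X_{(2)}$, where $P_n$ is the orthogonal projection onto the column space of $X_{(1)}$, whence $H_n'X_{(1)} = \bm 0$ and, since $q$ is fixed and by (\ref{assume_boundedX})--(\ref{assume_X'X_11}), every entry of $H_n$ is bounded by a fixed constant $M_H$. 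I would then split off the conditional mean, $H_n' D_n \bm{\epsilon} = H_n'(D_n - \mu_W I_n)\bm{\epsilon} + \mu_W H_n'\bm{\epsilon}$, the second summand being $\mathcal F_n$-measurable, and estimate the scaled norm of each piece separately.

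For the centered-weight summand, conditionally on $\mathcal F_n$ it has mean $\bm 0$ and the trace of its conditional covariance equals $\tfrac{\sigma_W^2}{n}\sum_{i=1}^n\big(\sum_{k=1}^{p_n-q}(H_n)_{ik}^2\big)\epsilon_i^2 \le \tfrac{\sigma_W^2(p_n-q)M_H^2}{n}\sum_{i=1}^n\epsilon_i^2 = \mathcal O(p_n)$ a.s.\ $P_D$ by the Strong Law of Large Numbers, so a conditional Chebyshev inequality gives $\tfrac{1}{\sqrt n}\|H_n'(D_n-\mu_W I_n)\bm{\epsilon}\|_2 = \mathcal O_p(\sqrt{p_n})$ a.s.\ $P_D$. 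Dividing by $n^{c_2-1/2}$ uses $p_n/n^{2c_2-1}\to 0$, which holds because $c_3 < 2(c_2-c_1) < 2c_2-1$; multiplying in addition by $(p_n-q)$ uses $p_n^3/n^{2c_2-1}\to 0$, which holds because $c_3 < \tfrac23(c_2-c_1) < \tfrac13(2c_2-1)$. The $C_{n(21)}(C_{n(11)})^{-1}\znwa$ contribution is disposed of similarly: $\|C_{n(21)}\|_2 \le \|C_{n(21)}\|_F = \mathcal O(\sqrt{p_n})$, $\|(C_{n(11)})^{-1}\|_2 = \mathcal O(1)$ from (\ref{assume_X'X_11}), and $\|\znwa\|_2 = o_p(n^{c})$ a.s.\ $P_D$ for every $c>0$ by Lemma \ref{lem_Znw1}, with room because $\tfrac12 c_3 < c_2-\tfrac12$ in case (a) and $\tfrac32 c_3 < c_2-\tfrac12$ in case (b).

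The real work is the $\mathcal F_n$-measurable term $\mu_W H_n'\bm{\epsilon} = \mu_W\big(C_{n(21)}(C_{n(11)})^{-1}X_{(1)}'\bm{\epsilon} - X_{(2)}'\bm{\epsilon}\big)$. Its projection-correction part is easy: $\|C_{n(21)}\|_2 = \mathcal O(\sqrt{p_n})$, $\|(C_{n(11)})^{-1}\|_2 = \mathcal O(1)$, and, $q$ being fixed, $\|X_{(1)}'\bm{\epsilon}\|_2 = o(n^{c'})$ a.s.\ $P_D$ for every $c'\in(\tfrac12,1)$ by Lemma \ref{lem_ASconv}, which after the scaling $n^{-c_2}$ (resp.\ $(p_n-q)n^{-c_2}$) is $o_p(1)$ because the rate restrictions leave slack in the choice of $c'$. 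Everything therefore reduces to
$$\tfrac{1}{n^{c_2}}\|X_{(2)}'\bm{\epsilon}\|_2 \to 0 \ \ \text{a.s. } P_D \quad\text{(part (a))}, \qquad \tfrac{p_n-q}{n^{c_2}}\|X_{(2)}'\bm{\epsilon}\|_2 \to 0 \ \ \text{a.s. } P_D \quad\text{(part (b))}.$$
Here I would write $\|X_{(2)}'\bm{\epsilon}\|_2^2 = \bm{\epsilon}'X_{(2)}X_{(2)}'\bm{\epsilon} = \sum_i\big(\sum_k x_{ki}^2\big)\epsilon_i^2 + \sum_{i\ne j}\big(\sum_k x_{ki}x_{kj}\big)\epsilon_i\epsilon_j =: D_n + R_n$. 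The diagonal part satisfies $D_n \le (p_n-q)M_1^2\sum_i\epsilon_i^2 = \mathcal O(p_n n)$ a.s.\ $P_D$ by the SLLN, which is $o(n^{2c_2})$, resp.\ $o(n^{2c_2}/p_n^2)$, under the rate restrictions. For the mean-zero degenerate quadratic form $R_n$ I would use moment bounds: $\mathbb E[R_n^2] \le 2\sigma_\epsilon^4\|X_{(2)}'X_{(2)}\|_F^2 = \mathcal O(p_n^2 n^2)$, and, crucially, a fourth-moment bound $\mathbb E[R_n^4] = \mathcal O(p_n^4 n^4)$ requiring only $\mathbb E(\epsilon^4)<\infty$ — no higher moment of $\bm{\epsilon}$ enters, because in every index configuration contributing to $\mathbb E[R_n^4]$ no coordinate of $\bm{\epsilon}$ recurs more than four times (within a factor $\epsilon_i\epsilon_j$ one has $i\ne j$). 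A Borel--Cantelli argument based on this bound — directly along all $n$ when $c_1$ is bounded away from $\tfrac12$, and along a rapidly growing polynomial subsequence with an interpolation step in general — yields $R_n = o(n^{2c_2})$, resp.\ $R_n = o(n^{2(c_2-c_3)})$, a.s.\ $P_D$. Assembling the three contributions proves both claims.

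I expect the last step to be the main obstacle: upgrading the conditional (in $P_W$) smallness of $R_n$ to almost-sure smallness under $P_D$ for the $\mathcal F_n$-measurable, growing-dimensional quadratic form $\sum_{i\ne j}\big(\sum_k x_{ki}x_{kj}\big)\epsilon_i\epsilon_j$. Because the errors are assumed to have only a finite fourth moment, no sub-exponential concentration is available, so the estimate $\mathbb E[R_n^4] = \mathcal O(p_n^4 n^4)$ is essentially all one can use, and it must be balanced carefully against the polynomial growth $p_n = \mathcal O(n^{c_3})$ with $c_3 < 2(c_2-c_1)$ (resp.\ $c_3<\tfrac23(c_2-c_1)$); the auxiliary exponent $c_1>\tfrac12$ in the hypotheses is precisely what supplies the slack that lets this balancing — and hence the Borel--Cantelli / subsequence argument — close. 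All the remaining pieces are routine once Lemmas \ref{lem_ASconv}, \ref{lem_Cnw11inv} and \ref{lem_Znw1} are in hand.
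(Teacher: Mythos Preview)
Your decomposition $\znwc = \tfrac{1}{\sqrt n}H_n'D_n\bm{\epsilon}$ and the split $D_n = (D_n-\mu_W I_n)+\mu_W I_n$ match the paper exactly, and your conditional Chebyshev bound on the centred-weight piece is equivalent to (and arguably tidier than) the paper's coordinate-wise Lyapunov CLT. The difference is in the $\mathcal F_n$-measurable piece $\tfrac{\mu_W}{\sqrt n}H_n'\bm{\epsilon}$. You split it again into $C_{n(21)}(C_{n(11)})^{-1}X_{(1)}'\bm{\epsilon}$ and $X_{(2)}'\bm{\epsilon}$, and for the latter mount a fourth-moment bound on the degenerate quadratic form $R_n$ followed by a Borel--Cantelli/subsequence argument. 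The paper does none of this: it simply applies Lemma~\ref{lem_ASconv} coordinate-wise to $\tfrac{1}{n^{c_1}}\sum_i h_{ji}\epsilon_i$, using only that the entries of $H_n$ are uniformly bounded, and then writes the squared norm as
\[
\frac{n^{2c_1-1}}{n^{2c_2-1}}\sum_{j=1}^{p_n-q}\Bigl(\tfrac{1}{n^{c_1}}\textstyle\sum_i h_{ji}\epsilon_i(W_i-\mu_W)+\tfrac{1}{n^{c_1}}\sum_i h_{ji}\epsilon_i\Bigr)^2 = \mathcal O\bigl(n^{2(c_1-c_2)}\bigr)\times o_p(n^{c_3}),
\]
which is $o_p(1)$ under $c_3<2(c_2-c_1)$ for~(a), and $o_p(1)$ after multiplying by $(p_n-q)^2$ under $c_3<\tfrac23(c_2-c_1)$ for~(b). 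So where you spend several paragraphs on $R_n$, the paper spends one line invoking Lemma~\ref{lem_ASconv}. Your route can be made to work, but it is considerably longer; the trade-off is that your moment-based argument is more explicitly uniform over the growing coordinate index~$j$, whereas the paper's coordinate-wise application of Lemma~\ref{lem_ASconv} leaves that uniformity implicit. A minor point: your sentence ``The $C_{n(21)}(C_{n(11)})^{-1}\znwa$ contribution is disposed of similarly'' is redundant once you have already written $\znwc=\tfrac{1}{\sqrt n}H_n'D_n\bm{\epsilon}$ and handled the full centred-weight summand---that term is already absorbed.
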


\begin{proof}
	Let 
	$$
	H = X_{(1)} C_{n(11)}^{-1} C_{n(12)} - X_{(2)}.
	$$
	Then 
	$$
	\znwc = \dfrac{1}{\sqrt{n}} H' D_n \bm{\epsilon}.
	$$
	Due to assumptions (\ref{assume_boundedX}) and (\ref{assume_X'X_11}) and that $q$ is fixed, every element of the matrix $H$ is bounded. Let $h_{ij}$ be the $(i,j)^{th}$ element of $H$. Then, for all $j = 1, \cdots, p_n-q$,
	$$
	\dfrac{1}{n} \sumin h_{ji}^2 \epsilon_i^2 
	= O(1) \,\,\, a.s. \,\, P_D
	\,\, \Longrightarrow \,\,  
	\sumin h_{ji}^2 \epsilon_i^2 
	= O(n) \,\,\, a.s. \,\, P_D,
	$$ 
	and
	$$
	\dfrac{1}{n} \sumin h_{ji}^4 \epsilon_i^4 
	= O(1) \,\,\, a.s. \,\, P_D
	\,\, \Longrightarrow \,\, 
	\sumin h_{ji}^4 \epsilon_i^4  
	= O(n) \,\,\, a.s. \,\, P_D  
	$$
	due to assumption (\ref{assume_4thmoment}). Next, we note that
	$$
	\mathbb{E}
	\left(
	\sumin h_{ji} \epsilon_i W_i
	\bigg| \mathcal{F}_n
	\right)
	= \sumin h_{ji} \epsilon_i \mathbb{E} (W_i)
	= \mu_W \sumin h_{ji} \epsilon_i , 
	$$
	and
	$$
	Var
	\left(
	\sumin h_{ji} \epsilon_i W_i
	\bigg| \mathcal{F}_n
	\right)
	= \sumin h_{ji}^2 \epsilon_i^2 Var(W_i) 
	= \sigma^2_W \sumin h_{ji}^2 \epsilon_i^2.
	$$
	By using assumptions (\ref{assume_4thmoment}) and (\ref{assume_boundedX}) and that $F_W$ has finite fourth moment, we could verify the Liapounov's sufficient condition
	\begin{align*}
	&\left[ 
		\sumin h_{ji}^2 \epsilon_i^2 Var(W_i) 
	\right]^{-2}
	\left[
		\sumin h_{ji}^4 \epsilon_i^4 
		\mathbb{E} (W_i - \mu_W)^4
	\right] \\
	&= \mathcal{O} \left( n^{-2} \right) 
	\times
	\mathcal{O} \left( n \right) 
	= \mathcal{O} \left( n^{-1} \right)
	\quad a.s. \,\, P_D, 
	\end{align*}
	in order to deploy the Lindeberg's Central Limit Theorem
	$$
	\dfrac{\sumin h_{ji} \epsilon_i (W_i - \mu_W) }
	{ \sqrt{ \sigma^2_W \sumin h_{ji}^2 \epsilon_i^2 } }
	\CONV{c.d.} N \left( 0, 1 \right)
	\quad a.s. \,\, P_D.
	$$
	Thus, for all $j = 1, \cdots, p_n-q$,
	\begin{align*}
	&\dfrac{1}{\sqrt{n}} \sumin h_{ji} \epsilon_i (W_i - \mu_W) \\
	&= \sqrt{ \frac{\sigma^2_W}{n} \sumin h_{ji}^2 \epsilon_i^2 } \times 
	\dfrac{\sumin h_{ji} \epsilon_i (W_i - \mu_W) }
	{ \sqrt{ \sigma^2_W \sumin h_{ji}^2 \epsilon_i^2 } }\\
	&= O_p(1) \quad a.s. \,\, P_D, 
	\end{align*}
	which leads to 
	$$
	\dfrac{1}{n^{c_1}} \sumin h_{ji} \epsilon_i (W_i - \mu_W) 
	= o_p(1) \quad a.s. \,\, P_D,
	$$
	whereas Lemma \ref{lem_ASconv} ensures that
	$$
	\dfrac{1}{n^{c_1}} \sumin h_{ji} \epsilon_i  
	\to 0 \quad a.s. \,\, P_D.
	$$
	Therefore, for part (a) of Lemma \ref{lem_Znw3},
	\begin{align*}
	&\left( 
		\dfrac{1}{n^{c_2 - \frac{1}{2}}}
		\left\Vert \znwc \right\Vert _2
	\right)^2 \\
	&=  \dfrac{1}{n^{2 c_2 - 1}}
		\left\Vert \znwc \right\Vert^2 _2 \\
	&= \dfrac{1}{n^{2 c_2 - 1}}
	\sum_{j=1}^{p_n-q}
	\left(
		\dfrac{1}{\sqrt{n}} \sumin h_{ji} \epsilon_i (W_i - \mu_W)
		+ \dfrac{1}{\sqrt{n}} \sumin h_{ji} \epsilon_i 
	\right)^2 \\
	&= \dfrac{n^{2 c_1 - 1}}{n^{2 c_2 - 1}}
	\sum_{j=1}^{p_n-q}
	\left(
		\dfrac{1}{n^{c_1}} \sumin h_{ji} \epsilon_i (W_i - \mu_W)
		+ \dfrac{1}{n^{c_1}} \sumin h_{ji} \epsilon_i 
	\right)^2 \\
	&= \mathcal{O} \left( n^{2(c_1 - c_2)} \right)
	\times o_p \left( n^{c_3} \right) \quad a.s. \,\, P_D \\
	&= o_p(1) \quad a.s. \,\, P_D 
	\end{align*}
	since $c_3 < 2(c_2 - c_1)$. \\
	
	For part (b) of Lemma \ref{lem_Znw3},
	\begin{align*}
	&\left( 
	\dfrac{p_n - q}{n^{c_2 - \frac{1}{2}}}
	\left\Vert \znwc \right\Vert _2
	\right)^2 \\
	&= \mathcal{O} \left( n^{2(c_1 - c_2 + c_3)} \right)
	\times o_p \left( n^{c_3} \right) \quad a.s. \,\, P_D \\
	&= o_p(1) \quad a.s. \,\, P_D 
	\end{align*}
	since $c_3 < \frac{2}{3}(c_2 - c_1)$.
\end{proof}

\begin{lem} \label{lem_X'DnEps}
	Assume (\ref{assume_boundedX}) and that $p_n = p$ is fixed. Then
	$$
	\dfrac{1}{n} X' D_n \bm{\epsilon}
	\CONV{c.p.} \bm{0} 
	\quad a.s. \,\, P_D.
	$$
\end{lem}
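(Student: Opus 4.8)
The plan is to reduce the vector statement to its $p$ scalar coordinates, which is legitimate precisely because $p$ is fixed, and then for each $j\in\{1,\dots,p\}$ to control $\frac1n\sumin x_{ij}W_i\epsilon_i$ by conditioning on $\mathcal F_n$ and applying Chebyshev's inequality in the $W$-probability. Concretely, I would fix a realized data sequence outside the relevant $P_D$-null set, treat the $\epsilon_i$ and $x_{ij}$ as constants, and use that $W_1,W_2,\dots$ are i.i.d.\ with mean $\mu_W$ and variance $\sigma_W^2$ to write
\[
\mathbb E\!\left(\tfrac1n\textstyle\sumin x_{ij}W_i\epsilon_i \,\big|\,\mathcal F_n\right) = \tfrac{\mu_W}{n}\textstyle\sumin x_{ij}\epsilon_i,
\qquad
\mathrm{Var}\!\left(\tfrac1n\textstyle\sumin x_{ij}W_i\epsilon_i \,\big|\,\mathcal F_n\right) = \tfrac{\sigma_W^2}{n^2}\textstyle\sumin x_{ij}^2\epsilon_i^2 .
\]

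Next I would show that both of these tend to $0$ for almost every data sequence. For the conditional mean, the coefficients $x_{ij}$ are bounded by $M_1$ by assumption~(\ref{assume_boundedX}) and the $\epsilon_i$ are i.i.d.\ mean-zero with finite variance, so Lemma~\ref{lem_ASconv} (applied with any $c\in(\tfrac12,1)$) gives $n^{-c}\sumin x_{ij}\epsilon_i\to 0$ a.s.\ $P_D$, whence $\frac1n\sumin x_{ij}\epsilon_i = n^{c-1}\big(n^{-c}\sumin x_{ij}\epsilon_i\big)\to 0$ a.s.\ $P_D$. For the conditional variance, $\frac{\sigma_W^2}{n^2}\sumin x_{ij}^2\epsilon_i^2 \le \frac{\sigma_W^2 M_1^2}{n}\cdot\frac1n\sumin\epsilon_i^2$, and $\frac1n\sumin\epsilon_i^2\to\sigma_\epsilon^2$ a.s.\ $P_D$ by the strong law of large numbers, so this bound is $\mathcal O(1/n)\to 0$ a.s.\ $P_D$. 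The conditional Chebyshev inequality then gives, for every $\delta>0$,
\[
P\!\left(\left|\tfrac1n\textstyle\sumin x_{ij}W_i\epsilon_i - \tfrac{\mu_W}{n}\textstyle\sumin x_{ij}\epsilon_i\right|>\delta\,\Big|\,\mathcal F_n\right)\le \frac{\sigma_W^2 M_1^2}{n\delta^2}\cdot\frac1n\sumin\epsilon_i^2 \ \to\ 0 \quad a.s. \,\, P_D,
\]
and combining this with the almost-sure vanishing of the centering term $\frac{\mu_W}{n}\sumin x_{ij}\epsilon_i$ yields $\frac1n\sumin x_{ij}W_i\epsilon_i \CONV{c.p.} 0$ a.s.\ $P_D$. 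Intersecting the finitely many exceptional $P_D$-null sets over $j=1,\dots,p$ delivers the vector statement.

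There is no real obstacle here: the result is essentially the analogue, with the full design matrix $X$ in place of $X_{(1)}$ and $p$ fixed, of the argument already carried out for $\znwa$ in Lemma~\ref{lem_Znw1}, read at the scaling $n^{-1/2}\cdot n^{-1/2}X'D_n\bm\epsilon$, so one could also simply invoke that argument verbatim. The only point needing a little care is the bookkeeping between the two sources of randomness: the Chebyshev step produces the convergence in $W$-probability (the ``$\CONV{c.p.}$''), while the qualifier ``$a.s.\ P_D$'' is consumed in the two statements ``conditional mean $\to 0$'' and ``conditional variance $\to 0$''. Hence all the $P_D$-almost-sure claims must be arranged to hold simultaneously on a single set of data sequences of full $P_D$-measure before the $W$-probability argument is applied.
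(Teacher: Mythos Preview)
Your proof is correct and follows essentially the same route as the paper: decompose $\tfrac1n X'D_n\bm\epsilon$ into the centered piece $\tfrac1n X'(D_n-\mu_W I_n)\bm\epsilon$ plus the mean piece $\tfrac{\mu_W}{n}X'\bm\epsilon$, handle the first by a conditional second-moment/Chebyshev bound (the paper phrases this as a weak law of large numbers citing \citet{junshao}, using the same $\tfrac{M_1^2}{n}\cdot\tfrac1n\sumin\epsilon_i^2$ estimate you give), and dispatch the second by the strong law for $\tfrac1n\sumin x_{ij}\epsilon_i$. The only cosmetic difference is that you invoke Lemma~\ref{lem_ASconv} explicitly for the mean term and spell out Chebyshev, whereas the paper packages both steps slightly more tersely.
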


\begin{proof}
	Let $\bm{x}_i$ and $x_{ij}$ be the $i^{th}$ row and $(i,j)^{th}$ element of $X$ respectively. Due to assumption (\ref{assume_boundedX}),
	$$
	\dfrac{1}{n} X' \bm{\epsilon} \to \bm{0} 
	\quad a.s. \,\, P_D,
	$$
	and for all $j = 1, \cdots, p$,
	\begin{align*}
	&\dfrac{1}{n^2} \sumin \mathbb{E}
	\left( x^2_{ji} \epsilon_i^2 W_i^2 \Big| \mathcal{F}_n \right) \\
	&= \dfrac{1}{n^2} \sumin x^2_{ji} \epsilon_i^2 \mathbb{E} (W_i^2) \\
	&\leq \dfrac{M_1^2 ( \sigma^2_W + \mu_W^2 ) }{n} \left(
	\dfrac{1}{n} \sumin \epsilon_i^2
	\right) \\
	&\to 0 \quad a.s. \,\, P_D.
	\end{align*}
	Hence, by the Weak Law of Large Numbers (e.g., Theorem 1.14(ii) of \citet{junshao}),
	$$
	\dfrac{1}{n} X' (D_n - \mu_W I_n) \bm{\epsilon}
	= \dfrac{1}{n} \sumin \epsilon_i (W_i  - \mu_W) \bm{x}_i 
	\CONV{c.p.} \bm{0} \quad a.s. \,\, P_D.
	$$
%	For all $j = 1, \cdots, p$ and for any $\xi > 0$, by Chebyshev's inequality,
%	\begin{align*}
%	P \left(
%		\left\vert
%			\dfrac{1}{n} \sumin \epsilon_i x_{ji} (W_i - \mu_W)
%		\right\vert
%		> \xi \Bigg| \mathcal{F}_n
%	\right) 
%	&\leq \dfrac{1}{\xi^2} Var
%	\left(
%		\dfrac{1}{n} \sumin \epsilon_i x_{ji} W_i 
%		\Bigg| \mathcal{F}_n
%	\right) \\
%	&= \dfrac{\sigma^2_W}{\xi^2 n^2} 
%	\sumin \epsilon_i^2 x^2_{ji} \\
%	&\to 0 \quad a.s. \,\, P_D.
%	\end{align*}
	Finally, 
	$$
	\dfrac{X' D_n \bm{\epsilon}}{n} 
	= \dfrac{1}{n} X' (D_n - \mu_W I_n) \bm{\epsilon} 
	+ \dfrac{\mu_W}{n} X' \bm{\epsilon}
	\CONV{c.p.} \bm{0} \quad a.s. \,\, P_D.  
	$$
\end{proof}

\begin{lem} \label{lem_X'DnResid}
	Suppose that $p_n = p$ is fixed. Assume (\ref{assume_4thmoment}), (\ref{assume_boundedX}), (\ref{assume_X'X}), and
	$$
	\dfrac{1}{\sqrt{n}} X' \bm{e}_n \to \bm{0} 
	\quad a.s. \,\, P_D,
	$$
	where $\bm{e}_n$ is the residual of the strongly consistent estimator $\bSC$ of the linear model (\ref{eq.LinearModel}). Then,
	$$
	\dfrac{1}{\sqrt{n}} X' D_n \bm{e}_n \,
	\CONV{c.d.} N_p \left( \bm{0}, \sigma^2_W \sigma^2_\epsilon C \right)
	\quad a.s. \,\, P_D.
	$$ 
\end{lem}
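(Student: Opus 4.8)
The plan is to peel off the constant part of the weight matrix by writing $D_n = (D_n - \mu_W I_n) + \mu_W I_n$, so that
$$
\frac{1}{\sqrt{n}} X' D_n \bm{e}_n
= \frac{1}{\sqrt{n}} X'(D_n - \mu_W I_n)\bm{e}_n
+ \frac{\mu_W}{\sqrt{n}} X'\bm{e}_n .
$$
The second term tends to $\bm{0}$ $a.s.$ $P_D$ by hypothesis, so by Slutsky's theorem it suffices to prove that $\frac{1}{\sqrt{n}} X'(D_n - \mu_W I_n)\bm{e}_n \CONV{c.d.} N_p(\bm{0}, \sigma^2_W \sigma^2_\epsilon C)$ $a.s.$ $P_D$. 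Conditionally on $\mathcal{F}_n$, for any fixed $\bm{z}\in\mathbb{R}^p$ the quantity $\bm{z}'\frac{1}{\sqrt{n}} X'(D_n - \mu_W I_n)\bm{e}_n = \frac{1}{\sqrt{n}}\sum_{i=1}^n (W_i - \mu_W)(\bm{z}'\bm{x}_i) e_{n,i}$ (with $e_{n,i}$ the $i$-th entry of $\bm{e}_n$) is a sum of independent mean-zero terms, so I would use the Cramér--Wold device together with the Lindeberg--Feller CLT verified via Liapounov's condition, exactly as in the proofs of Lemmas \ref{lem_Znw1} and \ref{lem_Znw3}.

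The key preparatory step is to tame the data-dependent residuals. Writing $\bm{e}_n = \bm{\epsilon} - X(\bSC - \bm{\beta}_0)$ and using strong consistency of $\bSC$ (so $\bSC - \bm{\beta}_0 \to \bm{0}$ $a.s.$ $P_D$) together with (\ref{assume_boundedX}), (\ref{assume_4thmoment}) and (\ref{assume_X'X}), I would establish, for each fixed $\bm{z}$,
$$
\frac{1}{n}\sum_{i=1}^n (\bm{z}'\bm{x}_i)^2 e_{n,i}^2 \to \sigma_\epsilon^2\, \bm{z}'C\bm{z}
\qquad\text{and}\qquad
\frac{1}{n}\sum_{i=1}^n (\bm{z}'\bm{x}_i)^4 e_{n,i}^4 = \mathcal{O}(1),
\quad a.s.\ P_D.
$$
For the first limit, expand $e_{n,i}^2 = \epsilon_i^2 - 2\epsilon_i\,\bm{x}_i'(\bSC-\bm{\beta}_0) + (\bm{x}_i'(\bSC-\bm{\beta}_0))^2$: the leading part gives $\bm{z}'\big(\frac{1}{n}\sum_i \epsilon_i^2 \bm{x}_i\bm{x}_i'\big)\bm{z}\to \sigma_\epsilon^2\bm{z}'C\bm{z}$ using (\ref{assume_X'X}) for the $\sigma_\epsilon^2\frac{1}{n}X'X$ piece and a Kronecker/Khintchine--Kolmogorov argument (as in Lemma \ref{lem_ASconv}) for $\frac{1}{n}\sum_i(\epsilon_i^2-\sigma_\epsilon^2)\bm{x}_i\bm{x}_i'\to \bm{0}$, while the cross and quadratic remainders are each $\mathcal{O}(1)$ (or $o(1)$) times $\bSC-\bm{\beta}_0\to\bm{0}$. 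For the second bound, note that eventually $|\bm{x}_i'(\bSC-\bm{\beta}_0)|\le \sqrt{p}\,M_1$, hence $e_{n,i}^4 \le \mathrm{const}\,(\epsilon_i^4 + 1)$ and the average is $\mathcal{O}(1)$ $a.s.$ $P_D$ by (\ref{assume_4thmoment}).

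Granted these, the conditional variance of $\bm{z}'\frac{1}{\sqrt{n}}X'(D_n-\mu_W I_n)\bm{e}_n$ equals $\frac{\sigma^2_W}{n}\sum_i(\bm{z}'\bm{x}_i)^2 e_{n,i}^2 \to \sigma^2_W\sigma^2_\epsilon\,\bm{z}'C\bm{z}$ $a.s.$ $P_D$, and the Liapounov ratio is
$$
\frac{\mathbb{E}(W-\mu_W)^4\sum_i(\bm{z}'\bm{x}_i)^4 e_{n,i}^4}{\big[\sigma^2_W\sum_i(\bm{z}'\bm{x}_i)^2 e_{n,i}^2\big]^2}
= \frac{\mathcal{O}(n)}{\mathcal{O}(n^2)} \to 0 \quad a.s.\ P_D,
$$
using $\mathbb{E}(W^4)<\infty$ and $\bm{z}'C\bm{z}>0$ for $\bm{z}\neq\bm{0}$ (the case $\bm{z}=\bm{0}$ being trivial). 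The Lindeberg CLT then gives conditional asymptotic standard normality of the normalized sum $a.s.$ $P_D$, and Slutsky (multiplying back the variance factor) yields $\bm{z}'\frac{1}{\sqrt{n}}X'(D_n-\mu_W I_n)\bm{e}_n \CONV{c.d.} N(0,\sigma^2_W\sigma^2_\epsilon\,\bm{z}'C\bm{z})$ $a.s.$ $P_D$; ranging $\bm{z}$ over a countable dense set and invoking Cramér--Wold upgrades this to the $N_p(\bm{0},\sigma^2_W\sigma^2_\epsilon C)$ limit, and adding back $\frac{\mu_W}{\sqrt{n}}X'\bm{e}_n$ finishes the proof. I expect the main obstacle to be exactly the careful handling of the random residuals $\bm{e}_n$ — making sure the two displayed limits hold on a single $P_D$-probability-one event and are strong enough to drive the conditional CLT — since $\bm{e}_n$ is not a clean i.i.d. array like $\bm{\epsilon}$; everything else is a routine adaptation of the Lindeberg/Cramér--Wold machinery already used in the preceding lemmas.
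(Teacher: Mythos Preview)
Your proposal is correct and follows essentially the same route as the paper: decompose $D_n=(D_n-\mu_W I_n)+\mu_W I_n$, kill the second piece via the hypothesis, then apply Cram\'er--Wold and the Lindeberg CLT (via Liapounov) conditionally on $\mathcal{F}_n$, with the two key moment estimates for $\frac{1}{n}\sum_i(\bm{z}'\bm{x}_i)^2 e_{n,i}^2$ and $\frac{1}{n}\sum_i(\bm{z}'\bm{x}_i)^4 e_{n,i}^4$ obtained exactly as you describe by expanding $e_{n,i}=\epsilon_i-\bm{x}_i'(\bSC-\bm{\beta}_0)$. Your explicit mention of a countable dense set for the Cram\'er--Wold step is a nice touch that makes the single $P_D$-null set more transparent than the paper's presentation.
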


\begin{proof}
	Due to assumption (\ref{assume_X'X}), 
	$$
	\dfrac{\sigma^2_\epsilon}{n} X'X 
	\to \sigma^2_\epsilon C.
	$$
	Since $\bSC$ is a strongly consistent estimator of $\bm{\beta}$ in (\ref{eq.LinearModel}), we have
	$$
	\left(
	\bSC - \bm{\beta}_0 
	\right)
	\to \bm{0} \quad a.s. \,\, P_D.  
	$$
	Let $\bm{x}_i$ be the $i^{th}$ row of $X$, and let $e_i$ be the $i^{th}$ element of $\bm{e}_n$. Due to assumption (\ref{assume_boundedX}) and Lemma \ref{lem_ASconv} and the fact that $\bSC$ is strongly consistent, 
	\begin{alignat*}{2}
	& &&\dfrac{1}{n} \sumin (e_i^2 - \sigma^2_{\epsilon}) \bm{x}_i \bm{x}'_i \\
	&= &&\dfrac{1}{n} \sumin
	\left(
	\left[
	\bm{x}'_i 
	\left( \bm{\beta}_0 - \bSC \right)
	+ \epsilon_i
	\right]^2
	- \sigma^2_{\epsilon}
	\right)
	\bm{x}_i \bm{x}'_i \\
	&= &&\dfrac{1}{n} \sumin (\epsilon_i^2 - \sigma^2_{\epsilon}) \bm{x}_i \bm{x}'_i \\
	& &&+ \dfrac{2}{n} \sumin \epsilon_i 
	\left[ \bm{x}'_i \left( \bm{\beta}_0 - \bSC \right) \right] 
	\bm{x}_i \bm{x}'_i \\
	& &&+ \dfrac{1}{n} \sumin
	\left[ \bm{x}'_i \left( \bm{\beta}_0 - \bSC \right) \right]^2 
	\bm{x}_i \bm{x}'_i \\ 
	&\to &&\,\, \bm{0} \quad a.s. \,\, P_D, 
	\end{alignat*}
	which leads to
	\begin{align} \label{eq:X'diag(e)X}
	\dfrac{1}{n} \sumin e_i^2 \bm{x}_i \bm{x}_i' 
	= \dfrac{1}{n} \sumin (e_i^2 - \sigma^2_{\epsilon}) \bm{x}_i \bm{x}_i' 
	+ \dfrac{\sigma^2_{\epsilon} }{n} X'X
	\to \sigma^2_{\epsilon} C 
	\quad a.s. \,\, P_D. 
	\end{align}
	Now for every $\bm{z} \in \mathbb{R}^p$, consider
	\begin{align*}
	&\bm{z}' \left[
		\dfrac{1}{\sqrt{n}} X' (D_n - \mu_W I_n) \bm{e}_n
	\right] \\ 
	&= \dfrac{1}{\sqrt{n}} \sumin 
		e_i (W_i - \mu_W) (\bm{z}' \bm{x}_i) \\
	&= \sqrt{ \dfrac{\sigma^2_W}{n} \sumin e_i^2 (\bm{z}' \bm{x}_i)^2 }
	\times \dfrac{ \sumin e_i (W_i - \mu_W) (\bm{z}' \bm{x}_i) }
	{ \sqrt{ \sigma^2_W \sumin e_i^2 (\bm{z}' \bm{x}_i)^2 } }.
	\end{align*}
	We verify that 
	$$
	\mathbb{E} \left\{
		\sumin e_i W_i (\bm{z}' \bm{x}_i)
		\bigg| \mathcal{F}_n 
	\right\}
	= \mu_W \sumin e_i  (\bm{z}' \bm{x}_i),
	$$
	and
	$$
	Var \left(
		\sumin e_i W_i (\bm{z}' \bm{x}_i)
		\bigg| \mathcal{F}_n 
	\right)
	= \sigma^2_W \sumin e_i^2  (\bm{z}' \bm{x}_i)^2.
	$$
	From (\ref{eq:X'diag(e)X}), we have
	\begin{align*}
	\dfrac{1}{n} \sumin e_i^2 (\bm{z}' \bm{x}_i)^2 
	=  \bm{z}' \left( 
		\dfrac{1}{n} \sumin e_i^2 \bm{x}_i \bm{x}_i'
	\right) \bm{z} 
	\to \bm{z}' \left(
		\sigma^2_\epsilon C
 	\right) \bm{z} 
 	\quad a.s. \,\, P_D,
	\end{align*}
	and thus
	$$
	\sumin e_i^2 (\bm{z}' \bm{x}_i)^2 
	= \mathcal{O}(n) \quad a.s. \,\, P_D.
	$$
	Due to assumptions (\ref{assume_4thmoment}) and (\ref{assume_boundedX}) and the fact that $\bSC$ is strongly consistent,
	\begin{align*}
	&\dfrac{1}{n} \sumin e_i^4 (\bm{z}' \bm{x}_i)^4 \\
	&\leq (p M_1 \Vert \bm{z} \Vert_2)^4 
	\times \left( \dfrac{1}{n} \sumin e_i^4 \right) \\
	&= (p M_1 \Vert \bm{z} \Vert_2)^4 \times
	\left( 
		\dfrac{1}{n} \sumin 
		\left[ 
			\epsilon_i - \bm{x}_i' \left( \bSC - \bm{\beta}_0 \right)  
		\right]^4 
	\right) \\
	&\leq (p M_1 \Vert \bm{z} \Vert_2)^4 \times
	\left[
		\dfrac{1}{n} \sumin 
		\left(
			|\epsilon_i| + p M_1 \left\Vert \bSC - \bm{\beta}_0 \right\Vert_2  
		\right)^4 
	\right] \\
	&= \mathcal{O}(1) \quad a.s. \,\, P_D,
	\end{align*}
	and thus
	$$
	\sumin e_i^4 (\bm{z}' \bm{x}_i)^4 
	= \mathcal{O}(n) \quad a.s. \,\, P_D.
	$$
	Since the i.i.d. random weights are drawn from $F_W$ which has finite fourth moment, the Liapounov's sufficient condition is satisfied 
	\begin{align*}
	&\left[ 
	\sumin e_i^2 (\bm{z}' \bm{x}_i)^2 Var(W_i) 
	\right]^{-2}
	\left[
	\sumin e_i^4 (\bm{z}' \bm{x}_i)^4 \mathbb{E} (W_i - \mu_W)^4
	\right] \\
	&= \mathcal{O} \left( n^{-2} \right)
	\times 
	\mathcal{O} \left( n \right) \\
	&= \mathcal{O} \left( n^{-1} \right) 
	\quad a.s. \,\, P_D
	\end{align*}
	in order to deploy the Lindeberg's Central Limit Theorem
	$$
	\dfrac{ \sumin e_i (W_i - \mu_W) (\bm{z}' \bm{x}_i) }
	{ \sqrt{ \sigma^2_W \sumin e_i^2 (\bm{z}' \bm{x}_i)^2 } }
	\CONV{c.d.} N(0,1) 
	\quad a.s. \,\, P_D.
	$$
	By Slutsky's Theorem, for every $\bm{z} \in \mathbb{R}^p$, 
	$$
	\bm{z}' \left[
		\dfrac{1}{\sqrt{n}} X' (D_n - \mu_W I_n) \bm{e}_n
	\right]
	\CONV{c.d.} N \left( 0 \,\, , \,\, \bm{z}' \left( \sigma^2_W \sigma^2_\epsilon C \right)  \bm{z} \right) 
	\quad a.s. \,\, P_D,
	$$
	and by Cramer-Wold device,
	$$
	\dfrac{1}{\sqrt{n}} X' (D_n - \mu_W I_n) \bm{e}_n
	\CONV{c.d.} N_p \left( \bm{0} \,\, , \,\, \sigma^2_W \sigma^2_\epsilon C \right) 
	\quad a.s. \,\, P_D.
	$$
	Finally,
	$$
	\dfrac{1}{\sqrt{n}} X' D_n \bm{e}_n
	\CONV{c.d.} N_p \left( \bm{0} \,\, , \,\, \sigma^2_W \sigma^2_\epsilon C \right) 
	\quad a.s. \,\, P_D
	$$
	since by assumption (\ref{assume:X'e}),
	$$
	\dfrac{\mu_W}{\sqrt{n}} X' \bm{e}_n \to \bm{0}
	\quad a.s. \,\, P_D. 
	$$

\end{proof}

We are now ready to prove the main results presented in the main text. The proof of Proposition \ref{lem_ModelSelect} is similar to that of Proposition 1 of \citet{BinYu}.

\begin{proof}[Proof of Proposition \ref{lem_ModelSelect}]
	First, we note that since rank($X$) = $p_n$, where $p_n \leq n$, the solution to (\ref{eq.Bnw.setup}) is unique by \citet{Osborne2000} and \citet{LassoUnique}. 
	We begin with weighting scheme (\ref{eq.Bnw.WeightPenalty2}).  Results for the  other  two  simpler  weighting  schemes  could  then  be easily inferred.   
	 \begin{alignat*}{2}
	 \bnw &= \argmin_{\bm{\beta}}
	 \Bigg\{
	 &&\dfrac{1}{n} (Y - X \bm{\beta})' D_n (Y - X \bm{\beta})  
	 + \dfrac{\lambda_n}{n} 
	 \sum_{j=1}^{p_n} W_{0,j} |\beta_{j}|
	 \Bigg\} \\
	 &= \argmin_{\bm{\beta}}
	 \Bigg\{
	 &&\dfrac{1}{n} 
	 [\bm{\epsilon} - X (\bm{\beta} - \bm{\beta}_0)]' 
	 D_n 
	 [\bm{\epsilon} - X (\bm{\beta} - \bm{\beta}_0)] \\
	 & &&+ \dfrac{\lambda_n}{n} 
	 \sum_{j=1}^{p_n} W_{0,j} | \beta_{0,j} + \beta_{j} - \beta_{0,j} |
	 \Bigg\}.  
	 \end{alignat*}
	 Therefore, 
	 \begin{alignat*}{2}
	 &(\bnw - \bm{\beta}_0) &&\\
	 &= \argmin_{\bm{u}_n}
	 \Bigg\{
	 &&\dfrac{1}{n} (\bm{\epsilon} - X \bm{u}_n)' D_n (\bm{\epsilon} - X \bm{u}_n)
	 + \dfrac{\lambda_n}{n} 
	 \sum_{j=1}^{p_n} W_{0,j} | \beta_{0,j} + u_{n,j} |
	 \Bigg\} \\
	 &= \argmin_{\bm{u}_n}
	 \Bigg\{
	 &&\bm{u}_n' \left( \dfrac{X' D_n X}{n} \right) \bm{u}_n
	 -2 \bm{u}_n' \left( \dfrac{X' D_n \bm{\epsilon}}{n} \right) 
	 + \dfrac{\bm{\epsilon}' D_n \bm{\epsilon}}{n} \\
	 & &&+ \dfrac{\lambda_n}{n} 
	 \sum_{j=1}^{p_n} W_{0,j} | \beta_{0,j} + u_{n,j} |
	 \Bigg\}. 
	 \end{alignat*}
	 The term $(\bm{\epsilon}' D_n \bm{\epsilon})/n$ could be dropped since for every $n$, it does not contain $\bm{u}_n$ and Lemma \ref{lem_EpsDnEps} ensures that it converges in conditional probability to a finite limit. Differentiating the first two terms with respect to $\bm{u}_n$ yields
	 $$
	 \dfrac{1}{n} \left\{ 2 X' D_n X \bm{u}_n - 2 X' D_n \bm{\epsilon} \right\}
	 = \dfrac{1}{n} \left\{ 2 \sqrt{n} 
	 \left[
	 C_n^w 
	 \left( \sqrt{n} \bm{u}_n \right) 
	 - 		\bm{Z}_n^w
	 \right] \right\}. 
	 $$
	 For $j = 1, \cdots, p_n$, considering sub-differentials of the penalty term with respect to $u_{n,j}$ yields
	 \begin{align*}
	 &\begin{cases}
	 \frac{\lambda_n}{n} W_{0,j} \times \text{sgn}
	 \left(\beta_{0,j} + u_{n,j} \right)
	 & \text{for $\beta_{0,j} + u_{n,j}  \neq 0$} \\
	 \frac{\lambda_n}{n}  W_{0,j} \times [-1,1]
	 & \text{for $\beta_{0,j} + u_{n,j}  = 0$} 
	 \end{cases} \\
	 &=\begin{cases}
	 \frac{\lambda_n}{n} W_{0,j} \times \text{sgn}
	 \left(\widehat{\beta}^w_{n,j} \right)
	 & \text{for $\widehat{\beta}^w_{n,j}  \neq 0$} \\
	 \frac{\lambda_n}{n}  W_{0,j} \times [-1,1]
	 & \text{for $\widehat{\beta}^w_{n,j}  = 0$} 
	 \end{cases}
	 \end{align*}
	 Note that $\bnw = \widehat{\bm{u}}_n + \bm{\beta}_0$, which can be partitioned into 
	 \[
	 \bnw = 
	 \begin{bmatrix}
	 \widehat{\bm{\beta}}^w_{n(1*)} \\
	 \widehat{\bm{\beta}}^w_{n(2*)} 
	 \end{bmatrix},
	 \]
	 where $\widehat{\bm{\beta}}^w_{n(1*)}$ consists of non-zero elements of $\bnw$, and $\widehat{\bm{\beta}}^w_{n(2*)} = \bm{0}$. The asterisk here is to distinguish the partition of random-weighting samples $\bnw$ from the true partition of $\bm{\beta}_0$. It follows that     
	 \begin{align*}
	 &2 \sqrt{n} 
	 \left[
	 C_n^w 
	 \left( \sqrt{n} \widehat{\bm{u}}_n \right) 
	 - \bm{Z}_n^w
	 \right] \\
	 &= 2 \sqrt{n} 
	 \left\{
	 \begin{bmatrix}
	 \cnwas & \cnwbs \\
	 \cnwcs & \cnwds 
	 \end{bmatrix}
	 \times \sqrt{n} 
	 \begin{bmatrix}
	 \hunas \\
	 \hunbs
	 \end{bmatrix} 
	 -
	 \begin{bmatrix}
	 \znwas \\
	 \znwbs
	 \end{bmatrix}
	 \right\}.
	 \end{align*}
	 Note that $\hunbs$ does not necessarily equal to $\bm{0}$ unless the partition of the random-weighting samples $\bnw$ coincides with the true partition of $\bm{\beta}_0$. As a consequence of the Karush-Kuhn-Tucker (KKT) conditions, we have 
	 \begin{align} \label{eq.KKT1}
	 \cnwas \left[\sqrt{n} \hunas \right] 
	 + \cnwbs \left[\sqrt{n} \hunbs \right] 
	 - \znwas
	 = - \dfrac{\lambda_n}{ 2 \sqrt{n} }
	 \bm{W}_{0(1)} \circ
	 \text{sgn} \left( \widehat{\bm{\beta}}^w_{n(1*)} \right) 
	 \end{align}
	 and
	 \begin{align} \label{eq.KKT2}
	 \left|
	 \cnwcs \left[ \sqrt{n} \hunas \right]
	 + \cnwds \left[ \sqrt{n} \hunbs \right]
	 - \znwbs
	 \right|
	 \leq \dfrac{\lambda_n}{2 \sqrt{n}} 
	 \bm{W}_{0(2)}
	 \end{align}
	 element-wise. Meanwhile, we also note that
	 \begin{align*}
	 \left\{
	 \left\vert \widehat{\bm{u}}_{n(1)} \right\vert
	 <  \left\vert \bm{\beta}_{0(1)} \right\vert
	 \right\} 
	 &=
	 \left\{
	 \widehat{\bm{u}}_{n(1)} 
	 < \left\vert \bm{\beta}_{0(1)} \right\vert
	 \right\}
	 \bigcap
	 \left\{
	 \widehat{\bm{u}}_{n(1)} 
	 > - \left\vert \bm{\beta}_{0(1)} \right\vert
	 \right\} \\
	 &=
	 \left\{
	 \widehat{\bm{\beta}}^w_{n(1)} 
	 < \bm{\beta}_{0(1)}
	 + \left\vert \bm{\beta}_{0(1)} \right\vert
	 \right\}
	 \bigcap
	 \left\{
	 \widehat{\bm{\beta}}^w_{n(1)} 
	 > \bm{\beta}_{0(1)}
	 - \left\vert \bm{\beta}_{0(1)} \right\vert
	 \right\},
	 \end{align*}
	 where all inequalities hold element-wise. Thus, $\widehat{\bm{\beta}}^w_{n(1)} < 0$ element-wise if $\bm{\beta}_{0(1)} < 0$ element-wise, and vice versa. In other words,
	 \begin{align} \label{eq.activesubset}
	 \left\{
	 \text{sgn} \left( \widehat{\bm{\beta}}^w_{n(1)} \right)
	 = \text{sgn} \left( \bm{\beta}_{0(1)} \right)
	 \right\}
	 \supseteq
	 \left\{
	 \left\vert \widehat{\bm{u}}_{n(1)} \right\vert
	 <  \left\vert \bm{\beta}_{0(1)} \right\vert
	 \text{ element-wise}
	 \right\}.
	 \end{align}
	 Therefore, by (\ref{eq.KKT1}), (\ref{eq.KKT2}), (\ref{eq.activesubset}), and uniqueness of solution for the random-weighting setup (\ref{eq.Bnw.setup}), if there exists $\widehat{\bm{u}}_n$ such that the following equation and inequalities hold:
	 	\begin{align}
	 &\cnwa \left[ \sqrt{n} \huna \right] - \znwa 
	 = - \dfrac{\lambda_n}{2 \sqrt{n}}
	 \bm{W}_{0(1)} \circ
	 \text{sgn} \left( \bm{\beta}_{0(1)} \right) \label{eq.ineq.1} \\
	 &- \dfrac{\lambda_n}{2 \sqrt{n}} \bm{W}_{0(2)} \leq
	 \cnwc \left[ \sqrt{n} \huna \right] - \znwb 
	 \leq  \dfrac{\lambda_n}{2 \sqrt{n}} \bm{W}_{0(2)}
	 \,\,\, \text{element-wise} \label{eq.ineq.2} \\
	 &\left| \huna \right| < \left| \bm{\beta}_{0(1)} \right|
	 \quad \text{element-wise} \label{eq.ineq.3} ,
	 \end{align}
	 then we have $\text{sgn} \left( \widehat{\bm{\beta}}^w_{n(1)} \right)
	 = \text{sgn} \left[ \bm{\beta}_{0(1)} \right]$
	 and
	 $\hunb = \widehat{\bm{\beta}}^w_{n(2)} = \bm{\beta}_{0(2)} = \bm{0}$,
	 ie. 
	 $$
	 \bnw \stackrel{s}{=} \bm{\beta}_0,
	 $$
	 and 
	 \begin{alignat*}{2}
	 & P \Bigg(\bnw &&\stackrel{s}{=} \bm{\beta}_0 \bigg| \mathcal{F}_n \Bigg) \\
	 &\geq
	 P \bigg(
	 &&\left\{
	 \left| \cnwc \left[ \sqrt{n} \huna \right] - \znwb \right|
	 \leq  \dfrac{\lambda_n}{2 \sqrt{n}} \bm{W}_{0(2)}
	 \,\,\, \text{{\rm element-wise}}
	 \right\} \\
	 & &&\bigcap \left\{
	 \cnwa \left[ \sqrt{n} \huna \right] - \znwa 
	 = - \dfrac{\lambda_n}{2 \sqrt{n}}
	 \bm{W}_{0(1)} \circ
	 \text{sgn} \left[ \bm{\beta}_{0(1)} \right]
	 \right\} \\
	 & &&\bigcap \left\{
	 \left| \huna \right| < \left| \bm{\beta}_{0(1)} \right|
	 \,\,\, \text{{\rm element-wise}}
	 \right\}
	 \bigg| \mathcal{F}_n
	 \bigg).  %\quad a.s. \,\, P_D.
	 \end{alignat*}
	 %\textcolor{red}{Note that the lower bound of the conditional probability $P \left( \bnw \stackrel{s}{=} \bm{\beta_0} \Big| \mathcal{F}_n \right)$ holds $a.s.$-$P_D$ because the preceding Set arguments are valid for almost every sample path $\{y_1, y_2, \cdots \}$.} 
	 Now we proceed to simplify these equation and inequalities (\ref{eq.ineq.1}), (\ref{eq.ineq.2})  and (\ref{eq.ineq.3}). 
	 Equation (\ref{eq.ineq.1}) can be re-written as 
	 \begin{align} \label{eq.rewrite.ineq1.}
	 \sqrt{n} \huna 
	 = \left( \cnwa \right)^{-1}
	 \left[
	 	\znwa - \dfrac{\lambda_n}{2 \sqrt{n}}
	 	\bm{W}_{0(1)} \circ
	 	\text{sgn} \left[ \bm{\beta}_{0(1)} \right]
	 \right].
	 \end{align}
	 Substituting inequality (\ref{eq.ineq.3}) into equation (\ref{eq.rewrite.ineq1.}) above leads to $A_n^w$. Replace the expression
	 $$
	 \bm{W}_{0(1)} \circ \text{sgn} \left[\bm{\beta}_{0(1)} \right] 
	 $$
	 in equation (\ref{eq.rewrite.ineq1.}) with $W_0 \text{sgn} \left[ \bm{\beta}_{0(1)} \right]$ and $\text{sgn} \left[ \bm{\beta}_{0(1)} \right]$ for weighting schemes (\ref{eq.Bnw.WeightPenalty}) and (\ref{eq.NoWeightPenalty}) respectively to obtain $A_n^w$ .
	  
	 Next, substituting equation (\ref{eq.rewrite.ineq1.}) into inequality (\ref{eq.ineq.2}) and simple arithmetic yield
	 \begin{equation*} 
	 \begin{split} 
	 \widetilde{B}_n^w \equiv \bigg\{
	 &\left\vert
	 \widetilde{C}^w_n \znwa 
	 + \znwc
	 - \dfrac{\lambda_n}{2 \sqrt{n}} 
	 \cnwc \left( \cnwa \right)^{-1} 
	 \bm{W}_{0(1)} \circ
	 \text{sgn} \left[ \bm{\beta}_{0(1)} \right] 
	 \right\vert \\
	 &- \dfrac{\lambda_n}{2 \sqrt{n}} 
	 \left|
	 C_{n(21)} C^{-1}_{n(11)}
	 \bm{W}_{0(1)} \circ
	 \text{sgn} \left[ \bm{\beta}_{0(1)} \right] 
	 \right|  \\
	 &\leq
	 \dfrac{\lambda_n}{2 \sqrt{n}} 
	 \left(
	 \bm{W}_{0(2)} 
	 - \left\vert
	 C_{n(21)} C^{-1}_{n(11)} 
	 \bm{W}_{0(1)} \circ
	 \text{sgn} \left[ \bm{\beta}_{0(1)} \right] 
	 \right\vert
	 \right)
	 \text{ element-wise}
	 \bigg\}
	 \end{split}
	 \end{equation*} 
	 for weighting scheme (\ref{eq.Bnw.WeightPenalty2}). Now, observe that $B_n^w \subseteq \widetilde{B}_n^w$, since (LHS of $B_n^w) \geq$ (LHS of $\widetilde{B}_n^w$) element-wise. Thus,
	 $$
	 P \left( \bnw \stackrel{s}{=} \bm{\beta}_0 \bigg| \mathcal{F}_n \right) 
	 \geq
	 P \left( 
	 A_n^w \cap \widetilde{B}_n^w 
	 \big| \mathcal{F}_n
	 \right) 
	 \geq
	 P \left( 
	 A_n^w \cap B_n^w 
	 \big| \mathcal{F}_n
	 \right).
	 %\,\, a.s. \,\, P_D.
	 $$
	 For weighting scheme (\ref{eq.Bnw.WeightPenalty}),
	 \begin{equation} \label{eq.BnwTilde}
	 \begin{split} 
	 \widetilde{B}_n^w \equiv \bigg\{
	 &\left\vert
	 \widetilde{C}^w_n \znwa 
	 + \znwc
	 - \dfrac{\lambda_n W_0}{2 \sqrt{n}} 
	 \cnwc \left( \cnwa \right)^{-1} 
	 \text{sgn} \left[ \bm{\beta}_{0(1)} \right] 
	 \right\vert \\
	 &- \dfrac{\lambda_n W_0}{2 \sqrt{n}} 
	 \left|
	 C_{n(21)} C^{-1}_{n(11)}
	 \text{sgn} \left[ \bm{\beta}_{0(1)} \right] 
	 \right|  \\
	 &\leq
	 \dfrac{\lambda_n W_0}{2 \sqrt{n}} 
	 \left(
	 \bm{1}_{p_n-q}
	 - \left\vert
	 C_{n(21)} C^{-1}_{n(11)} 
	 \text{sgn} \left[ \bm{\beta}_{0(1)} \right] 
	 \right\vert
	 \right)
	 \text{ element-wise}
	 \bigg\}. 
	 \end{split}
	 \end{equation} 
	 Now, observe that $B_n^w \subseteq \widetilde{B}_n^w$ , since (LHS of $B_n^w) \geq$ (LHS of $\widetilde{B}_n^w$) element-wise, whereas (RHS of $B_n^w) \leq$ (RHS of $\widetilde{B}_n^w$) element-wise due to the Irrepresentable condition (\ref{assume_StrongIrrepresent}). Therefore,
	 $$
	 P \left( \bnw \stackrel{s}{=} \bm{\beta}_0 \bigg| \mathcal{F}_n \right) 
	 \geq
	 P \left( 
	 A_n^w \cap \widetilde{B}_n^w 
	 \big| \mathcal{F}_n
	 \right) 
	 \geq
	 P \left( 
	 A_n^w \cap B_n^w 
	 \big| \mathcal{F}_n
	 \right).
	 %\,\, a.s. \,\, P_D.
	 $$
	 For weighting scheme (\ref{eq.NoWeightPenalty}), substitute $W_0 = 1$ in (\ref{eq.BnwTilde}) and the result follows. 
\end{proof}

\begin{proof}[Proof of Theorem \ref{thm_Model_Select}]
	From Proposition \ref{lem_ModelSelect}, 
	\begin{align*}
	P\left(
	\bnw (\lambda_n) \stackrel{s}{=} \bm{\beta}_0
	\big\vert \mathcal{F}_n
	\right)	
	&\geq P \left( 
	A_n^w \bigcap B_n^w 
	\big\vert \mathcal{F}_n
	\right) \\
	&= 1- P \left[
	\left( 
	A_n^w \bigcap B_n^w 
	\right)^c
	\bigg\vert \mathcal{F}_n
	\right] \\
	&= 1- P \left[
	\left( A_n^w \right)^c 
	\bigcup 
	\left( B_n^w  \right)^c
	\big\vert \mathcal{F}_n
	\right] \\
	&\geq 1 - \Big\{
	P \left[ 
	\left( A_n^w \right)^c
	\big\vert \mathcal{F}_n 
	\right]
	+ P \left[ 
	\left( B_n^w \right)^c
	\big\vert \mathcal{F}_n
	\right]
	\Big\}.
	\end{align*}
	We now investigate the conditional probabilities $P \left[ \left( A_n^w \right)^c \big\vert \mathcal{F}_n \right]$ and $P \left[ \left( B_n^w \right)^c \big\vert \mathcal{F}_n \right]$ separately. All three weighting schemes (\ref{eq.NoWeightPenalty}), (\ref{eq.Bnw.WeightPenalty}) and (\ref{eq.Bnw.WeightPenalty2}) share very similar $P \left[ \left( A_n^w \right)^c \big\vert \mathcal{F}_n \right]$. We start off with the most general version (\ref{eq.Bnw.WeightPenalty2}) of the weighting schemes. Results for the other two simpler weighting schemes could then be easily inferred. For ease of notation, let
	\begin{align*}
	\bm{z}_n = [z_{n,1}, \cdots, z_{n,q}]' := \left( \cnwa \right)^{-1} 
	\left(
	\znwa - 
	\dfrac{\lambda_n}{2 \sqrt{n}}
	\bm{W}_{0(1)} \circ
	\text{sgn} \left[ \bm{\beta}_{0(1)} \right]
	\right).
	\end{align*}
	Note that
	$$
	\dfrac{\lambda_n}{2n}
	\bm{W}_{0(1)} \circ
	\text{sgn} \left[ \bm{\beta}_{0(1)} \right]
	\stackrel{p}{\longrightarrow} \bm{0}.
	$$
	Hence, by Lemmas \ref{lem_Cnw11inv} and \ref{lem_Znw1},
	\begin{align*}
	P \left[
	\left( A_n^w \right)^c
	| \mathcal{F}_n
	\right] 
	&= P \left(
		\bigcup_{j=1}^q
		\Big\{
			\left\vert z_{n,j} \right\vert
			> \sqrt{n} \left\vert \beta_{0,j} \right\vert
		\Big\}
		\bigg\vert \mathcal{F}_n 
	\right) \\
	&\leq \sum_{j=1}^q P 
	\left(
		\frac{1}{\sqrt{n}}
		\left\vert z_{n,j} \right\vert
		> \left\vert \beta_{0,j} \right\vert
		\bigg\vert \mathcal{F}_n 
	\right) \\
	&\to 0 \quad a.s. \,\, P_D,
	\end{align*}
	because for all $j = 1, \cdots, q$, we have $\left\vert \beta_{0,j} \right\vert > 0$ but
	$$
	\frac{1}{\sqrt{n}}
	\left\vert z_{n,j} \right\vert = o_p(1)
	\quad a.s. \,\, P_D.
	$$ 
	For weighting schemes (\ref{eq.Bnw.WeightPenalty}) and (\ref{eq.NoWeightPenalty}), replace the expression
	$$
	\bm{W}_{0(1)} \circ \text{sgn} \left[\bm{\beta}_{0(1)} \right] 
	$$
	with $W_0 \text{sgn} \left[ \bm{\beta}_{0(1)} \right]$ and $\text{sgn} \left[ \bm{\beta}_{0(1)} \right]$ respectively to obtain the same result 
	$$
	P \left[ \left( A_n^w \right)^c | \mathcal{F}_n \right] 
	\to 0 \quad a.s. \,\, P_D.
	$$ 
	
	We now turn our attention to $P \left[ \left( B_n^w \right)^c \big\vert \mathcal{F}_n \right]$, where weighting scheme (\ref{eq.Bnw.WeightPenalty2}) is markedly different -- and derived separately -- from weighting schemes (\ref{eq.NoWeightPenalty}) and (\ref{eq.Bnw.WeightPenalty}). We first consider weighting scheme (\ref{eq.Bnw.WeightPenalty}), and then infer the result for weighting scheme (\ref{eq.NoWeightPenalty}) as a special case. For ease of notation, define
	\begin{align*}
	\bm{\zeta}_n 
	&= 
	\left[
		\zeta_{n,1}, \cdots, \zeta_{n,p_n-q}
	\right]' 
	:= \znwc, \\    
	\bm{\nu}_n
	&=\left[
		\nu_{n,1}, \cdots, \nu_{n,p_n-q}
	\right]' 
	:= \widetilde{C}^w_n 
	\left(
		\znwa - 
		\dfrac{\lambda_n W_0}{2 \sqrt{n}} 
		\text{sgn} \left[ \bm{\beta}_{0(1)} \right]
	\right).
	\end{align*}
	Then, for any $\xi > 0$,
		\begin{alignat*}{2}
	& &&P \left[ 
		\left( B_n^w \right)^c \big| \mathcal{F}_n 
	\right] \\
	&= P
	&&\left(
		\bigcup_{j=1}^{p_n - q}
		\left\{
			\left\vert
				\zeta_{n,j} + \nu_{n,j}
			\right\vert
			> \dfrac{\lambda_n}{2 \sqrt{n}} \eta_j
		\right\}
		\Bigg| \mathcal{F}_n
	\right) \\
	&\leq P
	&&\left(
		\bigcup_{j=1}^{p_n - q}
		\left\{
			\left\vert \zeta_{n,j} \right\vert
			+ \left\vert \nu_{n,j} \right\vert
			> \dfrac{\lambda_n}{2 \sqrt{n}} \eta_j
		\right\}
		\Bigg| \mathcal{F}_n
	\right) \\
	&\leq P
	&&\left(
		\bigcup_{j=1}^{p_n - q}
		\left[
			\left\{
				\left\vert \zeta_{n,j} \right\vert
				+ \left\vert \nu_{n,j} \right\vert
			> \dfrac{\lambda_n}{2 \sqrt{n}} \eta_j
			\right\}
			\bigcap
			\Big\{
				\left\vert
					\nu_{n,j}
				\right\vert
				\leq \xi
			\Big\}
		\right]
	\Bigg| \mathcal{F}_n
	\right) \\
	& &&+ P \left(
		\bigcup_{j=1}^{p_n - q}
		\left[
			\left\{
				\left\vert \zeta_{n,j} \right\vert
				+ \left\vert \nu_{n,j} \right\vert
				> \dfrac{\lambda_n}{2 \sqrt{n}} \eta_j
			\right\}
			\bigcap
			\Big\{
				\left\vert
					\nu_{n,j}
				\right\vert
				> \xi
			\Big\}
		\right]
		\Bigg| \mathcal{F}_n
	\right) \\
	&\leq P &&\left(
		\bigcup_{j=1}^{p_n - q}
		\left\{
			\left\vert \zeta_{n,j} \right\vert
			> \dfrac{\lambda_n}{2 \sqrt{n}} \eta_j - \xi
		\right\}
		\Bigg| \mathcal{F}_n
	\right)
	+ P \left(
		\bigcup_{j=1}^{p_n - q}
		\left\{
			\left\vert \nu_{n,j} \right\vert
			> \xi
		\right\}
		\Bigg| \mathcal{F}_n
	\right) \\
	&\leq && P \left(
		\bigcup_{j=1}^{p_n - q}
		\left\{
			\left\vert \zeta_{n,j} \right\vert
			> \dfrac{\lambda_n W_0}{2 \sqrt{n}} \eta_j - \xi
		\right\}
		\Bigg| \mathcal{F}_n
	\right) 
	+ P \left(
		\left\Vert \bm{\nu}_n \right\Vert_2
		> \xi
		\Big| \mathcal{F}_n
	\right).
	\end{alignat*}
	Since 
	$$
	\dfrac{\lambda_n W_0}{n^{1.5-c_1}} 
	\text{sgn} \left[ \bm{\beta}_{0(1)} \right]
	= o_p(1),
	$$
	we have, by Lemmas \ref{lem_CtildeW} and \ref{lem_Znw1},
	$$
	\left\Vert \bm{\nu}_n \right\Vert_2 
	\leq \left\Vert n^{1 - c_1} \widetilde{C}^w_n \right\Vert_2 
	\left\Vert 
	\dfrac{1}{n^{1 - c_1}} \znwa 
	- \dfrac{\lambda_n W_0}{2 n^{1.5 - c_1}} \text{sgn} \left[ \bm{\beta}_{0(1)} \right] 
	\right\Vert_2 
	= o_p(1) \quad a.s. \,\, P_D,
	$$
	and thus,
	$$
	P \left(
	\left\Vert \bm{\nu}_n \right\Vert_2
	> \xi 
	\Big| \mathcal{F}_n
	\right)
	= o(1) \quad a.s. \,\, P_D.
	$$
	Now, let
	$$
	\eta_* = \min_{1 \leq j \leq p_n-q} \eta_j,
	$$
	and note that $0 < \eta_* \leq 1$ from assumption (\ref{assume_StrongIrrepresent}). Then, 
	\begin{align*}
	&P \left(
		\bigcup_{j=1}^{p_n - q}
		\left\{
			\left\vert \zeta_{n,j} \right\vert
			> \dfrac{\lambda_n W_0}{2 \sqrt{n}} \eta_j - \xi
		\right\}
		\Bigg| \mathcal{F}_n
	\right) \\
	&\leq P \left(
		\bigcup_{j=1}^{p_n - q}
		\left\{
			\left\vert \zeta_{n,j} \right\vert
			> \dfrac{\lambda_n W_0}{2 \sqrt{n}} \eta_* - \xi
		\right\}
		\Bigg| \mathcal{F}_n
	\right) \\
	&= P \left(
		\max_{1 \leq j \leq p_n-q}
		\big\vert \zeta_{n,j} \big\vert
		> \dfrac{\lambda_n W_0}{2 \sqrt{n}} \eta_* - \xi
		\Bigg| \mathcal{F}_n
	\right) \\
	&\leq P \left(
		\big\Vert \bm{\zeta}_n \big\Vert_2
		> \dfrac{\lambda_n W_0}{2 \sqrt{n}} \eta_* - \xi
		\Bigg| \mathcal{F}_n
	\right) \\
	&= P \left(
		\dfrac{1}{n^{c_2 - \frac{1}{2}}}
		\Big(
			\big\Vert \bm{\zeta}_n \big\Vert_2
			+ \xi
		\Big)
		> \dfrac{\lambda_n W_0}{2 n^{c_2}} \eta_* 
	\Bigg| \mathcal{F}_n
	\right) \\
	&= o(1) \quad a.s. \,\, P_D,
	\end{align*}
	because
	$$
	\dfrac{\lambda_n W_0}{2 n^{c_2}} \eta_*
	= \mathcal{O}_p(1)
	$$
	whereas part (a) of Lemma \ref{lem_Znw3} ensures that 
	$$
	\dfrac{1}{n^{c_2 - \frac{1}{2}}}
	\Big(
	\big\Vert \bm{\zeta}_n \big\Vert_2
	+ \xi
	\Big)
	= o_p(1) \quad a.s. \,\, P_D.
	$$
	Thus, for weighting scheme (\ref{eq.Bnw.WeightPenalty}), we have just shown that 
	$$
	P \left[ \left( B_n^w \right)^c \big\vert \mathcal{F}_n \right]
	= o(1) \quad a.s. \,\, P_D.
	$$
	
	For weighting scheme (\ref{eq.NoWeightPenalty}), take $W_0 = 1$ and repeat the preceding steps to obtain the same result. \\
	
	Now, for weighting scheme (\ref{eq.Bnw.WeightPenalty2}), define 
	\begin{align*}
	\bm{\nu}_n
	&=\left[
		\nu_{n,1}, \cdots, \nu_{n,p_n-q}
	\right]' 
	:= \widetilde{C}^w_n 
	\left(
		\znwa - 
		\dfrac{\lambda_n}{2 \sqrt{n}} 
		\bm{W}_{0(1)} \circ
		\text{sgn} \left[ \bm{\beta}_{0(1)} \right]
	\right), \\
	\bm{\gamma}_n
	&= \left[
		\gamma_{n,1}, \cdots, \gamma_{n,p_n-q}
	\right]' 
	:= C_{n(21)} C_{n(11)}^{-1} 
	\bm{W}_{0(1)} \circ
	\text{sgn} \left[ \bm{\beta}_{0(1)} \right].
	\end{align*}
	and for any $\xi > 0$,
	\begin{align*}
	& P \left[ 
	\left( B_n^w \right)^c \big| \mathcal{F}_n 
	\right] \\
	&= P \left(
		\bigcup_{j=1}^{p_n - q}
		\left\{
			\left\vert
				\zeta_{n,j} + \nu_{n,j}
			\right\vert
			> \dfrac{\lambda_n}{2 \sqrt{n}} 
			\Big(
				W_{0(2),j} - | \gamma_{n,j} |
			\Big)
		\right\}
		\Bigg| \mathcal{F}_n
	\right) \\
	&\leq P \left(
	\bigcup_{j=1}^{p_n - q}
	\left\{
	\left\vert \zeta_{n,j} \right\vert
	> \dfrac{\lambda_n }{2 \sqrt{n}} 
	\Big(
	W_{0(2),j} - | \gamma_{n,j} |
	\Big)
	- \xi
	\right\}
	\Bigg| \mathcal{F}_n
	\right) 
	+ P \left(
	\left\Vert \bm{\nu}_n \right\Vert_2
	> \xi 
	\Big| \mathcal{F}_n
	\right).
	\end{align*}
	Again,
	$$
	\dfrac{\lambda_n}{n^{1.5-c_1}} 
	\bm{W}_{0(1)} \circ
	\text{sgn} \left[ \bm{\beta}_{0(1)} \right]
	= o_p(1),
	$$
	%\textcolor{blue}{[Tun, is this elementwise?]}
	so, by Lemmas \ref{lem_CtildeW} and \ref{lem_Znw1},
	$$
	P \left(
	\left\Vert \bm{\nu}_n \right\Vert_2
	> \xi 
	\Big| \mathcal{F}_n
	\right)
	= o(1) \quad a.s. \,\, P_D.
	$$
	Notice how the penalty weights $\bm{W}_{0(1)}$ and $\bm{W}_{0(2)}$ upend the strong irrepresentable condition (\ref{assume_StrongIrrepresent}). Specifically, 
	$$
	P \left(
		W_{0(2),j} - | \gamma_{n,j} |
		< 0
	\right)
	> 0,
	$$
	which then renders the probability bound to be unhelpful. Instead, notice that from the strong irrepresentable condition (\ref{assume_StrongIrrepresent}),
	$$
	\gamma_{n,j} \leq 
	(1 - \eta_*)
	\times
	\max_{1 \leq j \leq q} W_{0(1), j}
	$$ 
	for all $j= 1, \cdots, q$. We focus on the more restrictive case where 
	$$
	\eta_* = 1 \Longleftrightarrow \bm{\eta} = \bm{1}_{p_n-q},
	$$  
	which leads to a more meaningful probability bound. Then, $\gamma_{n,j} = 0$ for all $j= 1, \cdots, q$, and 
	\begin{align*}
	&P \left(
		\bigcup_{j=1}^{p_n - q}
		\left\{
			\left\vert \zeta_{n,j} \right\vert
			> \dfrac{\lambda_n }{2 \sqrt{n}} 
			W_{0(2),j}  - \xi
		\right\}
		\Bigg| \mathcal{F}_n
	\right) \\
	&\leq P \left(
		\bigcup_{j=1}^{p_n - q}
		\left\{
			\left\vert \zeta_{n,j} \right\vert 
			> \dfrac{\lambda_n }{2 \sqrt{n}} 
			\left( \min_{1 \leq j \leq p_n-q} W_{0(2),j} \right) 
			- \xi 
		\right\}
		\Bigg| \mathcal{F}_n
	\right) \\
	&\leq P \left(
		\Big\Vert \bm{\zeta}_n \Big\Vert_2
		> \dfrac{\lambda_n }{2 \sqrt{n}} 
		\left( \min_{1 \leq j \leq p_n-q} W_{0(2),j} \right) 
		- \xi 
		\Bigg| \mathcal{F}_n
	\right) \\
	&= P \left(
		\dfrac{1}{n^{c_2 - \frac{1}{2}}}
		\Big(
			\big\Vert \bm{\zeta}_n \big\Vert_2 
			+ \xi
		\Big)
		> \dfrac{\lambda_n }{2 n^{c_2}} 
		\left( \min_{1 \leq j \leq p_n-q} W_{0(2),j} \right) 
		\Bigg| \mathcal{F}_n
	\right)
	\end{align*}
	For the case of exponential random weights
	$$
	F_W(w) = 1 - e^{- \theta_w w}
	$$
	for some $\theta_w > 0$, we immediately have
	$$
	\left(
		\min_{1 \leq j \leq p_n - q} W_{0(2)j}
	\right)
	\sim
	{\rm Exp}
	\left(
		(p_n - q) \theta_w
	\right).
	$$ 
	Then, by part (b) of Lemma \ref{lem_Znw3}, 
	\begin{align*}
	&P \left(
	\dfrac{1}{n^{c_2 - \frac{1}{2}}}
	\Big(
	\big\Vert \bm{\zeta}_n \big\Vert_2 
	+ \xi
	\Big)
	> \dfrac{\lambda_n }{2 n^{c_2}} 
	\left( \min_{1 \leq j \leq p_n-q} W_{0(2),j} \right) 
	\Bigg| \mathcal{F}_n
	\right) \\
	&= P \left(
		W < \theta_w \dfrac{2 n^{c_2}}{\lambda_n}
		\dfrac{p_n-q}{n^{c_2 - \frac{1}{2}}}
		\Big( \big\Vert \bm{\zeta}_n \big\Vert_2 + \xi \Big)
	\Big| \mathcal{F}_n
	\right) 
	\,\, \text{where} \,\, W \sim {\rm Exp} (1)\\
	&= o(1) \quad a.s. \,\, P_D,
	\end{align*}  
	and we have just shown that 
	$$
	P \left[ \left( B_n^w \right)^c \big\vert \mathcal{F}_n \right]
	= o(1) \quad a.s. \,\, P_D
	$$
	for weighting scheme (\ref{eq.Bnw.WeightPenalty2}).\\
	
	Finally, 
	\begin{align*}
	& P\left(
	\bnw (\lambda_n) \stackrel{s}{=} \bm{\beta}_0
	\big\vert \mathcal{F}_n
	\right) \\
	&\geq 1 - \Big\{
	P \left[ 
	\left( A_n^w \right)^c
	\big\vert \mathcal{F}_n 
	\right]
	+ P \left[ 
	\left( B_n^w \right)^c
	\big\vert \mathcal{F}_n
	\right]
	\Big\} \\
	&= 1 - o(1) \quad a.s. \,\, P_D 	
	\end{align*}
	for all three weighting schemes (\ref{eq.NoWeightPenalty}), (\ref{eq.Bnw.WeightPenalty}) and (\ref{eq.Bnw.WeightPenalty2}). 
\end{proof}

\begin{proof}[Proof of Theorem \ref{thm_low_Consistency}]
	From the proof of Proposition \ref{lem_ModelSelect}, 
	\begin{alignat*}{2}
	&(\bnw - \bm{\beta}_0) &&\\
	&= \argmin_{\bm{u}}
	\Bigg\{
	&&\bm{u}' \left( \dfrac{X' D_n X}{n} \right) \bm{u}
	-2 \bm{u}' \left( \dfrac{X' D_n \bm{\epsilon}}{n} \right) 
	+ \dfrac{\bm{\epsilon}' D_n \bm{\epsilon}}{n} \\
	& &&+ \dfrac{\lambda_n}{n} 
	\sum_{j=1}^{p} W_{0,j} | \beta_{0,j} + u_{n,j} |
	\Bigg\} \\
	&:= \argmin_{\bm{u}} &&g_n(\bm{u}). 
	\end{alignat*}
	By Lemmas \ref{lem_X'DnX}, \ref{lem_EpsDnEps} and \ref{lem_X'DnEps}, for $\frac{\lambda_n}{n} \to \lambda_0 \in [0,\infty)$, Slutsky Theorem gives
	$$
	g_n(\bm{u}) \CONV{c.d.} g(\bm{u}) + \mu_W \sigma^2_{\epsilon}
	\quad a.s. \,\, P_D.
	$$
	Note that for weighting schemes (\ref{eq.Bnw.WeightPenalty}) and (\ref{eq.Bnw.WeightPenalty2}), $g(\bm{u})$ is a random function as it contains random weights. Since $g_n(\bm{u})$ is convex and $g(\bm{u})$ has a unique minimum, it follows from \citet{Geyer1996} that 
	$$
	\argmin_{ \bm{u} } g_n( \bm{u} )
	\CONV{c.d.}
	\argmin_{ \bm{u} } \left\{
		g( \bm{u} ) + \mu_W \sigma^2_{\epsilon}
	\right\} 
	= \argmin_{ \bm{u} } g( \bm{u} ) 
	\quad a.s. \,\, P_D.
	$$ 
	For weighting schemes (\ref{eq.NoWeightPenalty}), $g(\bm{u})$ is not a random function. Instead, we note that since $g_n(\bm{u})$ is convex, it follows from pointwise convergence of conditional probability that  
	$$
	\bnw - \bm{\beta}_0 = \mathcal{O}_p(1). 
	$$
	For any compact set $K$, by applying the Convexity Lemma \citep{Pollard1991},
	$$
	\underset{\bm{u} \in K}{\text{sup}} \; 
	\left\vert g_n (\bm{u}) - g (\bm{u}) - \mu_W \sigma^2_{\epsilon} \right\vert
	\CONV{c.p.} 0 \quad a.s. \,\, P_D. 	
	$$
	Therefore,
	$$
	\left( \bnw - \bm{\beta}_0 \right)   
	= 	\argmin_{\bm{u}} g_n(\bm{u})
	\CONV{c.p.} \argmin_{\bm{u}} g(\bm{u})  \quad a.s. \,\, P_D.
	$$ 
	Finally, for all three weighting schemes, if $\lambda_0 = 0$, $\argmin_{\bm{u}} g(\bm{u}) = \bm{0}$, i.e. 
	$$
	\bnw \CONV{c.p.} \bm{\beta}_0  \quad a.s. \,\, P_D.
	$$ 
\end{proof}

\begin{proof}[Proof of Theorem \ref{thm_low_AsympDistn}]
	Let $\bm{e}_n$ be the residual that corresponds to the strongly consistent estimator $\bSC$ of the linear regression model (\ref{eq.LinearModel}), and define 
	$$
	Q_n (\bm{z}) := 
	\left\| 
	D_n^{\frac{1}{2}} (\bm{y} - X \bm{z})
	\right\|_2^2 
	+ \lambda_n \sum_{j=1}^p W_{0,j} |z_j|,
	$$
	which leads to 
	\begin{align*}
	&Q_n \left( \bSC + \dfrac{1}{\sqrt{n}} \bm{u} \right) \\
	&= \left\| 
	D_n^{\frac{1}{2}} 
	\left[
	Y - X \left( \bSC + \dfrac{1}{\sqrt{n}} \bm{u} \right)				
	\right]	
	\right\|_2^2
	+ \lambda_n 
	\sum_{j=1}^p W_{0,j} 
	\left|
		\widehat{\beta}^{\text{SC}}_{n,j} 
		+ \dfrac{1}{\sqrt{n} } u_j
	\right| \\
	&= \left\| 
	D_n^{\frac{1}{2}} 
	\left( 
	\bm{e}_n - \dfrac{1}{\sqrt{n}} X \bm{u}  
	\right)				
	\right\|_2^2
	+ \lambda_n 
	\sum_{j=1}^p W_{0,j} 
	\left|
		\widehat{\beta}^{\text{SC}}_{n,j} 
		+ \dfrac{1}{\sqrt{n} } u_j
	\right| , 
	\end{align*}
	and 
	\begin{align*}
	Q_n \left( \bSC  \right)
	&= \left\| 
	D_n^{\frac{1}{2}} 
	\left( 
	Y - X \bSC 
	\right)				
	\right\|_2^2
	+ \lambda_n 
	\sum_{j=1}^p W_{0,j} 
	\left|
		\widehat{\beta}^{\text{SC}}_{n,j} 
	\right|\\
	&= \left\| 
	D_n^{\frac{1}{2}} \bm{e}_n		
	\right\|_2^2
	+ \lambda_n
	\sum_{j=1}^p W_{0,j}  
	\left|
		\widehat{\beta}^{\text{SC}}_{n,j} 
	\right|.
	\end{align*}
	Now, define
	$$
	V_n( \bm{u} ) := 
	Q_n \left( \bSC  + \dfrac{1}{\sqrt{n}} \bm{u} \right) 
	- Q_n \left( \bSC  \right), 
	$$
	and note that
	$$
	\argmin_{ \bm{u} } V_n( \bm{u} )
	= \argmin_{ \bm{u} }  Q_n \left( \bSC  + \dfrac{1}{\sqrt{n}} \bm{u} \right)
	= \sqrt{n} \left( \bnw - \bSC \right).
	$$
	Notice that $V_n( \bm{u} )$ can be simplified into 
	\begin{align*}
	&\bm{u}' \left( \dfrac{X' D_n X}{n} \right) \bm{u} 
	-2 \bm{u}' \left( \dfrac{X' D_n \bm{e}_n}{\sqrt{n}} \right)\\
	&+ \dfrac{\lambda_n}{ \sqrt{n} } 
	\sum_{j=1}^p W_{0,j}
	\left(
		\left|
			\sqrt{n} \widehat{\beta}^{\text{SC}}_{n,j} + u_j
		\right|
		- \left|
			\sqrt{n} \widehat{\beta}^{\text{SC}}_{n,j}
		\right|
	\right),
	\end{align*}
	where its penalty term can be expanded into 
	\begin{alignat*}{2}
	&\quad \, \dfrac{\lambda_n}{ \sqrt{n} } 
	\sum_{j=1}^p W_{0,j}
	&&\left(
		\left|
			\sqrt{n} \widehat{\beta}^{\text{SC}}_{n,j} + u_j
		\right|
		- \left|
			\sqrt{n} \widehat{\beta}^{\text{SC}}_{n,j}
		\right|
	\right) \\
	&= \dfrac{\lambda_n}{\sqrt{n}} \sum_{j=1}^p W_{0,j}
	&&\Big\{
		\left| \sqrt{n}
			\left[
				\beta_{0,j}
				+ \left(
					\widehat{\beta}_{n,j}^{\text{SC}} - \beta_{0,j} 
				\right)
			\right]  
			+ \mu_j
		\right| \\
	& &&- \left| \sqrt{n}
		\left[
			\beta_{0,j}
			+ \left(
				\widehat{\beta}_{n,j}^{\text{SC}} - \beta_{0,j} 
			\right)
		\right]  
	\right|
	\Big\} \\
	&:= \dfrac{\lambda_n}{\sqrt{n}}
	\sum_{j=1}^p W_{0,j} &&p_n(u_j).
	\end{alignat*}
	For $\beta_{0,j} \neq 0$,
	$$
	\left( \widehat{\beta}_{n,j}^{\text{SC}} - \beta_{0,j} \right) 
	\to  0 \quad a.s. \,\, P_D,
	$$
	and hence $\sqrt{n} \beta_{0,j}$ dominates $u_j$ for large $n$. Thus, it is easy to verify that $p_n(u_j)$ converges to $u_j \text{sgn} \left( \beta_{0,j} \right)$ for all $j \in \{j: \beta_{0,j} \neq 0 \}$. Thus, by Lemmas \ref{lem_X'DnX} and \ref{lem_X'DnResid}, if $q=p$, Slutsky Theorem ensures that
	$$
	V_n(\bm{u}) \CONV{c.d.} V(\bm{u})
	:= \mu_W \bm{u}' C \bm{u} - 2 \bm{u}' \Psi
	+ \lambda_0 \sum_{j=1}^p W_j
	\left[
		u_j \, \text{sgn}(\beta_{0,j}) 
	\right]
	\quad a.s. \,\, P_D, 
	$$
	where $\Psi$ has a $N \left( \bm{0}, \sigma^2_W \sigma^2_{\epsilon} C \right)$ distribution, and
	\begin{itemize}
		\item [(i)] $W_j$ = 1 for all $j$ under weighting scheme (\ref{eq.NoWeightPenalty}),
		\item [(ii)] $W_j = W_0$ for all $j$, $W_0 \sim F_W$ and $W_0 \perp \Psi$ under weighting scheme (\ref{eq.Bnw.WeightPenalty}), 
		\item [(iii)] $W_j \stackrel{iid}{\sim} F_W$ and $W_j \perp \Psi$ for all $j$ under weighting scheme (\ref{eq.Bnw.WeightPenalty2}).
	\end{itemize}
	Since $V_n(\bm{u})$ is convex and $V(\bm{u})$ has a unique minimum, it follows from \citet{Geyer1996} that
	$$
	\sqrt{n} \left( \bnw - \bSC \right) 
	= \argmin_{ \bm{u} } V_n( \bm{u} ) 
	\CONV{c.d.}
	\argmin_{ \bm{u} } V( \bm{u} ) \quad a.s. \,\, P_D
	$$
	when $q = p$. In particular, if $\lambda_0 = 0$, 
	$$
	\argmin_{ \bm{u} } V( \bm{u} )
	= \dfrac{1}{\mu_W} C^{-1} \Psi \sim N 
	\left( \bm{0}, \dfrac{ \sigma^2_W \sigma^2_{\epsilon} }{\mu_W^2} C^{-1} \right) . 
	$$
	However, if $0 < q < p$, then for $j \in \{j: \beta_{0,j} = 0 \}$, $p_n(u_j)$ is back to   
	$$
	\left| 
	\sqrt{n} \widehat{\beta}_{n,j}^{\text{SC}} 
	+ \mu_j
	\right|
	- \left|
	\sqrt{n} \widehat{\beta}_{n,j}^{\text{SC}}
	\right|,
	$$ 
	which depends on the sample path of realized data. This necessitates the Skorokhod argument, thus leading to the penalty term in (\ref{eq.SkorokhodResult}). 
\end{proof}

We need the following lemma to prove Theorem \ref{thm_cond_oracle}:

\begin{lem} \label{lem_asPD_LASLS}
	Consider \citet{Liu&Yu}'s unweighted two-step LASSO+LS estimator $\widehat{\bm{\beta}}_n^{LAS+LS}$, with its corresponding set of selected variables denoted as $\widehat{S}_n$. Adopt assumptions (\ref{assume_boundedX}), (\ref{assume_X'X_11}) and (\ref{assume_StrongIrrepresent}). If there exists $\frac{1}{2} < c_1 <  c_2 < 1$ and $0 \leq c_3 < 2(c_2 - c_1)$ for which $\lambda_n = \mathcal{O} \left( n^{c_2} \right)$ and $p_n = \mathcal{O} \left( n^{c_3} \right)$, then as $n \to \infty$,
	$$
	P\left(
		\widehat{S}_n = S_0	
		\Big| \mathcal{F}_n 
	\right)	
	\to 1
	\quad a.s. \,\, P_D. 
	$$ 
\end{lem}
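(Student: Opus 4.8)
The plan is to reduce the statement to ordinary (unweighted) LASSO sign consistency and then re-run the argument of Proposition~\ref{lem_ModelSelect} and Theorem~\ref{thm_Model_Select}(a) with the weights switched off; this is in effect the $P_D$-almost-sure, unweighted counterpart of Theorem~\ref{thm_Model_Select}(a). First note that the selection set $\widehat{S}_n$ is produced by the \emph{unweighted} LASSO step~(\ref{eq.LassoObj}), hence is a function of $(\bm{y},X)$ only and therefore $\mathcal{F}_n$-measurable; consequently $P(\widehat{S}_n=S_0\mid\mathcal{F}_n)=\mathbbm{1}\{\widehat{S}_n=S_0\}$, and the asserted convergence to $1$ a.s.\ $P_D$ is equivalent to: $\widehat{S}_n=S_0$ for all large $n$, $P_D$-a.s. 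Since $\widehat{S}_n=\{j:\widehat{\beta}^{\text{LAS}}_{n,j}(\lambda_n)\neq 0\}$ and $\{\bLAS(\lambda_n)\stackrel{s}{=}\bm{\beta}_0\}\subseteq\{\widehat{S}_n=S_0\}$, it is enough to show that $\bLAS(\lambda_n)\stackrel{s}{=}\bm{\beta}_0$ for all large $n$, $P_D$-a.s.

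I would then specialize the KKT computation in the proof of Proposition~\ref{lem_ModelSelect} to the trivial weights $D_n=I_n$, $W_{0,j}\equiv 1$ (i.e.\ scheme~(\ref{eq.NoWeightPenalty}) with $W_i\equiv 1$). The same sufficient events $A_n\cap B_n$ for sign recovery appear, but every weighted quantity collapses: $\cnwa=C_{n(11)}$, $\znwa=\tfrac{1}{\sqrt{n}}X_{(1)}'\bm{\epsilon}$, $\znwc=C_{n(21)}C_{n(11)}^{-1}\tfrac{1}{\sqrt{n}}X_{(1)}'\bm{\epsilon}-\tfrac{1}{\sqrt{n}}X_{(2)}'\bm{\epsilon}$, and, decisively, $\widetilde{C}^w_n=C_{n(21)}C_{n(11)}^{-1}-C_{n(21)}C_{n(11)}^{-1}=\bm{0}$. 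The events thus reduce to the $\bm{W}$-deterministic
\[
A_n=\Big\{\big|C_{n(11)}^{-1}\big(\tfrac{1}{\sqrt{n}}X_{(1)}'\bm{\epsilon}-\tfrac{\lambda_n}{2\sqrt{n}}\text{sgn}(\bm{\beta}_{0(1)})\big)\big|\le\sqrt{n}\,|\bm{\beta}_{0(1)}|\Big\},\qquad B_n=\Big\{\big|\znwc\big|\le\tfrac{\lambda_n}{2\sqrt{n}}\bm{\eta}\Big\}
\]
(element-wise), and $\{\widehat{S}_n\neq S_0\}\subseteq A_n^c\cup B_n^c$, so it remains to show each of $A_n^c$, $B_n^c$ occurs only finitely often, $P_D$-a.s.

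For $A_n$ I would use $\|C_{n(11)}^{-1}\|_2=\mathcal{O}(1)$ (Lemma~\ref{lem_Cnw11inv} with $D_n=I_n$, via~(\ref{assume_boundedX}), (\ref{assume_X'X_11}) and fixed $q$), the strong law $\tfrac{1}{n}X_{(1)}'\bm{\epsilon}\to\bm{0}$ a.s.\ $P_D$, and $\lambda_n/n=\mathcal{O}(n^{c_2-1})\to 0$ since $c_2<1$: these force $\tfrac{1}{\sqrt{n}}|C_{n(11)}^{-1}(\tfrac{1}{\sqrt{n}}X_{(1)}'\bm{\epsilon}-\tfrac{\lambda_n}{2\sqrt{n}}\text{sgn}(\bm{\beta}_{0(1)}))|\to\bm{0}$ componentwise a.s.\ $P_D$, whereas $|\beta_{0,j}|>0$ for $j\in S_0$, so $A_n$ holds eventually. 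For $B_n$, observe that with the weights off $\znwc$ is precisely the $W_i\equiv\mu_W=1$ instance of the object handled in Lemma~\ref{lem_Znw3}(a); that lemma's proof (Lemma~\ref{lem_ASconv} applied coordinatewise, together with the rate $c_3<2(c_2-c_1)$) yields $n^{-(c_2-1/2)}\|\znwc\|_2\to 0$ a.s.\ $P_D$. Since $B_n^c\subseteq\{\|\znwc\|_2>\tfrac{\lambda_n}{2\sqrt{n}}\eta_*\}$ with $\eta_*=\min_j\eta_j>0$ (by~(\ref{assume_StrongIrrepresent})), and $\lambda_n n^{-1/2}$ has polynomial order $n^{c_2-1/2}$ while $\|\znwc\|_2=o(n^{c_2-1/2})$ a.s.\ $P_D$, the event $B_n^c$ occurs only finitely often a.s.\ $P_D$. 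Combining, $\widehat{S}_n=S_0$ for all large $n$, $P_D$-a.s.

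I expect the $B_n$ step to be the crux: one must control $\znwc=\tfrac{1}{\sqrt{n}}(X_{(1)}C_{n(11)}^{-1}C_{n(12)}-X_{(2)})'\bm{\epsilon}$ \emph{uniformly} over the growing family of $p_n-q$ inactive coordinates while retaining almost-sure (not merely in-probability) control under $P_D$ — exactly what Lemma~\ref{lem_Znw3}(a) supplies, and what pins down the requirement $c_3<2(c_2-c_1)$. Everything else is routine once one notes that the vanishing of $\widetilde{C}^w_n$ in the unweighted setting removes any need for Lemma~\ref{lem_CtildeW} and, with it, the restrictions $c_3<2c_1-1$ and $c_2<1.5-c_1$ of Theorem~\ref{thm_Model_Select} — which is precisely why the hypotheses of this lemma are weaker.
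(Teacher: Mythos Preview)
Your proposal is correct and follows essentially the same approach as the paper: both reduce to the unweighted KKT events $A_n,B_n$ (you by specializing Proposition~\ref{lem_ModelSelect} with $D_n=I_n$, the paper by citing \citet{BinYu}'s Proposition~1 directly) and then show each complement occurs only finitely often a.s.\ $P_D$ via Lemma~\ref{lem_ASconv} and the rate $c_3<2(c_2-c_1)$. Your explicit remarks that $\widehat{S}_n$ is $\mathcal{F}_n$-measurable (so the conditional probability is an indicator) and that $\widetilde{C}^w_n\equiv\bm{0}$ in the unweighted case (hence Lemma~\ref{lem_CtildeW} and the extra constraints $c_3<2c_1-1$, $c_2<1.5-c_1$ are unnecessary) are nice clarifications that the paper only implicitly uses or relegates to its ``alternative layout'' after the main proof.
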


\begin{proof}
	The first step (i.e. the variable selection step) of obtaining $\widehat{\bm{\beta}}_n^{LAS+LS}$ is effectively the standard LASSO procedure. Thus, by assumption (\ref{assume_StrongIrrepresent}), from the proof of Proposition 1 of \citet{BinYu}, we obtain
	$$
	\left\{
	    \widehat{S}_n = S_0	
	\right\}
	\supseteq
	\left\{
	    A_n \cap B_n
	\right\}
	$$
	and thus
	$$
	P \left(
		\widehat{S}_n = S_0	
		\Big| \mathcal{F}_n 
	\right)	
	\geq 
	P \left(
		A_n \cap B_n
		\big| \mathcal{F}_n 
	\right), 	%\quad a.s. \,\, P_D,
	$$ 
	where
	\begin{align*}
	A_n &\equiv  
	\left\{
		\left\vert 
			C_{n(11)}^{-1}
			\dfrac{X_{(1)}' \bm{\epsilon}}{\sqrt{n}} 
		\right\vert
		\leq  \sqrt{n}
		\left(
			\left\vert  \bm{\beta}_{0(1)} \right\vert
			- \dfrac{\lambda_n}{2n}
			\left\vert
				C_{n(11)}^{-1}
				\text{sgn} \left( \bm{\beta}_{0(1)} \right)
			\right\vert
		\right)
		\,\, \text{{\rm element-wise}}
	\right\} \\
	B_n &\equiv 
	\left\{
		\left\vert 
			\dfrac{1}{\sqrt{n}} 
			\left[
				C_{n(21)} C_{n(11)}^{-1} X_{(1)}' - X_{(2)}'
			\right]
			\bm{\epsilon}
		\right\vert
		\leq  \dfrac{\lambda_n}{2 \sqrt{n}} \bm{\eta} 
		\,\, \text{{\rm element-wise}}
	\right\}.
	\end{align*}
	%\textcolor{red}{Note that the lower bound of the conditional probability $P \left( \widehat{S}_n = S_0 \Big| \mathcal{F}_n \right)$ holds $a.s.$-$P_D$ because the Set arguments deployed in the proof of Proposition 1 of \citet{BinYu} are valid for almost every sample path $\{y_1, y_2, \cdots \}$.} 
	Next, we want to show that
	$$
	P \left(
		A_n^c \big| \mathcal{F}_n
	\right)
	\to 0 \,\,\, a.s. \,\, P_D
	\quad \text{ and } \quad
	P \left(
		B_n^c \big| \mathcal{F}_n
	\right)
	\to 0 \,\,\, a.s. \,\, P_D
	$$
	such that
	$$
	P \left(
	\widehat{S}_n = S_0
	\Big| \mathcal{F}_n 
	\right)	
	\geq 1 - 
	\left[
		P \left(
		A_n^c \big| \mathcal{F}_n
		\right) 
		+ P \left(
		B_n^c \big| \mathcal{F}_n
		\right)
	\right]
	\to 1 \quad a.s. \,\, P_D.
	$$
	First, by assumptions (\ref{assume_boundedX}) and (\ref{assume_X'X_11}), $C_{n(11)}^{-1} = \mathcal{O}(1)$ for all $n$, whereas
	$$
	\dfrac{\lambda_n}{2n}
	C_{n(11)}^{-1}
	\text{sgn} \left( \bm{\beta}_{0(1)} \right)
	\to \bm{0}.
	$$
	By Lemma \ref{lem_ASconv}, for any $\frac{1}{2} < c' < 1$,
	$$
	\dfrac{1}{n^{c'}} X_{(1)}' \bm{\epsilon} 
	\to \bm{0} \quad a.s. \,\, P_D
	\quad \Longrightarrow \quad
	\dfrac{1}{n^{c' - \frac{1}{2} }}
	\left(
		C_{n(11)}^{-1}
		\dfrac{X_{(1)}' \bm{\epsilon}}{\sqrt{n}} 
	\right)
	\to \bm{0} \quad a.s. \,\, P_D.
	$$
	For ease of notation, let 
	$$
	\bm{z} = \left[z_1, \cdots, z_q \right]' := 
	C_{n(11)}^{-1} \dfrac{X_{(1)}' \bm{\epsilon}}{\sqrt{n}}. 
	$$
	Then, for any $\frac{1}{2} < c' < 1$,
	\begin{align*}
	P \left(
		A_n^c \big| \mathcal{F}_n
	\right) 
	&\leq \sum_{j=1}^q 
	P \left(
		|z_j| > \sqrt{n} 
		\left[
			|\beta_{0,j}| + o(1)
		\right] 
		\Big| \mathcal{F}_n
	\right) \\
	&= \sum_{j=1}^q 
	P \left(
		\dfrac{|z_j| }{n^{c' - \frac{1}{2} }}
		> n^{1-c'}
		\Big[
			|\beta_{0,j}| + o(1)
		\Big] 
		\Big| \mathcal{F}_n
	\right) \\
	&\to 0 \quad a.s. \,\, P_D.
	\end{align*}
	Next, using the same notations that we introduced in the proofs of Lemma \ref{lem_Znw3} and Theorem \ref{thm_Model_Select}, let 
	$$
	H = X_{(1)} C_{n(11)}^{-1} C_{n(12)} - X_{(2)},
	$$ 
	and let
	$$
	\eta_* = \min_{1 \leq j \leq p_n - q} \bm{\eta},
	$$
	where assumption (\ref{assume_StrongIrrepresent}) ensures that $0 < \eta_* \leq 1$. Again, due to assumptions (\ref{assume_boundedX}) and (\ref{assume_X'X_11}) and that $q$ is fixed, every element in the matrix $H$ is bounded. Let $h_{ij}$ be the $(i,j)^{th}$ element of $H$. Again, by Lemma \ref{lem_ASconv}, for all $j = 1, \cdots, p_n - q$, 
	$$
	\dfrac{1}{n^{c_1}}
	\sumin h_{ji} \epsilon_i
	\to 0 \quad a.s. \,\, P_D
	$$
	for $\frac{1}{2} < c_1 < 1$. Consequently, we have
	\begin{align*}
	P \left(
	B_n^c \big| \mathcal{F}_n
	\right) 
	&= P \left(
		\bigcup_{j=1}^{p_n-q} 
		\left\{
			\left\vert
				\dfrac{1}{\sqrt{n}} \sumin h_{ji} \epsilon_i
			\right\vert
			> \dfrac{\lambda_n}{2 \sqrt{n}} \eta_j
		\right\}
		\Bigg| \mathcal{F}_n
	\right) \\
	&\leq P 
	\left(
		\max_{1 \leq j \leq p_n-q}
		\left\vert
			\dfrac{1}{\sqrt{n}} \sumin h_{ji} \epsilon_i
		\right\vert
		> \dfrac{\lambda_n}{2 \sqrt{n}} \eta_*
		\Bigg| \mathcal{F}_n
	\right) \\
	&\leq P 
	\left(
		\left\Vert
			\dfrac{1}{\sqrt{n}} H' \bm{\epsilon}
		\right\Vert_2
		> \frac{\lambda_n}{2 \sqrt{n}} \eta_*
		\Bigg| \mathcal{F}_n
	\right) \\
	&= P 
	\left(
		\dfrac{1}{n^{c_2 - \frac{1}{2} }}
		\left\Vert
			\dfrac{1}{\sqrt{n}} H' \bm{\epsilon}
		\right\Vert_2
		> \frac{\lambda_n}{2 n^{c_2}} \eta_*
		\Bigg| \mathcal{F}_n
	\right), 
	\end{align*} 
	where
	\begin{align*}
	\left( 
		\dfrac{1}{n^{c_2 - \frac{1}{2} }}
		\left\Vert
		\dfrac{1}{\sqrt{n}} H' \bm{\epsilon}
		\right\Vert_2
	\right)^2 
	&= \dfrac{1}{ n^{2 c_2 - 1} } 
	\sum_{j=1}^{p_n-q}
	\left(
		\dfrac{1}{\sqrt{n}}
		\sumin h_{ji} \epsilon_i
	\right)^2 \\
	&= \dfrac{ n^{2 c_1 - 1} }{ n^{2 c_2 - 1} }
	\sum_{j=1}^{p_n-q}
	\left(
		\dfrac{1}{n^{c_1}}
		\sumin h_{ji} \epsilon_i
	\right)^2 \\
	&= \mathcal{O} \left(
		\dfrac{1}{ n^{2(c_2 - c_1)} }
	\right) 
	\times
	o \left( n^{c_3} \right) \,\,\, a.s. \,\, P_D \\
	&= o(1) \,\,\, a.s. \,\, P_D
	\end{align*}
	because $c_3 < 2(c_2 - c_1)$ and $\frac{1}{2} < c_1 < c_2 < 1$, whereas 
	$$
	\frac{\lambda_n}{2 n^{c_2}} \eta_*
	= \mathcal{O}(1).
	$$
	Hence $P \left( B_n^c \big| \mathcal{F}_n \right) \to 0$ almost surely under $P_D$ and the result follows.
\end{proof}

Note that the constraints on $c_1$, $c_2$ and $c_3$ in Lemma \ref{lem_asPD_LASLS} cover the more restrictive constraints found in Theorem \ref{thm_Model_Select}. Therefore, the result in Lemma \ref{lem_asPD_LASLS} still holds under the assumptions of Theorem \ref{thm_Model_Select}.

A slightly different layout of the proof for Lemma \ref{lem_asPD_LASLS} would be as follows: using the results in Proposition 1 of \citet{BinYu}, on the probability space $P_D$,
$$
P_D \left(
	\widehat{S}_n = S_0	
\right)	
\geq 
P_D \left(
A_n \cap B_n
\right). 
$$ 
Using the same techniques in the preceding proof, we show that
$$
\lim_{n \to \infty} A_n^c = \emptyset \quad a.s. \,\, P_D
\,\, \Longrightarrow \,\, 
P_D \left(  \lim_{n \to \infty} A_n^c \right) = 0
\,\, \Longrightarrow \,\, 
P_D \left(  A_n^c \,\,\, i.o. \right) = 0,
$$
and
$$
\lim_{n \to \infty} B_n^c = \emptyset \quad a.s. \,\, P_D
\,\, \Longrightarrow \,\, 
P_D \left(  \lim_{n \to \infty} B_n^c \right) = 0
\,\, \Longrightarrow \,\, 
P_D \left(  B_n^c \,\,\, i.o. \right) = 0,
$$
where $i.o.$ stands for ``infinitely often". Then,
\begin{align*}
&P_D \left(
	\left( A_n \cap B_n  \right)^c
	\,\, i.o.
\right)
\leq P_D \left( A_n^c \,\,\, i.o. \right) +
P_D \left( B_n^c \,\,\, i.o. \right) =0 \\
\Longrightarrow \,\,
&P_D \left(
\{ A_n \cap B_n  \}
\,\,\, i.o.
\right) = 1 \\
\Longrightarrow \,\,
&P_D \left(
	\left\{ \widehat{S}_n = S_0 \right\}
	\,\,\, i.o. 
\right)	
\geq 
P_D \left(
	\{ A_n \cap B_n \} 
	\,\,\, i.o. 
\right) = 1 \\
\Longrightarrow \,\,
&P_D \left(
	\lim_{n \to \infty} 
	\widehat{S}_n = S_0
\right) = 1, 
\end{align*}
and thus, on the probability space $P = P_D \times P_W$,
$$
P\left(
\widehat{S}_n = S_0	
\Big| \mathcal{F}_n 
\right)	
\to 1
\quad a.s. \,\, P_D.
$$
We have
$$
\lim_{n \to \infty} A_n^c = \emptyset \quad a.s. \,\, P_D
$$
because for any $\frac{1}{2} < c' < 1$,
$$
\dfrac{1}{n^{c' - \frac{1}{2} }}
\left(
C_{n(11)}^{-1}
\dfrac{X_{(1)}' \bm{\epsilon}}{\sqrt{n}} 
\right)
\to \bm{0} \quad a.s. \,\, P_D
$$
whereas
$$
n^{1 - c'}
\left(
\left\vert  \bm{\beta}_{0(1)} \right\vert
- \dfrac{\lambda_n}{2n}
\left\vert
C_{n(11)}^{-1}
\text{sgn} \left( \bm{\beta}_{0(1)} \right)
\right\vert
\right)
= \mathcal{O} \left( n^{1 - c'} \right).
$$ 
Meanwhile, we establish
$$
\lim_{n \to \infty} B_n^c = \emptyset \quad a.s. \,\, P_D
$$
because 
$$
B_n^c \subseteq 
\left\{
\dfrac{1}{n^{c_2 - \frac{1}{2} }}
\left\Vert
\dfrac{1}{\sqrt{n}} H' \bm{\epsilon}
\right\Vert_2
> \frac{\lambda_n}{2 n^{c_2}} \eta_*	
\right\},
$$
where 
$$
\dfrac{1}{n^{c_2 - \frac{1}{2} }}
\left\Vert
\dfrac{1}{\sqrt{n}} H' \bm{\epsilon}
\right\Vert_2
= o(1) \,\,\, a.s. \,\, P_D 
\quad \text{ but }
\quad 
\frac{\lambda_n}{2 n^{c_2}} \eta_*
= \mathcal{O}(1).
$$
\\
The following version of Sherman–Morrison–Woodbury matrix-inversion identity (e.g., Equation (26) of \citet{MatrixIdentity}) will come in handy later: For any square matrices $A$ and $B$ of conformal sizes where $A$ is invertible, we have
	\begin{align} \label{eq:MatrixIdentity}
	(A+B)^{-1} = A^{-1} - A^{-1} B A^{-1} 
	\left(
		I + B A^{-1}
		\right)^{-1}.
	\end{align}

\begin{proof}[Proof of Theorem \ref{thm_cond_oracle}]
	Since the first-step is in fact equivalent to the one-step procedure, Theorem \ref{thm_Model_Select} immediately gives us 
	$$
	P \left(
		\widehat{S}_n^w = S_0	
	\big| \mathcal{F}_n
	\right) 
	\geq 
	P\left(
		\bnw \stackrel{s}{=} \bm{\beta}_0
	\big| \mathcal{F}_n 
	\right)	
	\to 1
	\quad a.s. \,\, P_D,
	$$ 
	while Lemma \ref{lem_asPD_LASLS} immediately gives us 
	$$
	P \left(
	\widehat{S}_n = S_0	
	\big| \mathcal{F}_n
	\right) 
	\to 1
	\quad a.s. \,\, P_D.
	$$
	Conditional on $\left\{ \widehat{S}_n^w = S_0 \right\}$ and $\left\{ \widehat{S}_n = S_0 \right\}$, since $Y = X_{(1)} \bm{\beta}_{0(1)} + \bm{\epsilon}$,
	\begin{align*}
	&\bnwa - \widehat{\bm{\beta}}_{n(1)}^{LAS+LS} \\
	&= \left(
		X_{(1)}' D_n X_{(1)} 
	\right)^{-1} X_{(1)}' D_n Y
	- \left(
	X_{(1)}' X_{(1)} 
	\right)^{-1} X_{(1)}' Y \\
	&=  \left(
		X_{(1)}' D_n X_{(1)} 
	\right)^{-1} X_{(1)}' D_n \bm{\epsilon}
	- \left(
		X_{(1)}' X_{(1)} 
	\right)^{-1} X_{(1)}' \bm{\epsilon} \\
	&= \left(
		\cnwa
	\right)^{-1}  
	\dfrac{X_{(1)}' (D_n - I_n) \bm{\epsilon}}{n}
	- \left[
		C_{n(11)}^{-1} - \left( \cnwa \right)^{-1}  
	\right] 
	\dfrac{X_{(1)}' \bm{\epsilon}}{n},
	\end{align*}  
	which leads to 
	\begin{align*}
	&\sqrt{n}
	\left(
		\bnwa - \widehat{\bm{\beta}}_{n(1)}^{LAS+LS}
	\right) \\
	&= \left(
	\cnwa
	\right)^{-1}  
	\dfrac{X_{(1)}' (D_n - I_n) \bm{\epsilon}}{\sqrt{n}}
	- \left[
	C_{n(11)}^{-1} - \left( \cnwa \right)^{-1}  
	\right] 
	\dfrac{X_{(1)}' \bm{\epsilon}}{\sqrt{n}}.
	\end{align*}
	Based on the (alternative) proof of Lemma \ref{lem_Cnw11inv}, we have seen that
	$$
	\left( \cnwa \right)^{-1}
	\CONV{a.s.} C_{11}^{-1},
	$$	
	and from the (alternative) proof of Lemma \ref{lem_Znw1}, we could deploy Slutsky's Theorem to obtain
	$$
	\left(
	\cnwa
	\right)^{-1}  
	\dfrac{X_{(1)}' (D_n - I_n) \bm{\epsilon}}{\sqrt{n}}
	\CONV{c.d.}
	N_q \left(
		\bm{0} \,\, , \,\, 
		\sigma^2_W \sigma^2_\epsilon C_{11}^{-1}
	\right)
	\quad a.s. \,\, P_D.
	$$ 
	Meanwhile, we deploy the matrix inversion identity (\ref{eq:MatrixIdentity}) by taking $A = C_{n(11)}$ and
	$$
	B = \dfrac{1}{n} X'_{(1)} (D_n - I_n) X_{(1)}
	$$
	to obtain
	\begin{align*}
	\left( \cnwa \right)^{-1} 
	&= \left[
		C_{n(11)} + \dfrac{1}{n} X'_{(1)} (D_n -  I_n) X_{(1)}
	\right]^{-1} \\
	&= A^{-1} - A^{-1} B A^{-1} 
	\left(
		I_q + B A^{-1}
	\right)^{-1}.
	\end{align*} 
	Then,
	\begin{small}
	\begin{align*}
	&\left[
	C_{n(11)}^{-1} - \left( \cnwa \right)^{-1}
	\right]  \dfrac{X_{(1)}' \bm{\epsilon}}{\sqrt{n}} \\
	&= C_{n(11)}^{-1} 
	\left[ \dfrac{X'_{(1)} (D_n - I_n) X_{(1)}}{n} \right]
	C_{n(11)}^{-1} 
	\left[
		I_q + 
		\left( \dfrac{X'_{(1)} (D_n - I_n) X_{(1)}}{n}  \right)
		C_{n(11)}^{-1} 
	\right]^{-1} \dfrac{X_{(1)}' \bm{\epsilon}}{\sqrt{n}} \\
	&= C_{n(11)}^{-1} 
	\left[ \dfrac{X'_{(1)} (D_n - I_n) X_{(1)}}{n^{1-c}} \right]
	C_{n(11)}^{-1} 
	\left[
		I_q + 
		\left( \dfrac{X'_{(1)} (D_n - I_n) X_{(1)}}{n}  \right)
		C_{n(11)}^{-1} 
	\right]^{-1} \dfrac{X_{(1)}' \bm{\epsilon}}{n^{\frac{1}{2}+c}},
	\end{align*}
	\end{small}
	where Lemma \ref{lem_ASconv} and assumption (\ref{assume_boundedX}) ensure that for any $0 < c < \frac{1}{2}$,
	$$
	\dfrac{1}{n^{1-c}} 
	X'_{(1)} (D_n - I_n) X_{(1)}
	\CONV{a.s.} \bm{0} 
	$$
	and
	$$
	\dfrac{X_{(1)}' \bm{\epsilon}}{n^{\frac{1}{2}+c}}
	\to \bm{0} \quad a.s. \,\, P_D.
	$$
	Since $C_{n(11)}$ is invertible for all $n$, we have
	$$
	C_{n(11)}^{-1} \to C_{11}^{-1},
	$$
	and 
	\begin{align*}
	\left[
		I_q + 
		\left( \dfrac{X'_{(1)} (D_n - I_n) X_{(1)}}{n}  \right)
		C_{n(11)}^{-1} 
	\right]^{-1} 
	&= C_{n(11)} \left(\cnwa \right)^{-1} \\
	&\CONV{a.s.} C_{11} C_{11}^{-1} \\
	&= I_q.
	\end{align*}
	Hence,
	$$
	\left[
	C_{n(11)}^{-1} - \left( \cnwa \right)^{-1}
	\right]  \dfrac{X_{(1)}' \bm{\epsilon}}{\sqrt{n}}
	\CONV{c.p.} \bm{0} \quad a.s. \,\, P_D.
	$$
	Consequently, conditional on $\left\{ \widehat{S}_n^w = S_0 \right\}$ and $\left\{ \widehat{S}_n = S_0 \right\}$, Slutsky's Theorem ensures that 
	$$
	\sqrt{n}
	\left(
	\bnwa - \widehat{\bm{\beta}}_{n(1)}^{LAS+LS}
	\right) 
	\CONV{c.d.}
	N_q \left( \bm{0} \,\, , \,\, \sigma^2_W \sigma^2_\epsilon C_{11}^{-1} \right)
	\quad a.s. \,\, P_D.
	$$
	Finally, for any $t \in \mathbb{R}$,
	\begin{alignat*}{2}
	& &&P \left(
		\sqrt{n} 
		\left(
			\bnwa - \widehat{\bm{\beta}}_{n(1)}^{LAS+LS}
		\right)
		\leq t
		\Big| \mathcal{F}_n
	\right) \\
	&=  &&P \left(
		\sqrt{n} 
		\left(
			\bnwa - \widehat{\bm{\beta}}_{n(1)}^{LAS+LS}
		\right)
		\leq t \, , \, \left\{ \widehat{S}_n^w = S_0, \widehat{S}_n = S_0 \right\}
		\Big| \mathcal{F}_n
		\right) \\
	& &&+
		P \left(
			\sqrt{n} 
			\left(
				\bnwa - \widehat{\bm{\beta}}_{n(1)}^{LAS+LS}
			\right)
			\leq t \, , \, \left\{ \widehat{S}_n^w = S_0, \widehat{S}_n = S_0 \right\}^c
			\Big| \mathcal{F}_n
		\right) \\
	&\leq &&P \left(
		\sqrt{n} 
		\left(
			\bnwa - \widehat{\bm{\beta}}_{n(1)}^{LAS+LS}
		\right)
		\leq t \, , \, \left\{ \widehat{S}_n^w = S_0, \widehat{S}_n = S_0 \right\}
		\Big| \mathcal{F}_n
	\right) \\
	& &&+
	P \left(
		 \left\{\widehat{S}_n^w \neq S_0 \right\}
		 \bigcup
		  \left\{\widehat{S}_n \neq S_0 \right\}
		\Big| \mathcal{F}_n
	\right) \\
	&\leq &&P \left(
		\sqrt{n} 
		\left(
			\bnwa - \widehat{\bm{\beta}}_{n(1)}^{LAS+LS}
		\right)
		\leq t \, , \, \left\{ \widehat{S}_n^w = S_0, \widehat{S}_n = S_0 \right\}
		\Big| \mathcal{F}_n
	\right) \\
	& &&+
	P \left(
		\widehat{S}_n^w \neq S_0 
		\Big| \mathcal{F}_n
	\right) 
	+ P \left(
		\widehat{S}_n \neq S_0 
	\Big| \mathcal{F}_n
	\right) 
	\end{alignat*}
where 
	$$
	P \left( 
		\widehat{S}_n^w \neq S_0 
		\Big| \mathcal{F}_n 
	\right) 
	\to 0  \,\,\, a.s. \,\, P_D
	\quad \text{ and } \quad
	P \left(
		 \widehat{S}_n \neq S_0 
		 \Big| \mathcal{F}_n 
	\right) 
	\to 0 \,\,\, a.s. \,\, P_D,
	$$
and 
	$$
	P \left(
		\sqrt{n} 
		\left(
			\bnwa - \widehat{\bm{\beta}}_{n(1)}^{LAS+LS}
		\right)
		\leq t \, , \, \left\{ \widehat{S}_n^w = S_0, \widehat{S}_n = S_0 \right\}
		\Big| \mathcal{F}_n
	\right) 
	\to P(Z \leq t) 
	$$
almost surely under $P_D$ for $Z \sim N_q \left( \bm{0} \,\, , \,\, \sigma^2_W \sigma^2_\epsilon C_{11}^{-1} \right)$. 
\end{proof}

\begin{proof}[Proof of Theorem \ref{thm_high_Consistency}]
 	Since $Y = X_{(1)} \bm{\beta}_{0(1)} + \bm{\epsilon}$, by conditioning on $\left\{ \widehat{S}_n^w = S_0 \right\}$, we have $\widehat{\bm{\beta}}^w_{n(2)} = \bm{\beta}_{0(2)} = \bm{0}$, and
	\begin{align*}
	\bnwa - \bm{\beta}_{0(1)} 
	&= \left(
	X_{(1)}' D_n X_{(1)} 
	\right)^{-1} X_{(1)}' D_n Y
	- \bm{\beta}_{0(1)} \\
	&= \left(
	X_{(1)}' D_n X_{(1)} 
	\right)^{-1} X_{(1)}' 
	D_n \bm{\epsilon} \\
	&= \left(
		\cnwa
	\right)^{-1}
	\dfrac{X_{(1)}' D_n \bm{\epsilon}}{n} \\
	&\CONV{c.p.} \bm{0} \quad a.s. \,\, P_D
	\end{align*}
by Lemmas \ref{lem_X'DnX} and \ref{lem_Znw1}. Finally, for any $\xi > 0$,
	\begin{align*}
	&P \left(
		\left\Vert
			\bnw - \bm{\beta}_0
		\right\Vert_2
		> \xi
	\Big| \mathcal{F}_n
	\right) \\
	&= P \left(
	\left\Vert
	\bnw - \bm{\beta}_0
	\right\Vert_2
	> \xi \, , \, 
	\widehat{S}_n^w = S_0
	\Big| \mathcal{F}_n
	\right)
	+ P \left(
		\left\Vert
			\bnw - \bm{\beta}_0
		\right\Vert_2
		> \xi \, , \, 
		\widehat{S}_n^w \neq S_0
		\Big| \mathcal{F}_n
	\right) \\
	&\leq P \left(
	\left\Vert
	\bnw - \bm{\beta}_0
	\right\Vert_2
	> \xi \, , \, 
	\widehat{S}_n^w = S_0
	\Big| \mathcal{F}_n
	\right)
	+ P \left(
		\widehat{S}_n^w \neq S_0
	\big| \mathcal{F}_n
	\right) \\
	&\to 0 \quad a.s. \,\, P_D.  
	\end{align*}
\end{proof}

\begin{remark} \label{rmk_centering}
	Consider Theorem \ref{thm_low_AsympDistn} with centering on $\bm{\beta}_0$ 
	$$
	\sqrt{n} \left(
	\bnw - \bm{\beta}_0
	\right).
	$$     
	Using the same technique in the proof of Theorem \ref{thm_low_AsympDistn}, we work with
	$$
	V_n(\bm{u}) := 
	Q_n \left(
		\bm{\beta}_0 
		+ \dfrac{1}{\sqrt{n}} \bm{u}
	\right)
	- Q_n \left(
		\bm{\beta}_0 
	\right)
	$$
	which can be simplified into
	\begin{align*}
	\bm{u}' \left( \dfrac{X' D_n X}{n} \right) \bm{u} 
	-2 \bm{u}' \left( \dfrac{X' D_n \bm{\epsilon}}{\sqrt{n}} \right)
	+ \dfrac{\lambda_n}{ \sqrt{n} } 
	\sum_{j=1}^p W_{0,j}
	\left(
	\left|
	\sqrt{n} \beta_{0,j} + u_j
	\right|
	- \left|
	\sqrt{n} \beta_{0,j}
	\right|
	\right).
	\end{align*}
	Again, assumption \ref{assume_X'X} ensures convergence of the first term, whereas argument for the penalty term in the proof of Theorem \ref{thm_low_AsympDistn} still applies to the third term. However, the second term has
	$$
	\dfrac{X' D_n \bm{\epsilon}}{\sqrt{n}} 
	= \dfrac{1}{\sqrt{n}} X' 
	\left(
		D_n - \mu_W I_n
	\right)
	\bm{\epsilon}
	+ \dfrac{1}{\sqrt{n}} X' \bm{\epsilon},
	$$ 
	where 
	$$
	\dfrac{1}{\sqrt{n}} X' 
	\left(
	D_n - \mu_W I_n
	\right) \bm{\epsilon} = \mathcal{O}_p(1)
	\quad a.s. \,\, P_D,
	$$
	but $(X' \bm{\epsilon})/(\sqrt{n})$ is asymptotically normal under $P_D$ \citep{Knight&Fu}. Thus, conditional on $\mathcal{F}_n$, $(X' D_n \bm{\epsilon})/(\sqrt{n})$ depends on the sample path of realized data $\{y_1, y_2, \cdots \}$, thus causing $\sqrt{n} \left( \bnw - \bm{\beta}_0 \right)$ to be unable to achieve convergence in conditional distribution almost surely under $P_D$. 
\end{remark}

\section*{Acknowledgements}
TLN and MAN were  supported in part by the University of Wisconsin Institute for the Foundations of Data Science through a grant from the US National Science Foundation (NSF1740707). The authors thank the associate editor and an anonymous referee for their valuable feedback and suggestions that lead to a substantially improved manuscript. Insights from Nick Polson and Steve Wright have also served as helpful guideposts in this effort. 

%\textcolor{blue}{[Tun, this is a very impressive manuscript...three cheers!]}
\bibliographystyle{imsart-nameyear}	
\bibliography{WLB_Ref}         

\end{document}